\newcolumntype{L}[1]{>{\raggedright\arraybackslash}p{#1}}
\newcolumntype{C}[1]{>{\centering\arraybackslash}p{#1}}
\newcolumntype{R}[1]{>{\raggedleft\arraybackslash}p{#1}}
\long\def\comment#1{}
\newcommand{\nop}[1]{}
\newtheorem{theorem}{\bf Theorem}[section]
\newtheorem{lemma}{\bf Lemma}[section]
\newtheorem{example}{\bf Example}
\theoremstyle{definition}
\newtheorem{definition}{\bf Definition}
\newcommand{\revision}[1]{\color{black}{#1} \color{black}}
\begin{document}
\title{StructRide: A Framework to Exploit the Structure Information of Shareability Graph in Ridesharing}

\author{
	{Jiexi Zhan{\small$~^{1}$}, Yu Chen{\small$~^{1}$}, Peng Cheng{\small$~^{1}$}, Lei Chen{\small$~^{2, 3}$}, Wangze Ni{\small$~^{4*}$}, Xuemin Lin{\small$~^{5}$}}\\
	\fontsize{10}{10}\itshape
	$~^{1}$East China Normal University, Shanghai, China;
	$~^{2}$HKUST (GZ), Guangzhou, China;
	$~^{3}$HKUST, Hong Kong SAR, China;\\
	$~^{4}$Zhejiang University, Hangzhou, China; $~^{5}$Shanghai Jiaotong University, Shanghai, China\\
	\fontsize{9}{9}\upshape
	\{jxzhan, yu.chen\}@stu.ecnu.edu.cn; pcheng@sei.ecnu.edu.cn; leichen@cse.ust.hk; niwangze@zju.edu.cn; xuemin.lin@gmail.com
	\thanks{*Wangze Ni is also with The State Key Laboratory of Blockchain and Data Security; Hangzhou High-Tech Zone (Binjiang) Institute of Blockchain and Data Security.}
}

\maketitle

\begin{abstract}
	Ridesharing services play an essential role in modern transportation, which significantly reduces traffic congestion and exhaust pollution. 
	In the ridesharing problem, improving the sharing rate between riders can not only save the travel cost of drivers but also utilize vehicle resources more efficiently. 
	The existing online-based and batch-based methods for the ridesharing problem lack the analysis of the sharing relationship among riders, leading to a compromise between efficiency and accuracy.
	In addition, the graph is a powerful tool to analyze the structure information between nodes. 
	Therefore, in this paper, we propose a framework, namely StructRide, to utilize the structure information to improve the results for ridesharing problems. 
	Specifically, we extract the sharing relationships between riders to construct a shareability graph. 
	Then, we define a novel measurement, namely shareability loss, for vehicles to select groups of requests such that the unselected requests still have high probabilities of sharing with other requests. 
	Our SARD algorithm can efficiently solve dynamic ridesharing problems to achieve dramatically improved results. 
	Through extensive experiments, we demonstrate the efficiency and effectiveness of our SARD algorithm on two real datasets. 
	Our SARD can run up to 72.68 times faster and serve up to 50\% more requests than the state-of-the-art algorithms.
\end{abstract}

\begin{IEEEkeywords}
		Ridesharing, Shareability Graph, Algorithms
\end{IEEEkeywords}

\section{Introduction}
\label{sec:introduction}
Recently, ridesharing has become a popular public transportation choice, which greatly reduces energy and relieves traffic pressure. 
In ridesharing, a driver can serve different riders simultaneously once riders can share parts of their trips and there are enough seats for them.
The ridesharing service providers (e.g., Uber~\cite{Uber}, Didi~\cite{DiDi}) are constantly pursuing better service quality, such as higher service rate~\cite{Online-ridesharing,flexible-realtime,luo2020}, lower total travel distance~\cite{T-share,Kinetic_tree,luo2020}, or higher total revenue~\cite{price-aware,online-truthful}.
For ridesharing platforms, \textit{vehicle-request matching} and \textit{route planning} are two critical issues to address.
With a set of vehicles and requests, vehicle-request matching filters out a set of valid candidate requests for each vehicle, while route planning targets on designing a route schedule for a vehicle to pick up and drop off the assigned requests. 
If a vehicle can serve two requests on the same trip, we call that they have a shareable relationship.

The structure information of graphs reveal key properties (e.g., $k$-core~\cite{bestk_core,radiu_kcore}, $k$-truss~\cite{truss_query,truss_decomp}, clique~\cite{spatial_clique,clique_listing}), aiding in specific applications (e.g., community discovery~\cite{community_search,colocated_community,core_cluster}, anomaly detection~\cite{anomalies,anomaly_detection} and influential analysis~\cite{influential_spreader,influential_community}).
In ridesharing problems, the shareable relationships form a \textit{shareability graph} \cite{santi2014quantifying, ma2017demand, chen2022p, lin2016model, pnas, kucharski2020exact}, where each node represents a request and each edge indicates that the connected requests can share part of their trips.
However, to the best of our knowledge, no existing works have targeted improving service quality by utilizing the \textit{structural properties} of the shareability graph in ridesharing problems.

\begin{table}[t!]
	\centering
	{\small
		\caption{ Requests Release Detail.} 
		\label{example_table}
		\begin{tabular}{c|cccc}
			\textbf{request} & \textbf{source} & \textbf{destination} & \makecell{\textbf{release} \\ \textbf{time}} & \textbf{deadline} \\ \hline			\hline
			$r_1$   & a      & d           & 0            & 30       \\ 
			$r_2$   & c      & f           & 1            & 19       \\ 
			$r_3$   & b      & e           & 2            & 21       \\ 
			$r_4$   & c      & g           & 3            & 21        \\ \hline
		\end{tabular}
	}
\end{table}

\begin{figure}[t!]
	\centering
	\subfigure[ Road Network]{
		\scalebox{0.33}[0.33]{\includegraphics{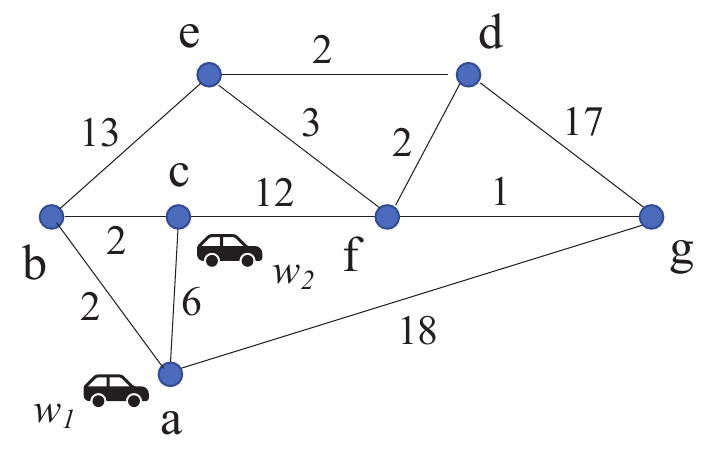}}
		\label{subfig:roadnet_example}    
	}
	\subfigure[Shareability Graph]{
		\scalebox{0.6}[0.6]{\includegraphics{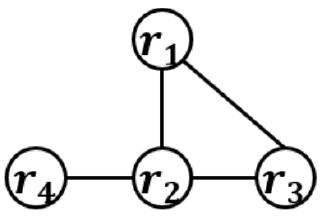}}
		\label{subfig:shareablenet_example}    
	}
	\caption{A Motivation Example }
	\label{fig:motivation_example}
\end{figure}
In real platforms, vehicles and requests often arrive dynamically.
Existing works on ridesharing can be summarized into two modes: the \textit{online mode}~\cite{liu2020,chen2018,cheng2017utility,T-share,Kinetic_tree,cordeau2003tabu} and \textit{batch mode}~\cite{pnas,simple_better,zheng2018order,group_aaai}.
The state-of-the-art operator for the online mode is \textit{insertion}~\cite{cordeau2003tabu,cheng2017utility,insertion,zheng2018order,TongZZCYX18}, which inserts the requests' sources and destinations into proper positions of the vehicle's route without reordering its current schedule, minimizing the increase in the total travel cost.
Batch-based methods~\cite{pnas,simple_better,cargo,ijcai_simulator,zheng2018order} first package the incoming requests into groups, then assign a vehicle with a properly designed serving-schedule for each group of requests.

We illustrate our motivation with Example \ref{exp:problem_instance}.
The existing approaches have their shortcomings. 
Insertion operator is fast but only achieve the local optimal schedule for a request. 
The batch-based methods can achieve better results than online methods for a period of a relatively long time (e.g., one day), but their time complexities are much higher than insertion (e.g., RTV~\cite{pnas} requires costly bipartite matching). 
Can we have a more efficient batch-based approach to achieve better results for a relatively long period?
In this paper, we exploit the graph structures in the shareability graph of requests to achieve better results for ridesharing services.

\begin{example}
	\label{exp:problem_instance}
	Suppose there are two vehicles, $w_1$ and $w_2$, and four requests, $r_1$ to $r_4$, on a ridesharing platform. 
	The sources, destinations, release times, and deadlines of the requests are listed in Table~\ref{example_table}.
	As shown in Figure~\ref{subfig:roadnet_example}, the vehicles are initially located at points a and c on the road network with seven vertices, each with a capacity of three. 
	In the online-based solution~\cite{liu2020,chen2018,cheng2017utility,T-share,Kinetic_tree,cordeau2003tabu}, $r_2$ will be inserted into the existing schedule of $r_1$ of $w_1$ with a scheduled route $\langle a,c,f,d\rangle$ at time 1. 
	This results in $r_4$ not being served as it cannot be inserted into the planned route $\langle c,b,e\rangle$ of $w_2$, which serves $r_3$.
	Although it can provide better scheduling, the existing batch-based methods~\cite{pnas,cargo,ijcai_simulator,zheng2018order,luo2020} list all possible request combinations or vehicle-request pair, and run the time-consuming global matching.
	
	Instead of considering all request combinations, we can heuristically achieve good dispatch result by investigating the shareability graph shown in Figure~\ref{subfig:shareablenet_example}. 
	We construct the graph by connecting the requests that can share a vehicle, so the request with a larger degree has more sharing opportunities. 
	If we give higher priority to grouping the requests with lower degree, $r_4$ will first share $w_2$ with $r_2$, then $r_3$ will share $w_1$ with $r_1$. 
	All the requests can be served with scheduled routes $\langle a,b,e,d\rangle$ and $\langle c,f,g\rangle$ for $w_1$ and $w_2$, respectively.
\end{example}

In this paper, we propose StructRide, a well-tailored batch-based framework to exploit the structure information of requests' shareability graph in ridesharing systems. 
The achieved results are significantly improved (e.g., higher service rates and lower travel distances) under just a slight delay of response time. 
\textit{To the best of our knowledge, we are the first to integrate
	traditional allocation algorithms with graph analyses of the shareability graph in ridesharing problems, which is also the key insight of our StructRide framework.}
In StructRide, we first propose a fast shareability graph builder that efficiently extracts shareable relationships between requests through spatial indexes and an angle-based pruning strategy to only maintain the feasible shareable relationships.
To exploit the structure information of the shareability graph, we devise the \textit{structure-aware ridesharing dispatch} (SARD) algorithm to take the cohesiveness of requests in the shareability graph (evaluated through the graph structures) into consideration and revise request priorities through analyzing its shareability. 
SARD's two-phase ``proposal-acceptance'' strategy avoids invalid group enumerations for any specific vehicle.
For vehicle scheduling, we adjust the insertion order of riders' requests based on shareability, maintaining an optimal schedule with linear insertion complexity. 
Thus, the batch-based method can be more efficient in large-scale ridesharing problems. 
To summarize, we make the following contributions:

\begin{itemize}[leftmargin=*]
	\item We introduce our StructRide framework, which handles incoming requests in batch mode and adjusts the matching priority of requests in each batch in Section~\ref{subsec:systemFramework}.
	\item We utilize an existing graph structure, namely \textit{shareability graph}, for intuitively analyzing the sharing relationships among requests and design an efficient algorithm to generate the shareability graph in each batch in Section~\ref{sec:shareable_network}.
	\item We devise a heuristic algorithm, namely \textit{SARD}, for priority scheduling in each batch under the guidance of the shareability graph with theoretical analyses in Sections~\ref{sec:sard_solution}.
	\item We conduct extensive experiments on two real datasets to show the efficiency and effectiveness of StructRide framework in Section~\ref{sec:experimental}.
\end{itemize}

\section{Problem Definition}
\label{sec:problemDefinition}

We use a graph $\langle V, E \rangle$  to indicate the road network, where $V$ is a set of vertices and $E$ is a set of edges between vertices. 
Each edge $(u,v)$ is associated with a weight $cost(u,v)$ to represent the travel cost from $u$ to $v$. 
In this paper, travel cost means the minimum travel time cost from $u$ to $v$.

\subsection{Definitions}
\label{subsec:definition}

\begin{definition}[Request]
	Let $r_i=\langle s_i,e_i,n_i,t_i,d_i \rangle$ {denote} a ridesharing request with $n_i$ riders from source $s_i$ to destination $e_i$, which is released at time $t_i$ and requires reaching the destination before the delivery deadline $d_i$.
\end{definition}
To guarantee the quality of ridesharing services, the existing studies~\cite{lru_cache,cheng2017utility,zheng2018order} commonly use the detour ratio, $\epsilon$, to avoid reaching the destination of each ride too late. 
In particular, riders join ridesharing services to benefit from a discount on the travel fee, but they need to tolerate some detours. 
In this paper, the delivery deadline $d_i$ of request  $r_i$ is calculated by adding a detour tolerance to the shortest travel time (i.e. $d_i=t_i + \gamma*cost(s_i,e_i)$, $\gamma>1$).

\begin{definition}[Schedule]
	\label{def:vehicle_route}
	Given a vehicle $w_j$ with $m$ allocated requests $R_j=\left\{ r_1,\dots,r_m \right\}$.
	Let $S_j=\langle o_1,\dots,o_{2m} \rangle$ define a schedule for $w_j$, where the location $o_x\in S$ is the source location or destination of a request $r_i\in R_j$.
\end{definition}

We call the location $o_i$ in the schedule a \textit{way-point}.
A route $S_j$ is \textit{feasible} if and only if it meets these four constraints:
\begin{itemize}[leftmargin=*]
	\item \textbf{Coverage Constraint}. For any request $r_i\in R_j$, the source $s_i$ and destination $e_i$ should be included in $S_j$.
	\item \textbf{Order Constraint}.
	For any $r_i\in R_j$, the source location $s_i$ appears before the destination $e_i$ in the route $S_j$;
	\item \textbf{Capacity Constraint}.
	The total number of assigned riders must not exceed the maximum capacity $c_j$ of vehicle $v_j$.
	\item \textbf{Deadline Constraint}.
	For any $o_x \in S_j$, the total driving time before $o_x$ must {satisfy the} inequality~\ref{eq:validCheck}.
\end{itemize}

\vspace{-2ex}
{\scriptsize\begin{equation}
		\label{eq:validCheck}
		\sum_{k=1}^x cost(o_{k-1},o_k)\le ddl(o_x).
\end{equation}}

\noindent$ddl(o_k)$ is the deadline $d_i$ when $o_k$ is the destination of $r_i$, while $ddl(o_k)=d_i - cost(s_i,e_i)$ when $o_k$ is the source of $r_i$.
For simplicity, we denote $cost(s_i,e_i)$ as $cost(r_i)$.

\begin{definition}[Buffer Time~\cite{cheng2017utility,TongZZCYX18}]
	\label{def:bufferTime}
	Given a valid vehicle schedule $S_i=\langle o_1,o_2,$ $\dots,o_{2m} \rangle$, let $buf(o_x)$ be the maximum detour time at the way-point $o_x$ without violating the deadline constraints of its consequent way-points.
	{\scriptsize\begin{equation}
			\label{eq:bufferTime}
			buf(o_x)= \min\left\{buf(o_{x+1}),ddl(o_{x+1})-arrive(o_{x+1})\right\},
	\end{equation}}\vspace{-2ex}
	
	\noindent where $arrive(o_{x+1})$ is the earliest arriving time of way-point $o_{x+1}$ without any detour at previous way-points of $S_i$.
\end{definition}

\begin{definition}[Batched Dynamic Ridesharing Problem (BDRP) \cite{TongZZCYX18}]
	Given a set $R$ of $n$ dynamically arriving requests and a set $W$ of $m$ vehicles, the dynamic ridesharing problem is to plan a feasible schedule for each vehicle $w_i \in W$ to minimize a specific utility function in batch mode.
	
	In BDRP, incoming requests in time period $T$ are handled as a batch $\mathcal{P}$. Requests are partitioned into groups {\small$\mathbb{G}=\{ G_1,G_2,\dots,G_z \}$}, satisfying following conditions: {\small$\forall G_x, G_y \in \mathbb{G},  G_x\cap G_y=\emptyset$} and {\small$\mathcal{P}=\bigcup_{a=1..z}{G_a}$}.
	We then assign request groups to valid vehicles to minimize the utility function.
	In this work, we refer to the unified cost function $UC$ in~\cite{TongZZCYX18} and define the utility function $U$:
	{\small\begin{equation}
			\label{eq:global_utility}
			U(W,\mathcal{P})=\alpha\sum_{w_i\in W}\mu(w_i,G_{w_i})+\sum_{G_i\in G^-}{p_i}
	\end{equation}}
	
	{\small\begin{equation}
			\label{eq:single_utility}
			\mu(w_i,G_{w_i})= \sum_{x=1}^{|S_{w_i}| -1}cost(o_x,o_{x+1}),
	\end{equation}}
	
	\noindent where $S_{w_i}$ is the planned schedule for vehicle $w_i$ and $\mu(w_i,G_{w_i})$ depicts the total travel cost of all schedules.
	$G^-$ is composed of unassigned request groups in each batch, with a penalty $p_i$.
	\label{def:problem_definition}
\end{definition}

By setting special $\alpha$ and $p_i$, the utility function $U$ can support common optimization objectives in ridesharing problems, such as minimizing total travel cost, maximizing service rate, and maximizing total revenue \cite{TongZZCYX18}.
In this paper, we fix $\alpha$ to 1 and define $p_i=p_r\sum_{r\in G_i}cost(r.s,r.e)$ to indicate the profit loss caused by unassigned requests with a parameter $p_r$. 

\noindent\textbf{Discussion of Hardness.}
The existing works~\cite{TongZZCYX18,cheng2017utility,demand_aware,Kinetic_tree} have proved that dynamic ridesharing problem is NP-hard, thus intractable. Moreover, It is proved that there is no polynomial-time algorithm with a constant competitive ratio for dynamic ridesharing problem~\cite{TongZZCYX18}. 
Thus, we use experimental results to show the effectiveness of our approaches.

\revision{We summarize the key notations used in this paper in Table~\ref{tab:notations}.}
\begin{table}[t!]
	\begin{center}
		{\scriptsize 
			\caption{ Symbols and Notations.} 
			\label{tab:notations}
			\begin{tabular}{|c|c|}
				\hline{\bf Symbol} & {\bf Description} \\ \hline 
				$R=\{r_i\}$  & Set of requests.\\
				$W=\{w_j\}$  & Set of vehicles. \\
				$s_i,e_i$ & Source and destination of request $r_i$.\\
				$t_i,d_i$ & Release time and deadline of request $r_i$.\\
				$n_i$ & Number of riders in request $r_i$.\\
				$c_j$ & Capacity of vehicle $w_j$.\\
				$S_j$& Schedule of vehicle $w_j$. \\
				$buf(o_k)$ & Maximum allowed detour time of way-point $o_k$.\\
				$ddl(o_k)$ & Deadline of way-point $o_k$.\\
				$cost(o_k,o_l)$ & Travel cost between way-points $o_k$ and $o_l$.\\
				$p_r$ & Penalty coefficient for unassigned requests.\\
				$\gamma$ & Deadline parameter.\\
				$\mathcal{P}$ & Batch of requests.\\
				$\Delta$ & Batch time.\\
				$\theta$ & Angle between two requests.\\	
				$\delta$ & Angle threshold for pruning.\\
				$SG$ & Sharability graph.\\
				$SLoss(G_i)$ & Shareability loss of group $G_i$.\\
				\hline
			\end{tabular}
		}
	\end{center}
\end{table}

\subsection{An Overview of StructRide Framework}
\label{subsec:systemFramework}
To efficiently and effectively solve BDRP, we propose a batch-based framework, namely StructRide, that exploits the structure information of the shareability graph of requests to improve the performance of service rates and unified costs.
We first introduce the main parts of our StructRide framework, as shown in Figure~\ref{fig:system_framework}.

\noindent\textbf{Index Structure.}
In ridesharing, since vehicles keep moving over time, the used index must update efficiently.
The grid index allows querying and updating moved vehicles in constant time. 
In StructRide, we partition the road network into $n\times n$ square cells.
With the grid index, we can retrieve all available vehicles efficiently through a range query for a given location $p$ and radius $r$ in constant time.

\begin{figure}[t!]
	\centering
	\scalebox{0.4}[0.4]{\includegraphics{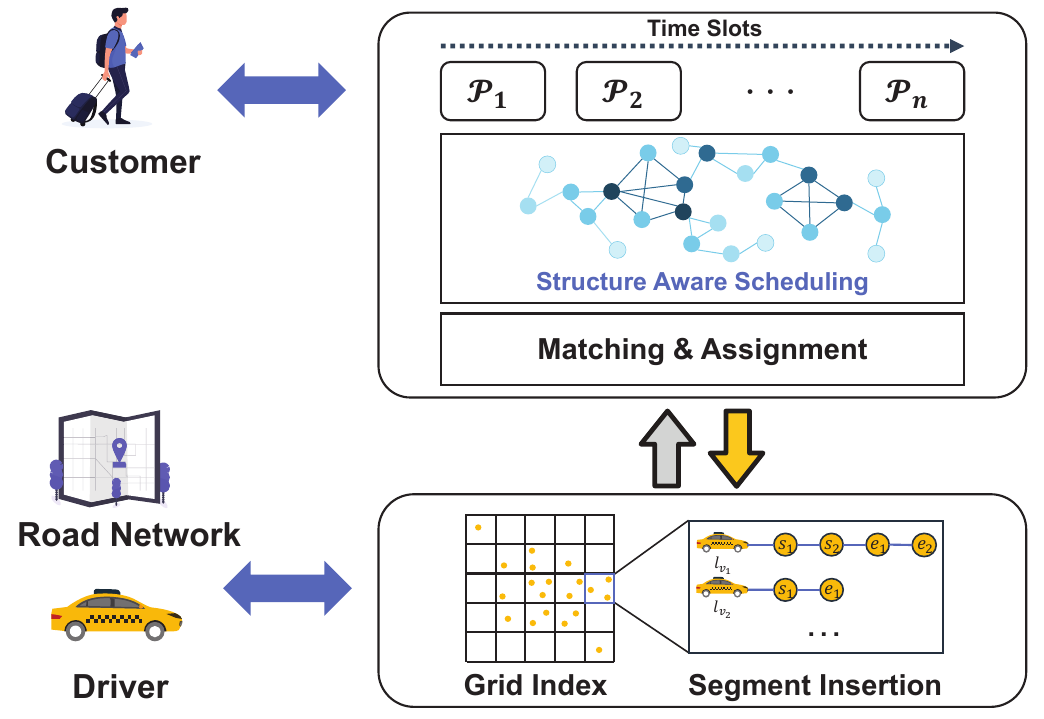}}
	\caption{ Overview of System Framework \textit{StructRide}.}
	\label{fig:system_framework}
\end{figure}

\noindent\textbf{Schedule Maintenance.}
StructRide arranges new request pickup and drop-off locations into available vehicle schedules.
Two wildly solutions are used: \textit{kinetic tree insertion}~\cite{Kinetic_tree} and \textit{linear insertion}~\cite{TongZZCYX18,insertion}.
Kinetic tree insertion maintains all feasible schedules for a vehicle in a tree structure, ensuring optimality.
Linear insertion maintains only the current optimal schedule, inserting new points without \textit{reordering}. 
While this may miss the optimal schedule, Ma \textit{et al.}~\cite{city_scale} found minimal impact on travel cost reduction, with increased time and space usage for reordering.
Thus, we use linear insertion for scheduling.

\noindent\textbf{Structure-Aware Assignment.}
We process the incoming requests into batches by their release timestamps.
For each batch, we utilize the shareability graph in Section~\ref{sec:shareable_network} to analyze the sharing relationship between requests.
In the \textit{Matching and Assignment} phase, we propose a two-phase algorithm SARD in Section~\ref{sec:sard_solution}. 
Requests will propose to candidate vehicles in descending order of additional travel cost (i.e., propose to vehicles needing more additional travel costs first). 
Vehicles then select structure-friendly request groups to accept, using our novel measurement, namely \textit{shareability loss}, to evaluate how merging a group of requests into a supernode affect the sharing probability of the rest nodes in the shareability graph.

\section{Shareability Graph Construction}
\label{sec:shareable_network}
Although the requests in the same batch have similar release times, the sharing probabilities can vary dramatically.
A request with higher shareability has more opportunities to share vehicles with other requests.
Thus, an intuitive motivation for optimizing the grouping process is to give higher priority to a request with lower shareability. 
Existing works \cite{santi2014quantifying, ma2017demand, chen2022p, lin2016model, pnas, kucharski2020exact} utilize an effective structure, the \textit{shareability graph}, to model and manage the sharing relationships between requests. 
In this section, we propose an efficient construction algorithm for it.

\subsection{Observations on Shareability Graph}
\label{subsec:shareable_network}

\begin{definition}[Shareability Graph \cite{santi2014quantifying, ma2017demand, chen2022p, lin2016model, pnas, kucharski2020exact}]
	\label{def:sn}
	Given a set of $m$ requests $R$, let $SG=\langle R,E \rangle$ represent a \textit{shareability graph} with nodes $R$ and edges $E$, where each node in $R$ corresponds to a request.
	An edge $(r_a,r_b)\in E$ signifies that $r_a$ and $r_b$ are shareable, meaning that there is at least one \textit{feasible} schedule for them to be served by a vehicle.
\end{definition}

For instance, in the \textit{shareability graph} depicted in Figure~\ref{subfig:shareablenet_example}, the edge $e=(r_1,r_2)$ between $r_1$ and $r_2$ indicates that there is a feasible route to serve them in a trip (e.g., a schedule of $\langle s_{1}, s_{2}, e_{2}, e_{1}\rangle$).
Therefore, we can easily identify the candidate shareable requests for each request by checking the neighbors in the shareability graph.
Additionally, we have the following two observations:

\textbf{Observation 1.} \textit{The degree of each request in the shareability graph reveals its shareability and importance in the corresponding batch.} 
We can intuitively evaluate the sharing opportunities of requests by comparing their degrees in the shareability graph.
We call the degree of a request as its \textit{shareability}.
A request with a smaller shareability is often more urgent to group with an appropriate shareable request.

\textbf{Observation 2.} \textit{The subgraph induced by the corresponding nodes of potential shareable requests is full connected.}
In the shareability graph, if there exists a valid sharing schedule consisting of $k$ requests, there must exist a fully connected subgraph, also called \textit{k-clique} \cite{maximum_clique,clique-community}, among the corresponding nodes. The reason is that, for a valid sharing schedule, any its  two requests must can share with each other, thus a connecting edge between the corresponding nodes will exist in the shareability graph.
With this observation, we can prune some infeasible combinations in the grouping phase of each batch.

The shareability graph's structure is intuitive, and some similar structures have been implemented in other works. 
These designs focus on different approaches to the vehicle-request matching problem.
Wang et al.~\cite{wang_infocom} developed a tree cover problem to minimize vehicle use for urban demands.
Alonso-Mora et al.~\cite{pnas} utilize linear programming to optimally assign vehicles to shareable customer groups.
Zhang et al.~\cite{zhang_tits} approach passenger matching as a monopartite matching problem, solved by the Irving-Tan algorithm.
\textit{However, existing studies ignored the heuristic insights brought by the local structure of the shareability graph to guide the vehicle-request matching.}
Furthermore, they construct the shareability graphs through the brute force enumeration of requests pairs without careful tailoring.
Therefore, we propose an efficient shareability graph construction method.

\subsection{Angle Pruning Strategy}
\label{subsec:sn_gen}
To build the shareability graph for each batch $\mathcal{P}_k$, the basic idea is to enumerate all pairs of requests $(r_a,r_b)$ in $\mathcal{P}_k$ and check for a valid sharing schedule using a linear insertion method.
The efficiency of this construction algorithm is primarily dominated by the number of shortest path queries needed. 
To avoid enumerating all pairs of requests, we suggest an angle pruning strategy, illustrated in Figure \ref{fig:pruning_example}, for more efficient shareability graph construction. 
Note that, when we prune the candidate requests for a given request $r_a$, to avoid duplicated consideration of schedules, we only consider the schedules with $s_a$ (the source of $r_a$) as the first way-point.

\begin{figure}[t!]
	\centering
	\scalebox{0.55}[0.55]{\includegraphics{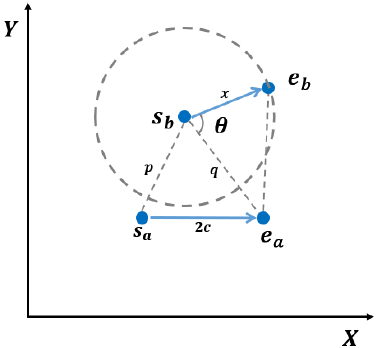}}
	\label{subfig:angel_example}    
	\caption{An Illustration Example for Angle Pruning}
	\label{fig:pruning_example}
\end{figure}

Our angle pruning strategy is based on the travel directions of the requests. 
We notice that the requests with similar travel directions are more likely to share trips. 
For instance, a northward request is unlikely to share a trip with a southward request because the driver would need to turn around and spend considerable time driving in the opposite direction. 
We define a threshold $\delta$ to prune request pairs that have similar sources but divergent directions with Theorem~\ref{theo:directional_filter}. 
Let $\overrightarrow{a b}$ be the vector pointing from $a$ to $b$.

\begin{theorem}
	\label{theo:directional_filter}
	For a request $r_a$ and its candidate sharing request $r_b$, the expected probability of sharing a trip will increase when the angle $\theta$ between $\overrightarrow{s_b e_a}$ and $\overrightarrow{s_b e_b}$ decreases.
\end{theorem}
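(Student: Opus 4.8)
The plan is to reduce the qualitative claim about ``sharing probability'' to a monotonicity statement about the travel cost of the shared schedule, and then to show that this cost is monotonically increasing in $\theta$. Since we fix $s_a$ as the first way-point, the relevant shared schedules have the form $\langle s_a, s_b, \dots \rangle$ in which both riders are on board once the vehicle reaches $s_b$, and the two remaining way-points are the destinations $e_a$ and $e_b$. By the deadline constraint (Inequality~\ref{eq:validCheck}) together with the buffer-time notion (Definition~\ref{def:bufferTime}), such a schedule is feasible precisely when the accumulated travel cost along the route never exceeds the way-point deadlines; intuitively, a shorter delivery portion leaves more slack and is therefore feasible under a wider range of deadlines. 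First I would make this precise by isolating the delivery segment that starts at $s_b$ and covers $\{e_a, e_b\}$, since this is the only part of the route governed by the angle $\theta$ between $\overrightarrow{s_b e_a}$ and $\overrightarrow{s_b e_b}$.

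Next I would quantify the delivery cost geometrically. Treating the way-points in the plane and writing $p = cost(s_b, e_a)$ and $q = cost(s_b, e_b)$ for the two leg lengths, the law of cosines applied to the triangle $\triangle s_b e_a e_b$ gives $cost(e_a, e_b) = \sqrt{p^2 + q^2 - 2pq\cos\theta}$. For either delivery order, $\langle s_b, e_a, e_b \rangle$ or $\langle s_b, e_b, e_a \rangle$, the total delivery length equals one leg plus this inter-destination distance, and both expressions are strictly increasing in $\theta$ on $[0,\pi]$ because $-\cos\theta$ is increasing there. Hence, holding the leg lengths fixed, a smaller angle $\theta$ yields a strictly smaller delivery cost and therefore a strictly smaller total schedule cost.

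Finally I would close the gap between cost and probability by adopting a probabilistic model of the feasibility margin. Regarding the deadline slack (equivalently, the buffer time induced by the detour tolerance $\gamma$ and the requests' release times) as a random variable $B$, the shared trip is feasible exactly when the delivery detour $D(\theta)$ does not exceed $B$, so the sharing probability is $\Pr[B \ge D(\theta)] = 1 - F_B(D(\theta))$, where $F_B$ is the cumulative distribution function of $B$. Because $F_B$ is non-decreasing and $D(\theta)$ is increasing in $\theta$, the composition $F_B(D(\theta))$ is non-decreasing in $\theta$, so the sharing probability is non-increasing in $\theta$; equivalently, it increases as $\theta$ decreases, as claimed.

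I expect the main obstacle to be the two modeling bridges rather than any single calculation. The first is justifying the Euclidean treatment of $cost$ so that the law of cosines applies to road-network travel times, which I would handle by assuming $cost$ is monotone in, or well approximated by, Euclidean distance, or by arguing that the bound holds in expectation over the network metric. The second is formalizing ``expected probability of sharing'' through the distribution of the feasibility margin $B$ while keeping the leg lengths $p$ and $q$ fixed, so that $\theta$ remains the only free variable. Once these assumptions are in place, the geometric monotonicity via the law of cosines is routine.
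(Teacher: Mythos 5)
Your proposal is correct in substance and shares the paper's key geometric insight, but the probabilistic bridge you build is genuinely different from the one the paper uses, so a comparison is worthwhile. Both arguments restrict attention to the two pickup-first schedules $\langle s_a,s_b,e_a,e_b\rangle$ and $\langle s_a,s_b,e_b,e_a\rangle$, treat $cost$ as Euclidean so that the law of cosines expresses $cost(e_a,e_b)$ through the leg lengths and $\cos\theta$, and observe that only the inter-destination leg depends on $\theta$; this is exactly what underlies the paper's deadline conditions (a) and (b). The divergence is in what is held fixed and where the randomness lives. The paper fixes $r_a$ and $s_b$, treats the candidate's trip length $x=cost(s_b,e_b)$ as the free quantity, and shows that the feasible set of $x$ (an interval $[0,g(c)]$ in one case, $[h(c),\infty)$ in the other) widens monotonically as $\theta$ shrinks; the ``expected probability'' is then an integral of these ranges against a log-normal trip-length distribution fitted to the real data, which is what later lets the authors compute concrete numbers such as $\mathbb{E}(\theta\geq\pi/2)\approx 41\%$ and calibrate the threshold $\delta$. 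You instead fix both legs, show the delivery cost $D(\theta)$ is increasing in $\theta$, and place the randomness in a feasibility margin $B$, concluding $\Pr[B\ge D(\theta)]$ is non-increasing in $\theta$. Your route is cleaner and more distribution-agnostic (a stochastic-dominance argument needing nothing about log-normality), but it gives no quantitative handle for choosing $\delta$, which is the practical payoff of the paper's version. One caveat you should make explicit: in this problem the slack is not exogenous randomness---deadlines are deterministic, $d_i=t_i+\gamma\, cost(s_i,e_i)$---so $B$ is a function of the same geometry. The honest reading of your argument is a coupling statement: for every realization of the population randomness (request positions and lengths), feasibility is a monotone non-increasing indicator in $\theta$, hence any marginal probability over the request population inherits the monotonicity; the population distribution is precisely what the paper models directly. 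Finally, your per-way-point deadline concern is benign here, since only way-points after the $e_a$--$e_b$ leg have $\theta$-dependent arrival times, and both you and the paper share the Euclidean-metric caveat, which the paper concedes in its discussion of the pruning strategy.
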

\begin{proof}
	Firstly, we denote the maximum detour time for request $r_a$ as $(\gamma -1)\times cost(s_a, e_a)$.
	For the two possible schedules of $r_a$ and its candidate request $r_b$ ($\langle s_a, s_b, e_a, e_b\rangle$ or $\langle s_a, s_b, e_b, e_a\rangle$), one of the following two conditions needs to be satisfied due to the deadline constraint:
	(a) {\small$p+x+cost(e_a, e_b)\leq 2\gamma c$};
	(b) {\small$p+q+cost(e_a, e_b)\leq \gamma x$}, where $p, q, x$ and $2c$ indicate $cost(s_a, s_b)$, $cost(s_b, e_a)$, $cost(s_b, e_b)$ and $cost(s_a, e_a)$,  respectively. 
	
	When request $r_a$, source $s_b$ and $cost(s_b, e_b)$ are fixed, the total travel cost of the two possible schedules only depends on $cost(e_a, e_b)$.
	Then, $cost(e_a, e_b)$ can be represented through $x, n$ and $\cos(\theta)$.
	In (a), we have {\scriptsize$x\leq\frac{(2\gamma c-p)^2-q^2}{4\gamma c-2p-2q\cos(\theta)}$
		$\leq 1/(\frac{cos^2(\theta/2)}{\gamma c}+\frac{sin^2(\theta/2)}{(\gamma-1)c})$} (noted as $g(c)$), because {\small$p+q\leq 2\gamma c$} and {\small$p-q\leq 2(\gamma -1)c$} (due to the detour tolerance).
	In (b), we have {\scriptsize$\gamma x \geq p+q+\sqrt{(x-q*cos(\theta))^2}$}, which can be rewritten as {\scriptsize$x\geq \frac{2c(1-cos(\theta))}{\gamma-1}$} (marked as $h(c)$) by $p+q\geq 2c$.
	In both cases, the feasible range of $x$ gradually decreases as the angle reduces. In other words, for a given candidate request $r_b$, when the angle $\theta$ between $\overrightarrow{s_b e_a}$ and $\overrightarrow{s_b e_b}$ decreases, the travel cost $x=cost(s_b, e_b)$ is more likely to satisfy the deadline constraint.
\end{proof}

By investigating the real datasets of Chengdu and New York City in Section~\ref{sec:experimental}, we find that the distances of the requests almost follow the log-normal distribution whose probability density function is {\scriptsize$f(x;\mu;\sigma)=\frac{1}{\sqrt{2\pi}x\sigma}e^{-\frac{(\ln x-\mu)^2}{2\sigma^2}}$}, where $\mu$ and $\sigma$ are two parameters of log-normal distribution.
Therefore, the expected probability of $r_a$ sharing with a candidate request $r_b$ at an angle $\theta$ greater than a given threshold $\delta$ can be evaluated as follows:

{\small
	\begin{equation}
		\mathbb{E(\theta\geq\delta)}=\int_0^{+\infty}f(x)\left(\int_0^{g(\frac{x}{2})}f(y)dy+\int_{h(\frac{x}{2})}^{+\infty}f(y)dy\right)dx\notag
	\end{equation}
}

For instance, we fit the probability density function of the requests' distances of Chengdu and NYC datasets with a log-normal distribution. 
The probability expectations $\mathbb{E}(\theta \geq \frac{\pi}{2})$ are $40.98\%$ and $41.38\%$ respectively when $\gamma=1.5$.

\noindent\textbf{Discussion.} Our angle pruning strategy is an approximate pruning method, meaning it might occasionally remove some feasible shareable pairs. 
However, our experiments shows that this false-negative pruning will not harm the final results. 
In fact, it can sometimes even improve the final results, such as by achieving higher serving rates.
We think this is because the angle pruning strategy can help avoid considering some feasible but not very appropriate shareable pairs (i.e., the directions of the two requests diverge too much) in the following grouping and dispatching stage. 
This can somehow give the algorithm for the following stage a higher probability of achieving a better final result.

\setlength{\textfloatsep}{3pt}
\begin{algorithm}[t]
	{\small\DontPrintSemicolon
		\KwIn{The previous sharability graph $SG'$ and its request set $R_p$, a new request batch $\mathcal{P}$, an angle threshold $\delta$}
		\KwOut{The corresponding {shareability graph} $SG$}
		$SG=SG'$ \\
		\For{$r_a \in \mathcal{P}$} {
			$V=V\cup r_a, R_p=R_p\cup r_a$\\
			$C \leftarrow$ candidate requests filtered by spatial indexes, deadline and detour tolerance constraint from $R_p$\\
			\For{$r_b\in C$}{
				\If{$\arccos\left(\frac{\overrightarrow{s_b e_a}\cdot\overrightarrow{s_b e_b}}{|\overrightarrow{s_b e_a}||\overrightarrow{s_b e_b}|}\right)\in\left[-\frac{\delta}{2},\frac{\delta}{2}\right]$}{
					\If{ $r_a$ and $r_b$ are shareable}{
						$E=E\cup(r_a, r_b)$
					}
				}
			}
		}
		\Return{$SG=\langle V,E \rangle,R_p$}\;
		\caption{Dynamic \textit{Shareability Graph} Builder}
		\label{alg:sn_gen}}
\end{algorithm}

\subsection{Dynamic Shareability Graph Builder}
\label{subsec:sg_builder}
The details of the dynamic shareability graph builder algorithm are illustrated in Algorithm~\ref{alg:sn_gen}.

Firstly, we use the previous batch's sharability graph to initialize the current one (line 1).
We try to find new shareable relationships brought by each newly arrived request successively (lines 2-8).
For each new request $r_a$, we first add it to the graph nodes and the current request set (line 3). 
By leveraging spatial indexes (e.g., grid index), deadline constraints, and detour tolerance, we can quickly obtain a candidate request set $C$ with a similar source location to $r_a$ without a shortest path query.
Then, we further filter the requests $r_b\in C$ by the angle pruning rule (lines 5-8).
We construct the vector $\overrightarrow{s, e}$ by the source and destination of the request to represent the distance and direction. 
If the angle $\theta$ of $\overrightarrow{s_b e_a}$ and $\overrightarrow{s_b e_b}$ exceeds the given angle threshold $\delta$, $r_b$ will be pruned (line 6).
We'll add an edge $(r_a, r_b)$ if $r_a$ and $r_b$ are shareable  (lines 7-8).
Finally, we get a {shareability graph} $SG$ with $V$ and $E$ generated in the above steps (line 8).

\textbf{Complexity Analysis.} 
Assume the shortest path query takes $O(q)$ time.
For a request $r_a$, filtering can be done in constant time by searching \textit{Minimum Bounding Rectangle} of the detour tolerated area in grid index. 
In the worst case, the candidate request set size $|C|=|R_p|-1$ in line 3.
Besides, the angle calculation takes constant time, requiring only two insertions to test if $r_a$ and $r_b$ are shareable in $O(k)$.
Thus, the algorithm's final complexity is $O(q\cdot|R_p|\cdot|\mathcal{P}|)$.

\noindent\textbf{Discussion.}
The proposed angle pruning strategy is derived based on Euclidean space. 
In a realistic road network, facilities such as expressways make some excluded solutions actually feasible.
However, our experiments show very few cases are discarded, making the angle pruning strategy acceptable for heuristic pruning of the shareability network.

\section{Structure-Aware  Dispatching}
\label{sec:sard_solution}
With the shareability graph, we can intuitively analyze the shareability of each request and propose the structure-aware ridesharing dispatching (SARD) algorithm.
Specifically, we first discuss two main schedule maintenance methods. 
Then we introduce a bottom-up enumeration strategy for the different combinations of requests.

\subsection{Schedule Maintenance}
\label{subsec:route_maintain}
There are two state-of-art strategies for schedule maintenance in previous work: kinetic tree insertion~\cite{Kinetic_tree} and linear insertion~\cite{TongZZCYX18,insertion}.
The kinetic tree maintains all feasible schedules and checks all available way-points ordering exhaustively to insert a new request.
While the kinetic tree can achieve the exact optimal schedule for the vehicle, it needs to maintain up to $\frac{(2m)!}{2^m}$ schedules in the worst case ($m$ is the number of requests assigned to the vehicle).
In contrast, the schedule obtained by the linear insertion method is optimal only for the current schedule (i.e., local optimal). 
Linear insertion is optimal when the number of requests is 2. 
In the experimental study in Section \ref{sec:experimental}, we find that in NYC and Chengdu datasets, if we use linear insertion to handle requests according to their release time, the probability of achieving a global optimal schedule is up to $89\%$ and $85\%$, when inserting the third and fourth request, respectively.
To improve such a probability with the linear insertion method, we reorder the insertion sequence of the requests based on Observation 1 in Section~\ref{subsec:shareable_network}.
Specifically, we first select two requests with the lowest shareability (i.e., the degree of the node) in the shareability graph and generate an optimal sub-schedule.
Then, we insert the remaining requests into the sub-schedule one by one in ascending order of their shareability.
In this way, we improve the probability of achieving an optimal schedule using the linear insertion method to $91\%$ and $90\%$.

\subsection{Request Grouping}
\label{subsec:request_grouping}
In batch mode, we enumerate all feasible request combinations before the assignment phase.
Listing all combinations requires checking $\sum_{i=1}^{c} {n \choose i}$ groups, where $c$ is vehicle capacity.
After that, we verify the existence of a feasible route to serve these requests by linear insertion~\cite{TongZZCYX18} in $O(k^2)$ per group, where $k$ is the group size.
However, many groups are invalid for vehicles in practice.
To avoid unnecessary enumeration, Zeng \textit{et al.} proposed an index called \textit{additive tree}~\cite{simple_better}, which enumerates valid groups level by level through a tree-based structure. 
In the additive tree, each node represents a valid group of requests, extending its parent node's group by adding one more request.

Although the additive tree helps prune some invalid groups in the enumeration, it still maintains all feasible schedules for each tree node.
However, we will only pick the best one of these schedules for each group at the end.
With the schedule maintenance method proposed in Section~\ref{subsec:route_maintain}, we can heuristically reorder the insertion sequence to get an optimal schedule with higher probability. 
\textit{Note that, we still do not alter existing schedules, and just reorder the sequence to insert the new requests to the existing schedules one by one.}
In building the modified additive tree, we keep only one schedule for each node by inserting a new request to its parent's schedule. 

\begin{algorithm}[t]
	{\small	\DontPrintSemicolon
		\KwIn{A \textit{shareability graph} $SG$ of a batch instance and a set $R$ of $n$ requests, the capacity of vehicles $c$}
		\KwOut{Request groups set $\mathcal{RG}$.}
		initialize $\{{RG}_1, {RG}_2, \cdots, {RG}_c\}$ as empty sets\\
		\ForAll{$r_a\in R$}{
			${RG}_1$.insert($\{r_a\},\langle s_a, e_a\rangle$)
		}
		\For{$l\in[2..c]$}{
			\ForEach{pair $(G_x, G_y)$ in ${RG}_{l-1}$}{
				$G\leftarrow G_x\cup G_y$ \\ 
				\If{$|G|= l$ and $G$ satisfied Lemma~\ref{lemma:prune_rules}}{
					$r_b\leftarrow$ find maximum degree node in $G$ \\
					$S\leftarrow$ insert $r_b$ into the schedule $S'$ of $G \setminus\{r_b\}$ maintained in level ${RG}_{l-1}$ \\
					\If{ $S'$ is valid }{
						${RG}_l$.insert($G, S$)
					}
				}
			}
		}
		\Return{$\mathcal{RG}=\{{RG}_1, {RG}_2, \cdots, {RG}_c\}$}\;
		\caption{Request Grouping Algorithm}
		\label{alg:rg}
	}
\end{algorithm}

Algorithm~\ref{alg:rg} outlines the detailed construction steps.
Initially, $c$ empty sets $\{{RG}_1, {RG}_2, \cdots, {RG}_c\}$ are initialized to store $c$ levels of nodes for the modified additive tree (line 1).
Then, we construct the groups formed by individual request $r_a\in R$ whose schedule consists of its source and destination for level 1 of the modified additive tree (lines 2-3).
For the construction of the feasible groups of the remaining levels ${RG}_l$, we generate each node in level $l$ by traversing and merging pairs of parent nodes' groups ${RG}_{l-1}$ (lines 4-11).
During constructing the nodes in level $l$, only groups $G$ with $l$ requests that satisfy Lemma~\ref{lemma:prune_rules} are considered (line 7).
For each feasible group $G$, we search for the request $r_b\in G$ with the maximum shareability in $SG$ (line 8).
The new schedule $S$ of $G$ is generated by inserting $r_b$ into its parent node's schedule $S'$ using the linear insertion method (line 9).
If the generated schedule is valid for $G$, we store the new group with its maintained schedule $(G, S)$ into ${RG}_l$ (lines 10-11).

\begin{lemma}
	\label{lemma:prune_rules}
	For any valid group $G_x$: (a) $\forall r_a\in G_x$, the group $G_x \setminus\{r\}$ must be also valid; and (b) the nodes of $r_a\in G_x$ forms a clique in the shareability graph.
\end{lemma}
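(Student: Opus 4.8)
The plan is to prove both parts of Lemma~\ref{lemma:prune_rules} directly from the feasibility definitions in Section~\ref{sec:problemDefinition}, showing that validity of a group is downward-closed and that the clique property follows from Observation~2 in Section~\ref{subsec:shareable_network}. I would treat the two parts independently, since (a) is a statement about schedule feasibility under the four constraints, while (b) is essentially a restatement of the pairwise-shareability requirement encoded in the shareability graph.

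For part (a), I would argue by restriction of a feasible schedule. Suppose $G_x$ is valid, so by Definition~\ref{def:problem_definition} there is a feasible schedule $S_{G_x}=\langle o_1,\dots,o_{2|G_x|}\rangle$ satisfying the coverage, order, capacity, and deadline constraints. Fix any $r\in G_x$ and let $S'$ be the subsequence of $S_{G_x}$ obtained by deleting the two way-points $s_r$ and $e_r$. The key step is to verify that $S'$ remains feasible for $G_x\setminus\{r\}$: coverage and order hold trivially since we only removed one request's endpoints and preserved the relative order of the remaining way-points; the capacity constraint can only be relaxed, as removing a rider never increases occupancy at any remaining way-point. The part requiring the most care is the deadline constraint, because deleting $s_r$ and $e_r$ changes the accumulated travel cost $\sum cost(o_{k-1},o_k)$ at every subsequent way-point. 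Here I would invoke the triangle inequality on the road-network travel costs (since $cost$ is a shortest-travel-time metric): splicing out an intermediate way-point can only decrease the prefix sum of travel costs to each surviving way-point, so inequality~\ref{eq:validCheck} continues to hold for every $o_x\in S'$. Hence $S'$ witnesses the validity of $G_x\setminus\{r\}$.

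For part (b), I would appeal directly to Observation~2 and the edge semantics of Definition~\ref{def:sn}. Since $G_x$ is valid, it admits a feasible schedule serving all its requests simultaneously; restricting attention to any two requests $r_a,r_b\in G_x$ and applying part (a) repeatedly (deleting all other requests) yields a feasible two-request schedule for $\{r_a,r_b\}$. By Definition~\ref{def:sn} this means the edge $(r_a,r_b)$ exists in $SG$. As this holds for every pair in $G_x$, the corresponding nodes form a clique, which is exactly the claim.

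I expect the main obstacle to be the deadline-constraint verification in part (a): one must confirm that the prefix travel-cost sums do not increase when a way-point pair is removed, which relies essentially on the metric (triangle-inequality) property of $cost$ rather than on any combinatorial structure. A subtle point worth stating explicitly is that the deadline threshold $ddl(o_x)$ of each surviving way-point is intrinsic to its own request and does not depend on the deleted request, so only the left-hand side of inequality~\ref{eq:validCheck} changes, and it changes monotonically downward. Once this monotonicity is established, both parts close cleanly.
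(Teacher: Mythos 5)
Your proof is correct, but it takes a genuinely different route from the paper. The paper disposes of the lemma in one line by citation: part (a) is delegated to Lemma 2 of the additive-tree paper \cite{simple_better}, and part (b) is attributed to Observation~2 of Section~\ref{subsec:shareable_network} (whose own justification in the text is informal). You instead give a self-contained, first-principles argument: for (a), the splice-out construction on a witnessing feasible schedule, with the crucial observation that the deadline constraint survives because prefix travel-cost sums can only shrink under the triangle inequality of shortest-path costs, while each $ddl(o_x)$ is intrinsic to its own request; for (b), you derive the clique property as a corollary of (a) by repeated deletion down to pairs, then invoke the edge semantics of Definition~\ref{def:sn}. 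This buys two things the paper's proof does not provide: it makes explicit exactly which property of $cost$ the downward-closure of validity depends on (the metric property, which is relevant given the paper's own discussion of road-network versus Euclidean discrepancies), and it turns Observation~2 from an asserted fact into a consequence of (a), so the two conditions are no longer logically independent inputs. What the paper's approach buys is brevity and reuse of an already-published result; yours is what a referee would want if \cite{simple_better} were not at hand. The only caution is contextual rather than mathematical: in Algorithm~\ref{alg:SARD} the validity of a group is evaluated relative to a vehicle's current schedule (line 12), and your restriction argument should be read as deleting only way-points of the new request $r$ while leaving the vehicle's pre-existing commitments in place --- the same triangle-inequality step covers this case without modification, but it is worth stating.
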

\begin{proof}
	Condition (a) is proved by Lemma 2 in~\cite{simple_better} and condition (b) is derived from  Observation 2 in Section \ref{subsec:shareable_network}.
\end{proof}

\begin{example}
	\label{exp:group_tree}
	Consider the example in Figure~\ref{fig:motivation_example}.
	We first initialize the groups composed of a single request with corresponding schedule.
	Next, we merge pairs of 1-size groups to create new schedules for child nodes.
	In the shareability graph $SG$ in Figure~\ref{subfig:shareablenet_example}, since $deg(r_2)>deg(r_3)$, we insert $r_2$ into the schedule of group $\{r_3\}$, forming $\{r_3,r_2\}$.
	The linear insertion method's schedule is optimal for groups of size 2.
	At the same time, we take the generated group $\{r_3,r_2\}$  as a child of group $\{r_3\}$.
	Since we cannot find a feasible schedule for $r_4,r_1$, all groups containing $\{r_1,r_4\}$ are pruned in subsequent steps.
	We then insert $r_2$ into the schedule of group  $(r_1,r_3)$ and get an approximate schedule $\langle a,b,c,f,e,d\rangle$ because $deg(r_2)>deg(r_1)=deg(r_3)$.
	As we cannot find any valid group $G$ which $|G|\geq 4$, we end the group building process and got the final result  in Figure~\ref{fig:additive_tree}.
\end{example}

\begin{figure}[t!]
	\centering
	\scalebox{0.34}[0.34]{\includegraphics{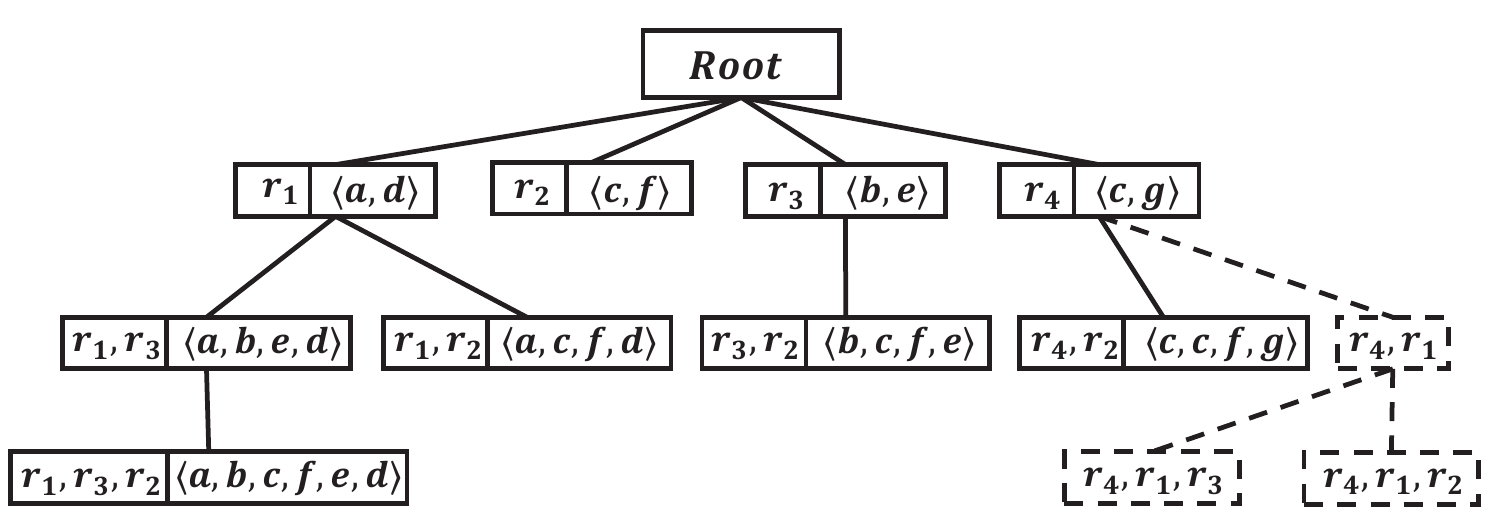}}\vspace{1ex}
	\caption{ Grouping Tree in Example~\ref{exp:group_tree}.}
	\label{fig:additive_tree}
\end{figure}

\noindent\textbf{Complexity Analysis.}
In the worst case, all the requests in a batch can be arbitrarily combined so that there are at most $\sum_{i=1}^{c}{n \choose i}$ nodes in total. 
Since the capacity constraint $c\ll n$ in practice, the number of combination groups can be noted as $O(n^c)$.
We only need $O(c)$ time for each new group to perform linear insert operation for a new schedule. 
We can finish the request grouping in $O(c\cdot n^c)$.

\subsection{SARD Algorithm}
\label{subsec:SARD}
With the grouping algorithm in Section~\ref{subsec:request_grouping}, we propose \textit{SARD}, a two-phase matching algorithm for matching between vehicles and requests in this section. 
The intuition of SARD is to greedily maintain the shareability or connectivity of nodes in the shareability graph, increasing the likelihood of requests sharing with others in the final assignment. 
In SARD, requests initially propose to \textit{worse} vehicles that result in higher travel costs, giving more initiative to vehicles on selecting groups of requests. 
Each vehicle $v_j$ then greedily accepts a group of proposed requests with the smallest \textit{shareability loss}, while rejected requests propose to better vehicles in later rounds. 
The proposal and acceptance phases are iteratively conducted until no request will propose. 
We start by defining shareability loss, then develop theoretical analysis to support our design of SARD.

To evaluate the effect of assigning a request group on the shareability of the remaining graph, we introduce a substitution operation to replace a $k$-clique $G_i$ in shareability graph $SG$ with a supernode $\hat{v}_i$. 
After we substitute a supernode $\hat{v}_i$ for a $k$-clique $G_i$, an edge connects another node $v_j \in SG \setminus G_i$ with $\hat{v}_i$ if and only if $v_j$ connects to every node of $G_i$ in the original graph. 
Then, we define shareability loss as:

\begin{definition}[Shareability Loss]
	Given a shareability graph {\scriptsize$SG=\langle V, E\rangle$}, the shareability loss {\scriptsize$SLoss(G_i)$} of substituting a super-node $\hat{v}_i$ for a $k$-clique group {\scriptsize$G_i\subseteq V$}  is evaluated with the following structure-aware loss function:
	
	{\scriptsize\begin{equation}
			\label{eq:loss_function}
			SLoss(G_i)=\max_{r\in G}\{|\bigcap_{v\in G-\{r\}}{N(v)}|+|N(r)|-|\bigcap_{v\in G}{N(v)}|-1\},
	\end{equation}}
	where {\scriptsize$N(v)$} is the set of neighbor nodes of $v$ in the shareability graph {\scriptsize$SG$}. {\scriptsize$SLoss(G)=deg(r)$} if {\scriptsize$|G|=|\{r\}|=1$}.
\end{definition}

\begin{example}
	With the shareability graph in Figure~\ref{subfig:shareablenet_example}, we illustrate the idea of shareability loss in Figure~\ref{fig:loss_example}, assuming $r_4$ is unavailable.
	In Figure~\ref{subfig:loss_13}, we aim to merge $r_1$ and $r_3$ into a supernode as follows.
	We first remove the edges incident to $r_1$ and $r_3$.
	Since $r_1$, $r_2$ and $r_3$ form a 3-clique in the original graph, there should be a shareable relation between $r_2$ and the supernode $\hat{v}_{13}=\{r_1,r_3\}$.
	Thus, we add a new edge between $r_2$ and $\hat{v}_{13}$.
	Overall, if we substitute $\hat{v}_{13}$ for $\{r_1,r_3\}$, the shareability loss is $SLoss(\{r_1,r_3\}) = 3-1=2$.
	In Figure~\ref{subfig:loss_12}, we try to merge $r_1$ and $r_2$ into a supernode $\hat{v}_{12} = \{r_1,r_2\}$.
	Similarly, four edges incident to $r_1$ and $r_2$ are removed, and a new edge between $\hat{v}_{12}$ and $r_3$ is built.
	Then, the shareability loss is $SLoss(\{r_1,r_2\})=4-1=3$ here.
	Therefore, substituting $\{r_1,r_3\}$ is more structure-friendly than substituting $\{r_1,r_2\}$.
\end{example}

\begin{figure}[t!]
	\centering
	\subfigure[$SLoss(\{1,3\})$]{\scalebox{0.34}[0.34]{\includegraphics{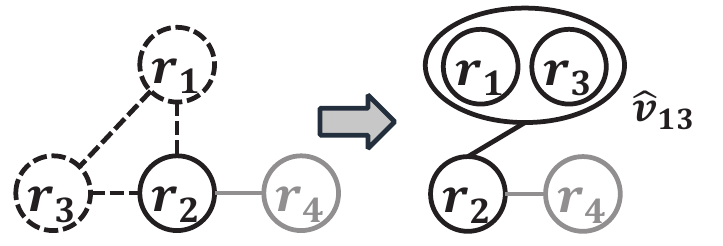}}
		\label{subfig:loss_13}}\hspace{-3ex}
	\subfigure[$SLoss(\{1,2\})$]{\scalebox{0.34}[0.34]{\includegraphics{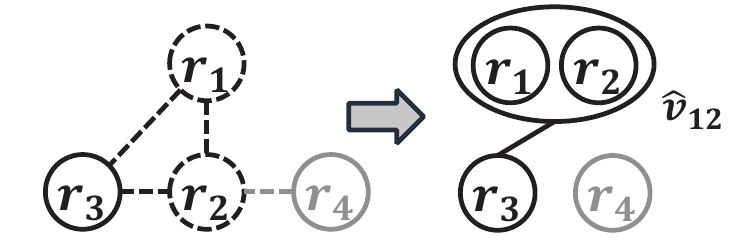}}
		\label{subfig:loss_12}}
	\caption{An Illustration Example of Shareability Loss}
	\label{fig:loss_example}
\end{figure}

Shareability loss can guide a vehicle to select a set of requests to serve from the potential requests (i.e., proposed to the vehicle in the proposal phase). 
Specifically, a vehicle $v_j$ should select a set of requests whose shareability loss is the minimum among all groups of its potential requests. 
In Theorem \ref{theo:eval_guarantee}, we prove that through serving a group of requests with the minimum shareability loss, the remaining requests can have a higher upper bound of sharing rate.

\begin{figure*}[t!]
	\centering
	\subfigure[Transformation of the Candidate Queue]{
		\scalebox{0.4}[0.4]{\includegraphics{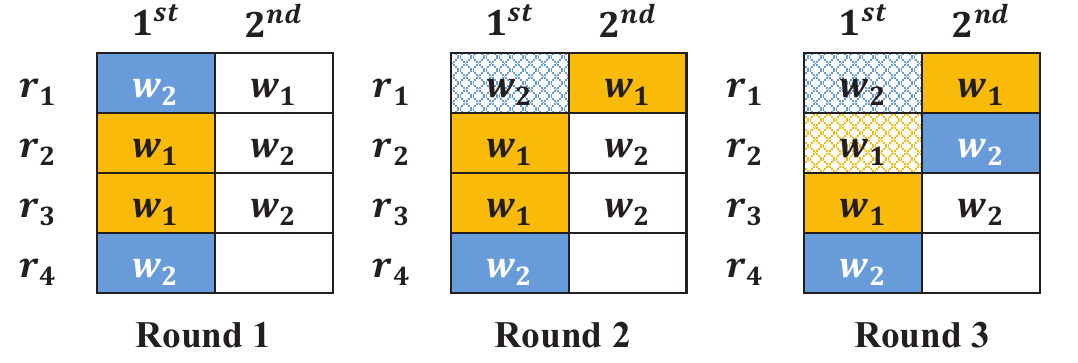}}
		\label{subfig:candidate_queue}}
	\subfigure[Grouping Trees for Vehicles]{
		\scalebox{0.3}[0.3]{\includegraphics{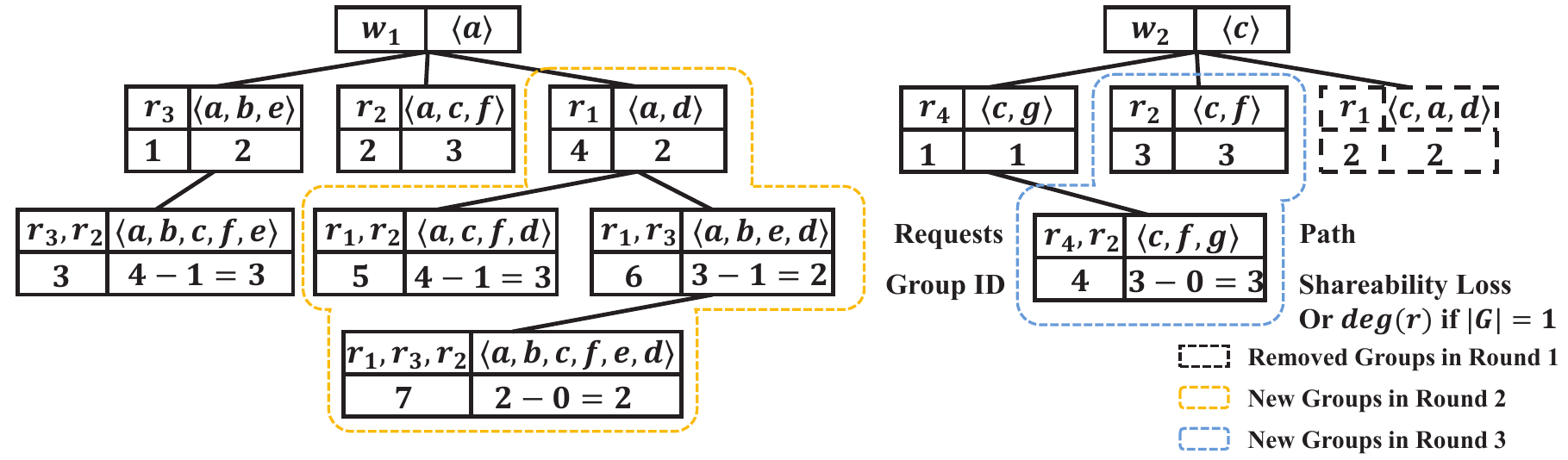}}
		\label{subfig:grouping_trees}}
	\caption{An Example for the \textit{SARD} Algorithm }
\end{figure*} 

\begin{theorem}
	\label{theo:eval_guarantee}
	Substituting a supernode for group $G$ with lower shareability loss will raise the sharing probabilities upper bound for the remaining nodes in the shareability graph.
\end{theorem}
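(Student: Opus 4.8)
The plan is to make the phrase ``sharing probability upper bound'' precise via node degree and then show that the damage a substitution does to that degree is exactly what $SLoss$ measures. Following Observation~1, a node's shareability is monotone in its degree, so I would first fix a monotone degree-to-probability map to turn the claim into a statement about residual degrees. Concretely, in the spirit of Theorem~\ref{theo:directional_filter} I would model each incident edge as an independently feasible pairing with some fixed probability $p\in(0,1)$; then a node of degree $d$ shares with at least one partner with probability $1-(1-p)^{d}$, strictly increasing in $d$, and since the capacity constraint prevents all neighbours from being served at once, $d$ itself is an upper bound on a node's realised sharing. It therefore suffices to prove that the group with smaller $SLoss$ leaves the remaining nodes with larger residual degrees after substitution.

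First I would track every remaining node's degree across the substitution. Writing $V_o=V\setminus G$, I split the edges of $SG$ into those internal to $V_o$, the cross edges between $G$ and $V_o$, and the $\binom{|G|}{2}$ clique edges internal to $G$; only the first two kinds touch a remaining node, and the within-$V_o$ edges are untouched. Hence each $v\in V_o$ loses exactly its edges into $G$ and gains a single edge to $\hat v$ precisely when $v\in\bigcap_{w\in G}N(w)$. Summing over $V_o$, the aggregate degree forfeited by the remaining nodes is $\sum_{r\in G}d^{out}_r-d_{out}(G)$, where $d^{out}_r=|N(r)\cap V_o|$ is $r$'s outside-degree and $d_{out}(G)=|\bigcap_{w\in G}N(w)|$ is the degree of the supernode $\hat v$.

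Next I would rewrite the bracket of \eqref{eq:loss_function} to expose this forfeiture. Because $G$ is a clique, $\bigcap_{v\in G-\{r\}}N(v)$ consists of $r$ itself together with the outside nodes adjacent to every member of $G\setminus\{r\}$; the trailing $-1$ cancels this stray $r$, and subtracting $d_{out}(G)$ collapses the bracket to $f(r):=\Delta_r+deg(r)$, where $\Delta_r$ counts the outside nodes adjacent to all of $G\setminus\{r\}$ but \emph{not} to $r$. Intuitively $f(r)$ is the connectivity retained by leaving $r$ out of the served group---$r$ survives as a free node with its $deg(r)$ links while the smaller supernode keeps $\Delta_r$ extra neighbours---so $SLoss(G)=\max_{r\in G}f(r)$ is the largest residual sharing potential that committing to the whole clique forfeits relative to dropping one member. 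Since $f(r)=\Delta_r+(|G|-1)+d^{out}_r\ge d^{out}_r$ for every $r$, the aggregate loss above is bounded by $|G|\cdot SLoss(G)$; thus a clique with smaller $SLoss$ provably caps the degree forfeited by $V_o$ by a smaller amount, the remaining nodes retain higher degrees, and by the monotone map of the first step their sharing-probability upper bound is raised.

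The main obstacle I anticipate is the mismatch between the $\max$ defining $SLoss$ and the \emph{sum} that governs the true aggregate degree loss. I would reconcile the two on the incremental construction of Algorithm~\ref{alg:rg}: there a clique is assembled one node at a time, so the connectivity destroyed at the step that inserts $r$ is exactly the single term $f(r)$, and the binding (worst) step is precisely $\max_{r\in G}f(r)=SLoss(G)$. This reading makes $SLoss$ the bottleneck of the residual sharing potential rather than a mere upper bound, and combined with the $|G|\le c$ size cap it renders $SLoss$ a faithful monotone surrogate for the residual degree. Pinning down the exact per-edge probability $p$ is secondary, since any fixed $p\in(0,1)$ yields a strictly increasing $1-(1-p)^{d}$, which is all the monotonicity argument needs.
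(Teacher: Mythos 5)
Your algebraic groundwork is correct and, in fact, sharper than anything in the paper's own proof: the rewriting of the bracket in equation~\ref{eq:loss_function} as $\Delta_r+deg(r)$, the bookkeeping of which edges a remaining node loses and gains under substitution, and the inequality (aggregate outside-degree forfeited) $\le |G|\cdot SLoss(G)$ all hold. The genuine gap is the final inference: from ``$G$ with smaller $SLoss$ has a smaller \emph{cap} $|G|\cdot SLoss(G)$ on the forfeited degree'' you conclude that ``the remaining nodes retain higher degrees.'' Comparing upper bounds on two quantities says nothing about the quantities themselves, and because $SLoss$ is a \emph{maximum} while the true forfeiture is a \emph{sum}, the two can order groups oppositely. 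Concretely: let $G_1=\{a_1,a_2,a_3\}$ be a triangle in which each $a_i$ has three private outside neighbours and no two members share any outside neighbour; every bracket equals $1+5-0-1=5$, so $SLoss(G_1)=5$, and substitution deletes nine edges incident to remaining nodes while adding none. Let $G_2=\{b_1,b_2,b_3\}$ be a triangle in which only $b_1$ has outside neighbours (four, all private); the brackets are $(6,2,2)$, so $SLoss(G_2)=6>5$, yet substitution deletes only four edges incident to remaining nodes. Preferring the lower-$SLoss$ group here leaves the remaining graph strictly \emph{sparser}, so $SLoss$ is not the ``faithful monotone surrogate for the residual degree'' that your reconciliation via Algorithm~\ref{alg:rg} asserts; that reconciliation reinterprets what the max measures, but it does not (and, by this example, cannot) convert max-ordering into degree-ordering.

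It is also worth contrasting your route with the paper's, which avoids arguing node-by-node degrees altogether. The paper models sharing of the remaining requests as a capacity-capped clique partition problem (justified by Observation~2), invokes the partition upper bound of equation~\ref{eq:clique_upper_bound} --- an explicit monotone function of the total number of surviving edges $e$ --- and combines it with a largest-clique bound under a power-law degree assumption (equation~\ref{eq:clique_size_bound}) to obtain a capped bound $\theta'_{upper}$; the conclusion ``fewer edges lost $\Rightarrow$ lower $\theta'_{upper}$ $\Rightarrow$ higher sharing bound'' is then a statement purely about bounds evaluated at the actual surviving edge count, never about actual per-node degrees. Your per-edge independent-probability model, by contrast, ignores that sharing is realized through groups at all. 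Note that the paper's proof still leans on identifying low $SLoss$ with low total edge loss, which your own decomposition shows is exact only for $|G|\le 2$ --- so your instinct that the max-versus-sum mismatch is the crux of the whole theorem is right; the proposal just does not close it, and the example above shows the closing step fails as written.
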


\begin{proof}
	Observation 2 in Section \ref{subsec:shareable_network} shows that all valid groups during assignment form a $k$-clique in the shareability graph. 
	Therefore, we model maximizing sharing probability as a clique partition problem~\cite{cliqueCover}, aiming to find the minimum number of cliques to cover the graph, ensuring each node appears in precisely one clique.
	In the worst case, each node forms a $1$-clique, resulting in a sharing probability of 0 (no requests can share).
	Intuitively, the less clique we use to cover the shareability graph, the higher the sharing probability is. 
	Although Observation 2 provides a necessary but not sufficient condition for a clique to be shareable, cliques in the network still significantly boost the potential probability of requests being shareable. 
	Hence, in the following proof, we illustrate the help of the shareability loss on the sharing probability by minimizing the number of cliques.
	Moreover, since vehicle capacity limits group size to $k$, we consider the problem of partitioning the shareability graph into a minimum number of cliques no larger than $k$.
	In~\cite{cliquePartitionBound}, J. Bhasker \textit{et al.} presented an optimal upper bound (shown in equation~\ref{eq:clique_upper_bound}) for the clique partition problem evaluated with the number of nodes $n$ and edges $e$ for the graph.
	
	{\small\begin{equation}
			\label{eq:clique_upper_bound}
			\theta_{upper}=\lfloor \frac{1+\sqrt{4n^2-4n-8e+1}}{2} \rfloor
	\end{equation}}
	
	In addition, we observe that the degrees of most riders in the shareability graph are relatively small, consistent with a power-law distribution.
	For analytical tractability, we assume the node degree in the shareability graph as a random variable $\delta$ following the power-law distribution with exponent $\eta$, which complementary cumulative distribution is shown as $\Pr(\delta \geq x)=ax^{-\eta}$. 
	For a shareability graph $SG$ with $n$ nodes and the degree of its every node follows the power-low distribution with exponent $\eta$, Janson \textit{et al.}~\cite{largestCliqueBound} revealed that the size of the largest clique $\omega(SG(n,\eta))$ in the graph $SG$ is a constant with $\eta>2$.
	However, in the heavy-tail distribution, $\omega(SG(n,\eta))$ grows with $n^{1-\eta/2}$ (see equation~\ref{eq:clique_size_bound}).
	
	{\small 
		\begin{equation}
			\label{eq:clique_size_bound}
			\omega(SG(n,\eta))=\begin{cases}
				(c+o_p(1))n^{1-\eta/2}(\log n)^{-\eta/2}, &\text{if } 0 < \eta < 2 \\
				O_p(1), &\text{if } \eta = 2 \\
				2\text{ or }3\text{ w.h.p}, &\text{if } \eta > 2
			\end{cases}
	\end{equation}}

	For any shareability graph $SG$, we deal with the optimal partition $\{C_1\dots,C_\theta\}$ of general clique partition problem on $SG$ as follows: for each clique $C_i$, we divide $C_i$ into sub-cliques with size no larger than $k$. 
	After that, we obtain an upper bound $\theta'_{upper}$ after scaling of our problem with $n$ and $e$ by formula~\ref{eq:clique_upper_bound} and~\ref{eq:clique_size_bound}.
	
	{\small\begin{equation}
			\theta'_{upper}= \lfloor \frac{1+\sqrt{4n^2-4n-8e+1}}{2} \rfloor \cdot \lceil\frac{\omega(SG(n,\eta))}{k}\rceil
	\end{equation}}
	
	The higher the number of edges $e$ in the shareability graph $SG$ is, the lower the upper bound $\theta'_{upper}$ of the number of clique partitions is.
	In conclusion, substituting a supernode for the group of nodes with the lowest edge loss will keep the most edges left in the shareability graph $SG$, improving the remaining nodes' sharing probabilities in $SG$.
\end{proof}

To merge nodes whose degrees are 2 in the shareability graph at the beginning, we have the following theorem:
\begin{theorem}
	\label{theo:first_merge}
	Given a shareability graph $SG=(V, E)$, merging the node $v$, whose degree is 1, with its neighbor into a $2$-clique will not reduce the sharing rate in $SG$.
\end{theorem}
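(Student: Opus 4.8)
The plan is to argue through the minimum clique-partition framing of the sharing rate that was established in the proof of Theorem~\ref{theo:eval_guarantee}, where a higher sharing rate corresponds to covering $SG$ with fewer cliques of size at most $k$. Let $u$ denote the unique neighbor of the degree-$1$ node $v$. The crucial structural observation is that, since $deg(v)=1$, the node $v$ can belong to only two kinds of cliques in $SG$: the singleton $\{v\}$ or the $2$-clique $\{v,u\}$; it can never appear in a clique together with any node other than $u$. Hence in every valid clique partition, $v$ is either served alone or paired with $u$, and the effect of ``committing in advance to merge $v$ with $u$'' is exactly to restrict attention to partitions that contain $\{v,u\}$ as one block.

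First I would fix an optimal clique partition $\mathcal{C}^*$ of $SG$ (one using the minimum number of cliques of size at most $k$, with $k\ge 2$) and show by an exchange argument that there is an optimal partition in which $v$ and $u$ share a clique. If $v$ and $u$ already lie together in $\mathcal{C}^*$, there is nothing to do. Otherwise, by the observation above, $v$ forms the singleton $\{v\}$, while $u$ lies in some clique $C_u$ with $v\notin C_u$. If $C_u=\{u\}$, then replacing the two singletons $\{v\}$ and $\{u\}$ by the single clique $\{v,u\}$, which is valid since $(v,u)\in E$, strictly decreases the number of cliques and contradicts the optimality of $\mathcal{C}^*$; so this degenerate case cannot occur.

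The remaining case is $|C_u|\ge 2$. Here I would detach $u$ from its clique and pair it with $v$, replacing $\{v\}$ and $C_u$ by the two blocks $\{v,u\}$ and $C_u\setminus\{u\}$. The set $C_u\setminus\{u\}$ is nonempty and is still a clique, because every subset of a clique is a clique, and its size only shrinks, so the capacity bound $k$ is preserved; the new block $\{v,u\}$ has size $2\le k$. This exchange keeps the total number of cliques unchanged, so the resulting partition is again optimal and now places $v$ together with $u$. Consequently, the minimum clique-partition number subject to the constraint ``$\{v,u\}$ is a block'' equals the unconstrained minimum, so merging $v$ with $u$ into a $2$-clique never forces an additional clique and therefore does not reduce the sharing rate.

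I expect the main obstacle to be the exchange step in the case $|C_u|\ge 2$: one must verify carefully that detaching $u$ from its clique leaves a valid, possibly smaller, clique and that the overall clique count is \emph{exactly} preserved rather than increased, since it is this invariance that makes the merge lossless. The degenerate singleton case is dispatched by the optimality contradiction above, and the remaining bookkeeping (the bound $k\ge 2$ and the nonemptiness of $C_u\setminus\{u\}$) is routine.
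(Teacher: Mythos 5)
Your proposal is correct and follows essentially the same route as the paper's own proof: both fix an optimal clique partition (with blocks of size at most $k$), note that the degree-$1$ node can only lie in a singleton or in a $2$-clique with its unique neighbor, and then perform the same exchange that detaches the neighbor from its block and pairs it with the degree-$1$ node without increasing the partition's cost. Your write-up is somewhat more careful than the paper's---explicitly ruling out the two-singletons case by an optimality contradiction and verifying that the exchange preserves the clique count exactly---but the underlying argument is identical.
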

\begin{proof}
	Suppose the optimal group partitions of size no larger than $k$ on $SG$ are $\{G_1,\dots,G_n\}$, with $v_x$ in $G_a$ and $v_y$ in $G_b$.
	The clique size $|G_a|\leq 2$ because $deg(v_x)=1$.
	If $v_x$ and $v_y$ are in different partitions ($a\neq b$), then $|G_a|=1$ because $v_x$ is not connected to any node except $v_y$. 
	Thus, removing $v_y$ from $G_b$ and merging it into $G_a$ won't increase isolated groups of size no more than $1$.
	Otherwise, $v_x$ and $v_y$ are in the same partition ($a = b$) and form a $2$-clique, consistent with the theorem.
	Thus, the operation in the theorem maintains the sharing rate of nodes on $SG$. This completes the proof.
\end{proof}

\begin{algorithm}[t!]
	{\small
		\DontPrintSemicolon
		\KwIn{A set $R$ of $n$ requests with a batching time period $T$ and a set $W$ of $m$ vehicles}
		\KwOut{A vehicle set $W$ with updated schedules.}
		$R_p\leftarrow$ $\emptyset$\Comment{\textit{\scriptsize for unmatched and unexpired requests}}\; 
		\ForEach{batch $\mathcal{P}$ within time period $T$}{
			$SG,R_p\leftarrow$ build \textit{shareability graph} with Algorithm~\ref{alg:sn_gen} \;
			\ForEach{${r_a}\in R_p$}{
				$\mathcal{Q}_{r_a}\leftarrow$ initialize a priority queue by $\Delta$ utility\;
				insert candidate vehicles into $\mathcal{Q}_{r_a}$ \;
			}
			\While{$\exists r_a\in R_p, |\mathcal{Q}_{r_a}|\neq0$}{
				\ForEach(\Comment{\textit{\scriptsize Proposal Phase}}){${r_a}\in R_p$ \textbf{and} $|\mathcal{Q}_{r_a}|\neq0$}{
					$w_x\leftarrow\mathcal{Q}_{r_a}$.pop() the worst vehicle\;
					$R_{w_x}$.insert(${r_a}$)\;
				}
				\ForEach(\Comment{\textit{\scriptsize Acceptance Phase}}){${w_x}\in W$}{
					$G_{w_x}\leftarrow$ grouping $R_{w_x}$ by Algorithm~\ref{alg:rg} \;
					$G_{w_x}^*\leftarrow$ select a group with minimum shareability loss from $G_{w_x}$ \;
					push ${w_x}.ac\setminus G_{w_x}^*$ back to $R_p$ for next proposal \;
					\ForEach{$r_b\in G_{w_x}^*$}{
						remove $r_b$ from $R_{w_x}$ and add $r_b$  to ${w_x}.ac$\;
					}
				}
			}
			remove expired requests from $R_p$ and SG\;
			\Return{$W$}\;
		}
		\caption{\textit{SARD}}
		\label{alg:SARD}
	}
\end{algorithm}

With Theorems \ref{theo:eval_guarantee} and \ref{theo:first_merge}, we design our SARD, whose pseudo code is shown in Algorithm~\ref{alg:SARD}.
Firstly, we maintain a current working set $R_p$ (line 1) containing all available requests until the current timestamps.
When each batch arrives, we insert the requests in this batch into $R_p$ (line 3) and process them along with the available (unmatched and unexpired) requests in the previous round (lines 2-17).
Before starting the two-phase processing, we retrieve all candidate vehicles for each request $r_a$ and store them in a priority queue $\mathcal{Q}_{r_a}$ in descending order of the increased utility cost for serving $r_a$ (lines 4-6).
In the proposal phase (lines 8-10), the discarded requests in the previous round will propose to their current worst vehicle (i.e., the vehicle with the highest increase in travel cost for that request) (line 10).
After that, we enumerate the feasible groups of requests received by each vehicle $w_x$ by Algorithm~\ref{alg:sn_gen}. 
We evaluate the validation of group nodes based on $w_x$'s current schedule (line 12). 
Specifically, we replace the single request schedule in line 3 of Algorithm~\ref{alg:sn_gen} with the schedule generated by inserting the request into the worker's current schedule. 
This helps filter out the groups that vehicle $w_x$ cannot serve. 
With Theorem~\ref{theo:eval_guarantee}, we prioritize groups with minimal shareability loss and select such groups for vehicle $w_x$ (line 13). 
Then, we put the discarded requests back to the working pool $R_p$ (line 14), where $w_x.ac$ denotes the currently accepted requests of $w_x$.
At the end of each acceptance phase, we assign the selected group to the vehicle and put the rejected requests back into the working set $R_p$ for future proposals (lines 15-16).
We repeat the proposal and acceptance phase until no requests remain. 
Finally, we remove expired requests from $R_p$ and $SG$ that cannot be completed due to exceeding the maximum waiting time at the end of each batch.

\begin{example}
	Let's examine the batch of requests in Example~\ref{exp:problem_instance}. 
	Suppose two idle vehicles, $w_1$ and $w_2$, are located at $a$ and $c$. 
	We first create a candidate vehicle queue for $r_1\sim r_4$. 
	For example, the travel cost for $w_1$ and $w_2$ to serve $r_1$ is $cost(r_1)$ and $cost(r_1)+cost(c,a)$, respectively.
	Thus, the priority order in $r_1$'s candidate queue is $\langle w_2,w_1\rangle$.
	We can compute the candidate queue for the remaining requests, as shown in Figure~\ref{subfig:candidate_queue}.
	In the first round, $r_2$ and $r_3$ are added to $w_1$'s candidate pool, while the remaining requests go to $w_2$'s pool. 
	Then, $w_1$ and $w_2$ enumerate all the groups by Algorithm~\ref{alg:rg} with the requests in their candidate pool (see Figure~\ref{subfig:grouping_trees}, excluding the colored dashed area).
	For $w_1$, we take $\{r_3,r_2\}$ as the temporal assignment as it's the only group with at least 2 members.
	Since there are no groups with at least 2 members in the first round of $w_2$, we prefer the group $\{r_4\}$, which has a lower degree by default as it leads to less shareability loss.
	Since $r_1$ is not accepted by $w_2$, it will propose to $w_1$ according to its candidate list in the second round. 
	At this time, $w_1$'s grouping tree will update to the groups shown in the Figure~\ref{subfig:grouping_trees}.
	Despite the same loss for $G_6$ and $G_7$, $w_1$ updates its best group to $\{r_1, r_3\}$ over $\{r_1, r_3, r_2\}$ due to a higher sharing ratio in $G_6${\scriptsize$\frac{cost(P)}{\sum_{r \in SG} cost(r)}$}.
	Therefore, $r_2$ will be discarded and propose in the third round, and $w_2$ will take $\{r_4,r_2\}$ by Theorem~\ref{theo:first_merge}.
	Finally, $w_1$ and $w_2$ are assigned $\{r_1,r_3\}$ and $\{r_2,r_4\}$, respectively.
\end{example}

\textbf{Complexity Analysis.}
We need to perform an insertion operation in {\small$O(c)$} time to obtain the increased travel cost.
Thus, the time cost of constructing candidate queues for a batch of size $n$ is {\small$O(c\cdot n^2)$}. 
In the proposal phase, every request will propose to every car at most one time in the worst case. 
Thus, we need to build a grouping tree of the whole batch for each car in {\small$O(m\cdot n^c)$}. 
The time complexity of \textit{SARD} is {\small$O(c\cdot n^2+m\cdot n^c)$} in total.

\section{Experimental Study}
\label{sec:experimental}

\subsection{Experimental Settings}
\noindent\textbf{Data Sets.} The road networks of \textit{CHD} and \textit{NYC} are retrieved as directed weighted graphs. 
The nodes in the graph are intersections, and the edges' weight represents the required travel time on average. 
There are 214,440 nodes and 466,330 edges in CHD road network (112,572 nodes and 300,425 edges in NYC road network).
All algorithms utilize the hub labeling algorithm~\cite{hub_labeling} with LRU cache~\cite{lru_cache} for shortest path queries on the road network.

We conduct experiments on \revision{three} real datasets.
The first dataset consists of requests collected by Didi in Chengdu, China (noted as \textit{CHD}), published via its GAIA program~\cite{GAIA}.
The second one is the yellow and green taxi in New York, USA (noted as \textit{NYC})~\cite{nycdataset}, which has been used as benchmarks in ridesharing studies~\cite{TongZZCYX18,demand_aware}.
These datasets include each request's release time, source and destination coordinates, and the number of passengers.
\revision{The third dataset is from a delivery service in Shanghai, China (referred to as \textit{Cainiao}).
	Given space constraints, we put the content related to the third dataset in Appendix B.}
We focus on the days with the highest number of requests: November 18, 2016, for CHD (259K requests) and April 09, 2016, for NYC (250K requests). 
For requests with multiple riders, they are examined during the group enumeration in Algorithm~\ref{alg:rg}.
We set the deadline for request $r_i$ as {\small$d_i = t_i + \gamma \cdot cost(r_i)$}, a commonly used configuration in many existing studies~\cite{lru_cache,cheng2017utility,demand_aware}.
We set the maximum waiting time threshold for requests to 5 minutes\revision{ follows previous work~\cite{santi2014quantifying},} which means {\small$w_ i=\min(5min, d_i-cost(r_i)-t_i)$}.
\revision{In reality, most requests are processed within 5 minutes. 
	In the \textit{NYC} dataset, 97.69\% of requests are processed within 3 minutes, and 99.36\% of requests are processed within 5 minutes.}
Figure~\ref{fig:distribution_analysis} shows request source and destination distribution.
Sources and destinations are mapped to road network nodes in advance.
The initial positions of the vehicles are chosen randomly from the road network.

Table~\ref{tab:settings} displays the parameter settings, with default values in bold.
For simplicity, we set all vehicles to have the same capacity. 
	The case of vehicles with different capacities is discussed in Appendix C and the effect of the Angle pruning strategy is discussed in Appendix D.

\begin{table}[t!]
	\begin{center}
		{\scriptsize 
			\caption{ Experimental Settings.} 
			\label{tab:settings}
			\begin{tabular}{l|l}
				{\bf \qquad \qquad \quad Parameters} & {\bf \qquad \qquad \qquad Values} \\ \hline \hline
				the number, $n$, of requests  & 10K, 50K, 100K, 150K, 200K, \textbf{250K}\\
				the number, $m$, of vehicles  & 1K, 2K, 3K, 4K, \textbf{5K} \\
				the capacity of vehicles $c$ & 2, 3, \textbf{4}, 5, 6\\
				the deadline parameter $\gamma$ & 1.2, 1.3, \textbf{1.5}, 1.8, 2.0 \\
				the penalty coefficient $p_r$ & 2, 5, \textbf{10}, 20, 30 \\
				the batching time $\Delta$ (s) & 1, 3, \textbf{5}, 7, 9 \\
				\hline
			\end{tabular}
		}
	\end{center}
\end{table}

\noindent\textbf{Approaches and Measurements.}
We compare our {SARD} with the following algorithms:
\begin{itemize}[leftmargin=*]
	\item \textit{\textbf{pruneGDP~\cite{TongZZCYX18}.}} 
	The request is added to the vehicle's current schedule in order and the vehicle with the smallest increase in distance to serve is chosen.
	\item \textit{\textbf{RTV~\cite{pnas}.}}
	It is in batch mode and obtains an allocation scheme with the least increased distance and most serving vehicles by linear programming.
	\item \textit{\textbf{GAS~\cite{simple_better}.}} 
	It enumerates vehicle-request combinations in random batch order, selecting the best group per vehicle based on total request length as profit.
	\revision{\item \textit{\textbf{DARM+DPRS~\cite{Haliem2020ADM}.}} 
		It uses deep reinforcement learning to allocate idle vehicles to anticipated high-demand areas.
		\item \textit{\textbf{TicketAssign+~\cite{10570212}.}} 
		It proposes ticket locking on individual vehicles to enhance multiple-worker parallelism.}
\end{itemize}

We present the following metrics of all algorithms. 
\begin{itemize}[leftmargin=*]
	\item \textbf{Unified Cost.}
	The objective function of BDRP problem in Equation \ref{eq:global_utility}.
	\item \textbf{Service Rate.}
	The service rate is the ratio of assigned requests to the total number of requests.
	\item \textbf{Running Time.} We report the total running time of assigning all the tasks.
	The response time for a single request can be calculated by dividing the running time with the number of total requests (e.g., about 2 ms for each request in SARD).
\end{itemize}

\begin{figure}[t!]\centering
	\subfigcapskip=-1ex
	\subfigure[ Requests distribution (\textit{CHD})]{
		\includegraphics[scale=0.15]{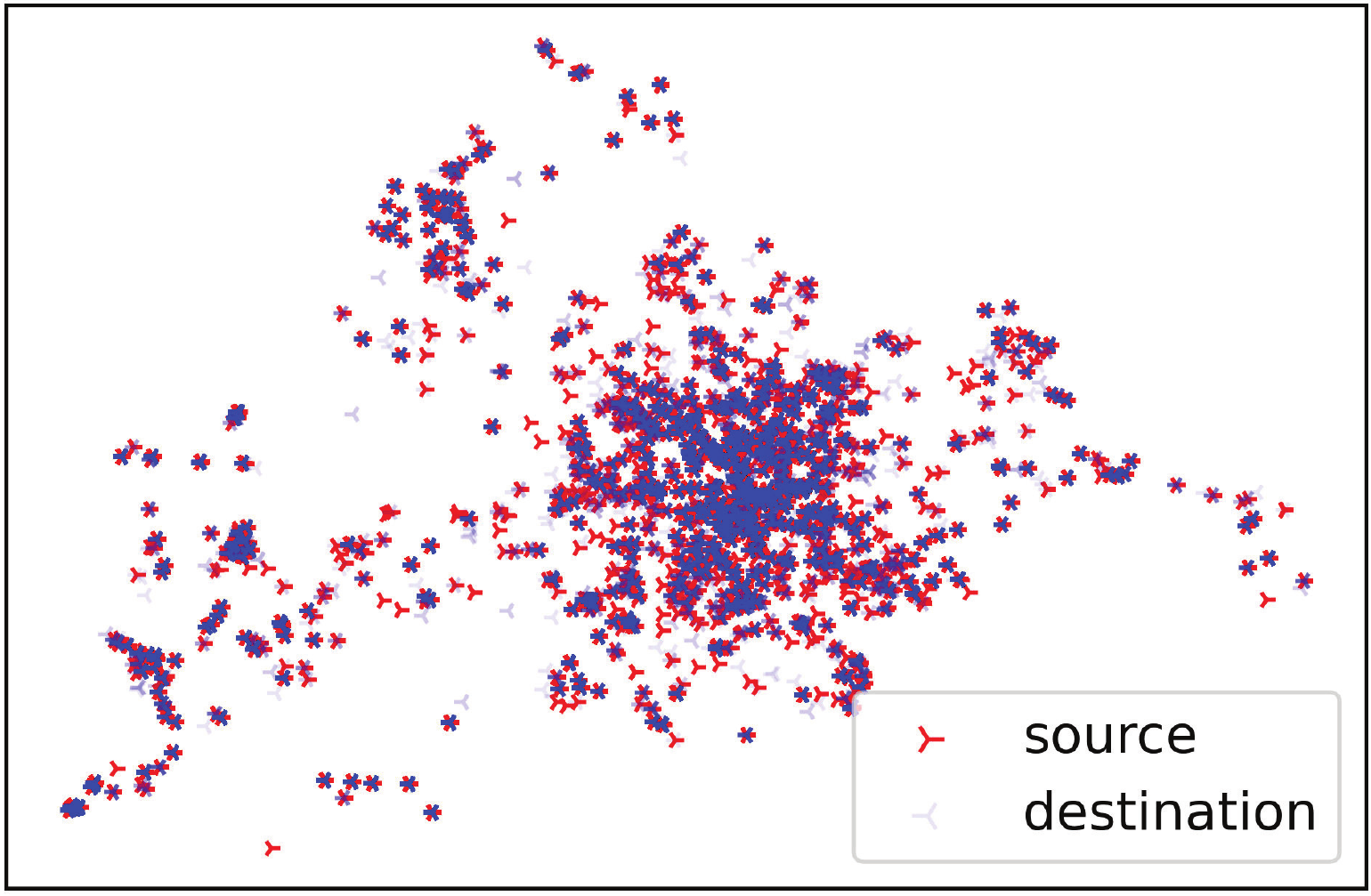}
		\label{subfig:distr_cd}}
	\subfigure[{ Requests distribution (\textit{NYC})}]{
		\includegraphics[scale=0.15]{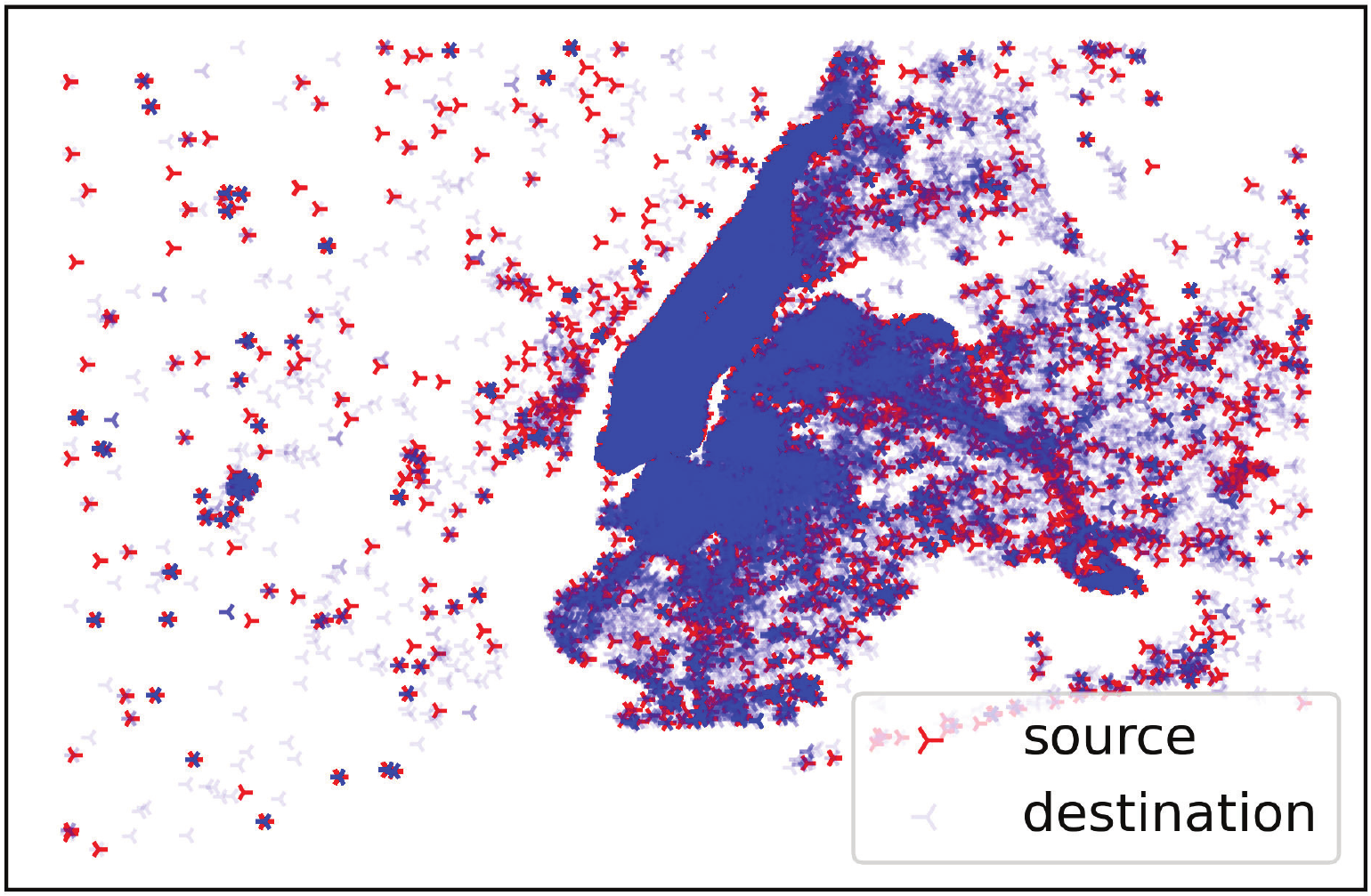}
		\label{subfig:distr_nyc}}
	\caption{The distribution of datasets.}
	\label{fig:distribution_analysis}
\end{figure}
\revision{We also provide memory consumption performance analysis of all algorithms in Appendix A~\cite{report}.}

We conduct experiments on a single server with Intel Xeon 4210R@2.40GHz CPU and 128 GB memory.
All the algorithms are implemented in C++ and optimized by -O3 parameter.
Besides, we use \textit{glpk}~\cite{GLPK} to solve the linear programming in the RTV algorithm, and all algorithms are implemented in a single thread \revision{except for {TicketAssign+}}.

\begin{figure*}[t!]\centering
	\begin{tabularx}{\textwidth}{XX}
		\begin{minipage}[c]{.5\textwidth}
			\subfigure{
				\scalebox{0.45}[0.45]{\includegraphics{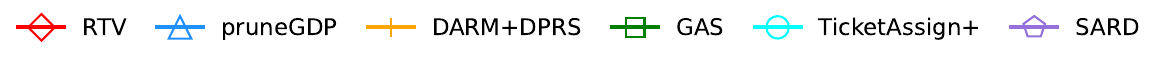}}}\hfill
			\addtocounter{subfigure}{-1}\\[-3ex]
			\subfigure[][{\scriptsize Unified Cost (\textit{CHD})}]{
				\raisebox{-1ex}{\scalebox{0.19}[0.17]{\includegraphics{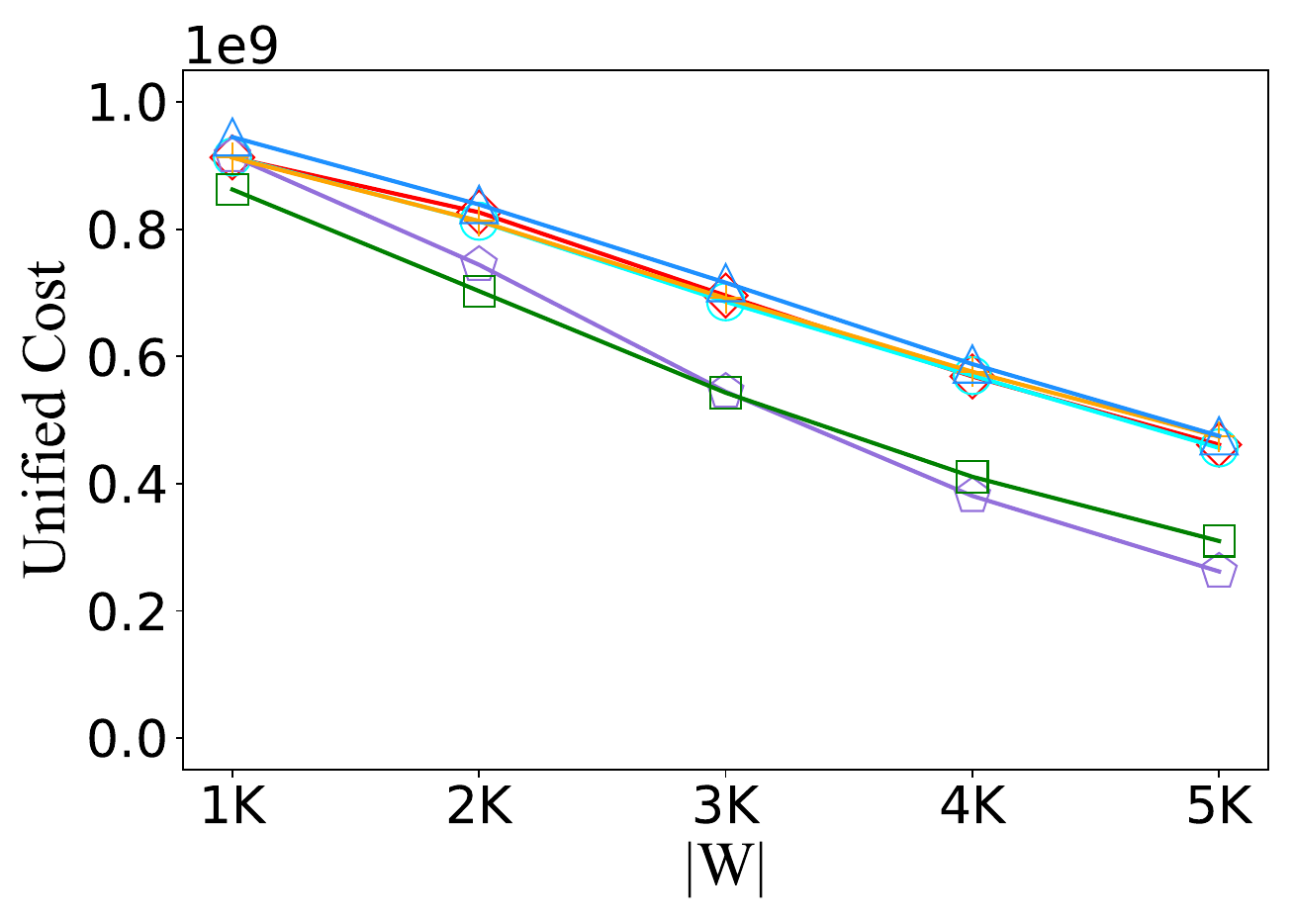}}}
				\label{subfig:uc_varing_w_cd}}\hspace{-2ex}
			\subfigure[][{\scriptsize Unified Cost (\textit{NYC})}]{
				\raisebox{-1ex}{\scalebox{0.19}[0.17]{\includegraphics{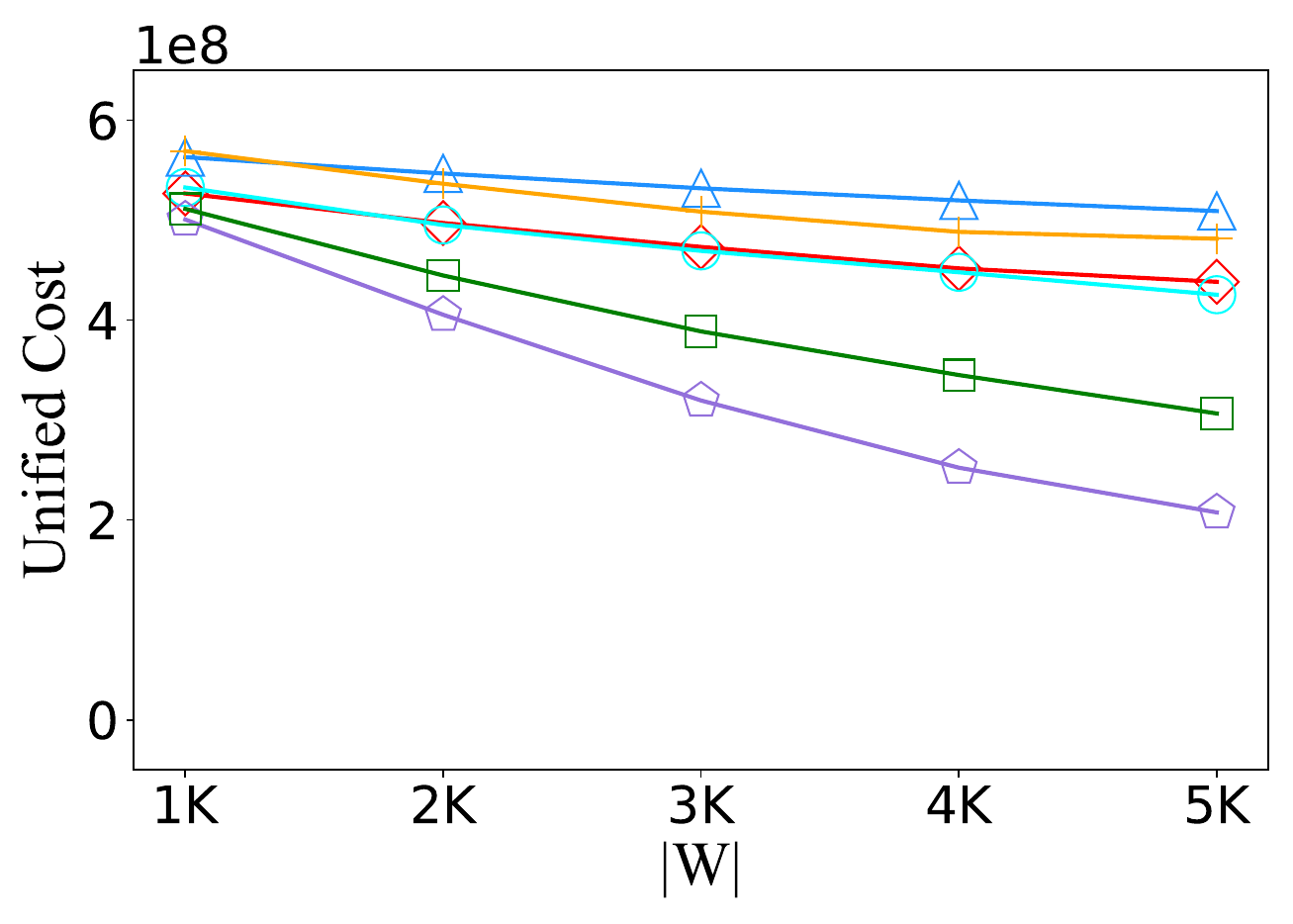}}}
				\label{subfig:uc_varing_w_nyc}}\\[-2ex]
			\subfigure[][{\scriptsize Service Rate (\textit{CHD})}]{
				\raisebox{-1ex}{\scalebox{0.19}[0.17]{\includegraphics{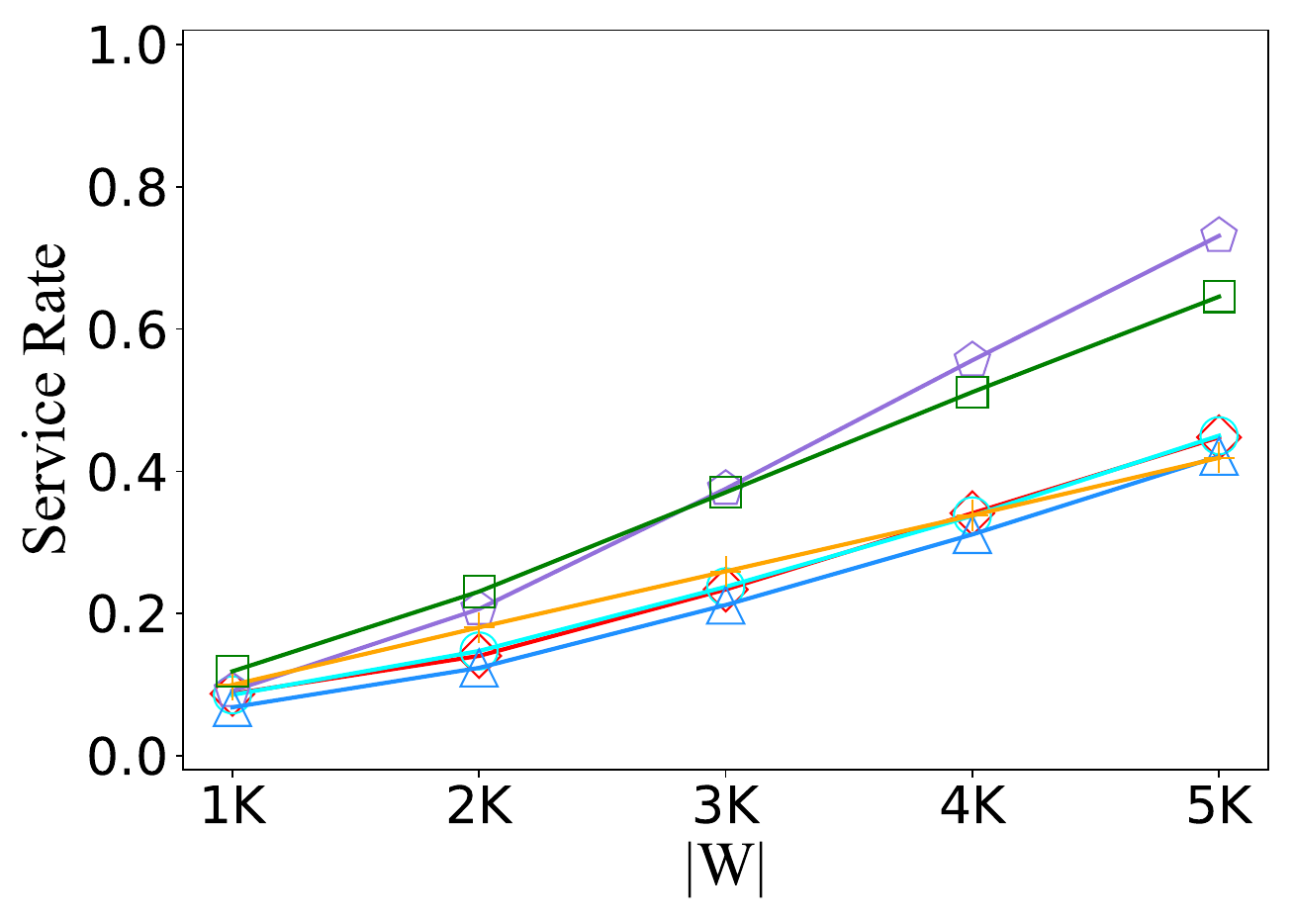}}}
				\label{subfig:sr_varing_w_cd}}\hspace{-2ex}
			\subfigure[][{\scriptsize Service Rate (\textit{NYC})}]{
				\raisebox{-1ex}{\scalebox{0.19}[0.17]{\includegraphics{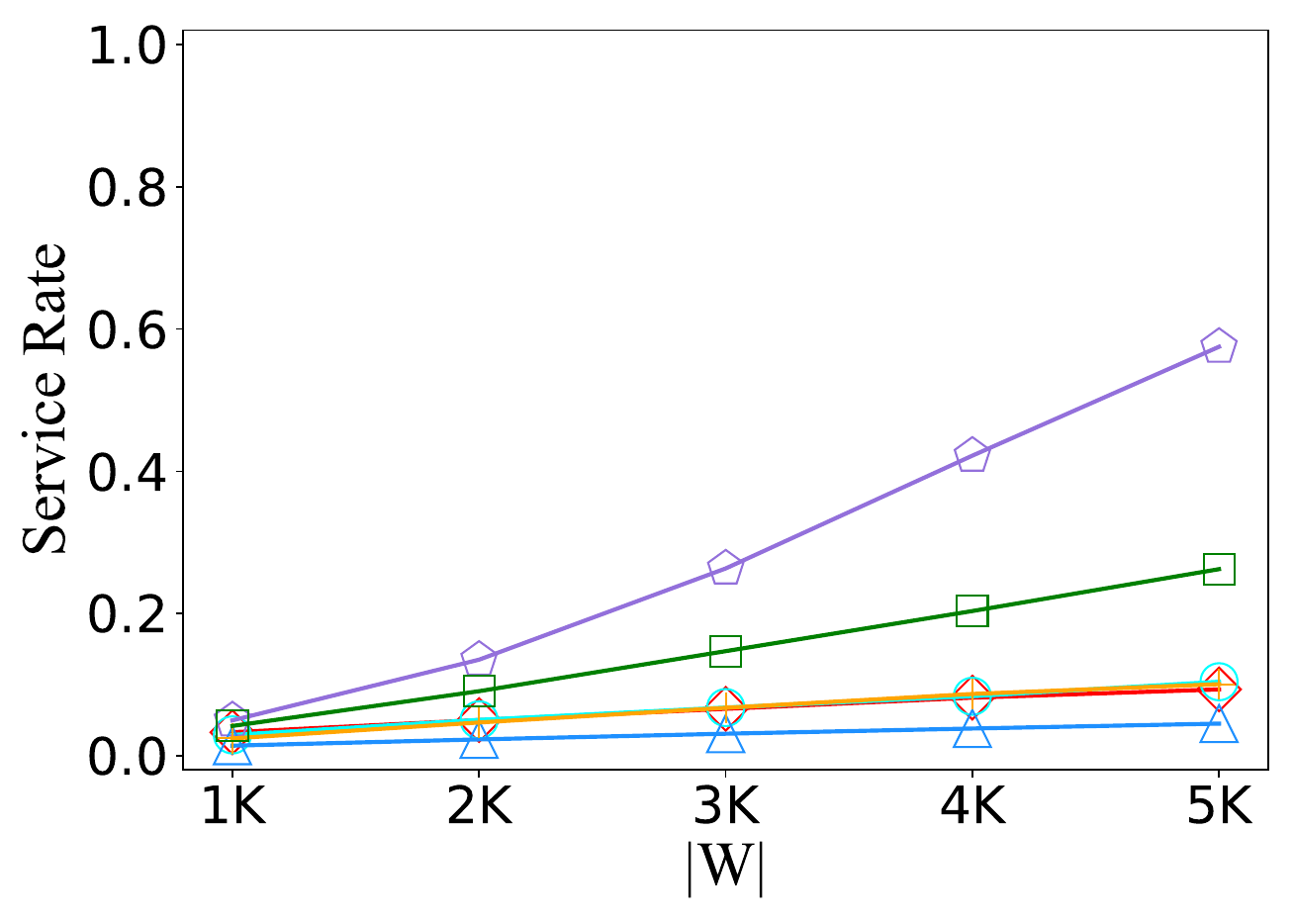}}}
				\label{subfig:sr_varing_w_nyc}}\\[-2ex]
			\subfigure[][{\scriptsize Running Time (\textit{CHD})}]{
				\raisebox{-1ex}{\scalebox{0.19}[0.17]{\includegraphics{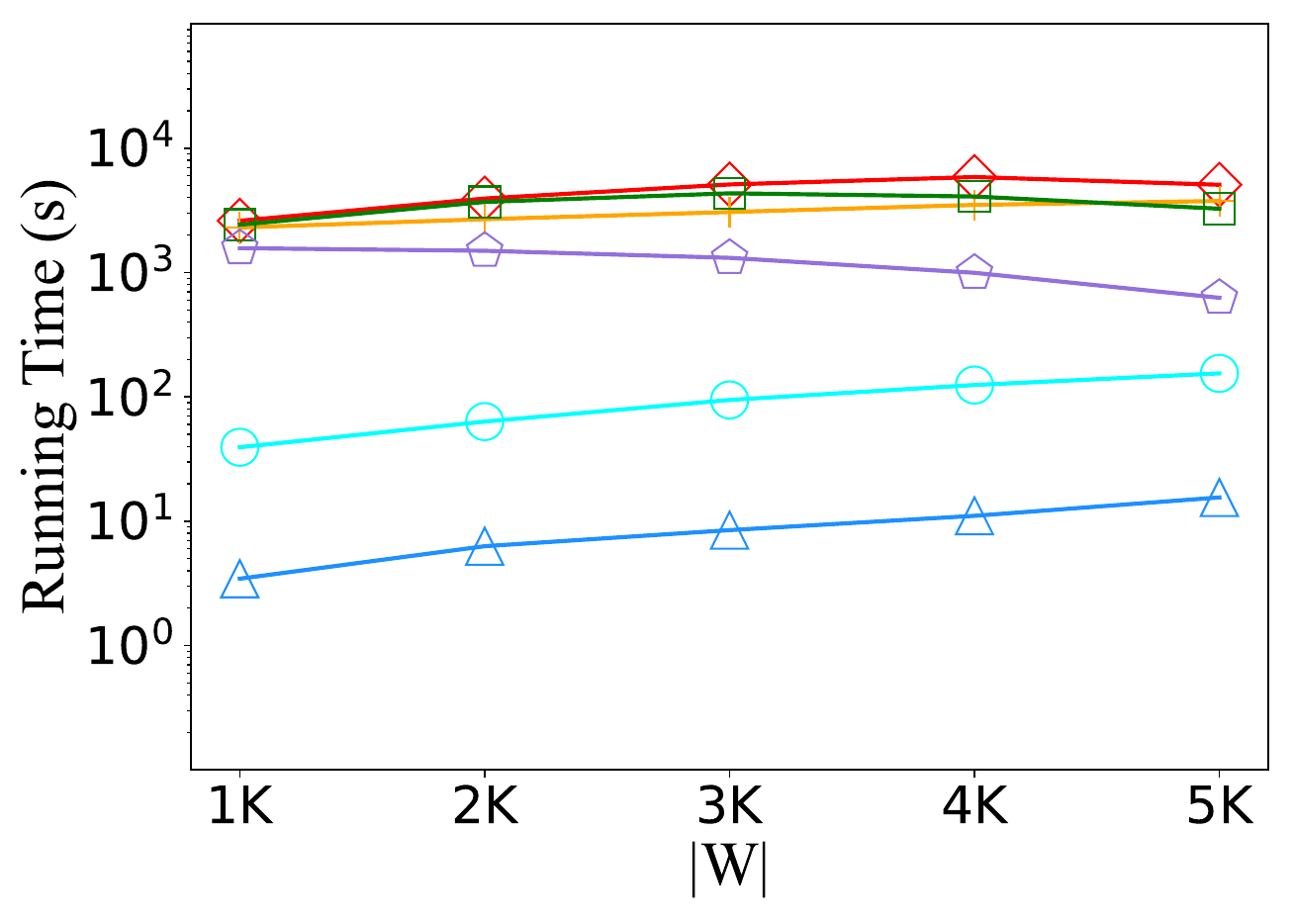}}}
				\label{subfig:tm_varing_w_cd}}\hspace{-2ex}
			\subfigure[][{\scriptsize Running Time (\textit{NYC})}]{
				\raisebox{-1ex}{\scalebox{0.19}[0.17]{\includegraphics{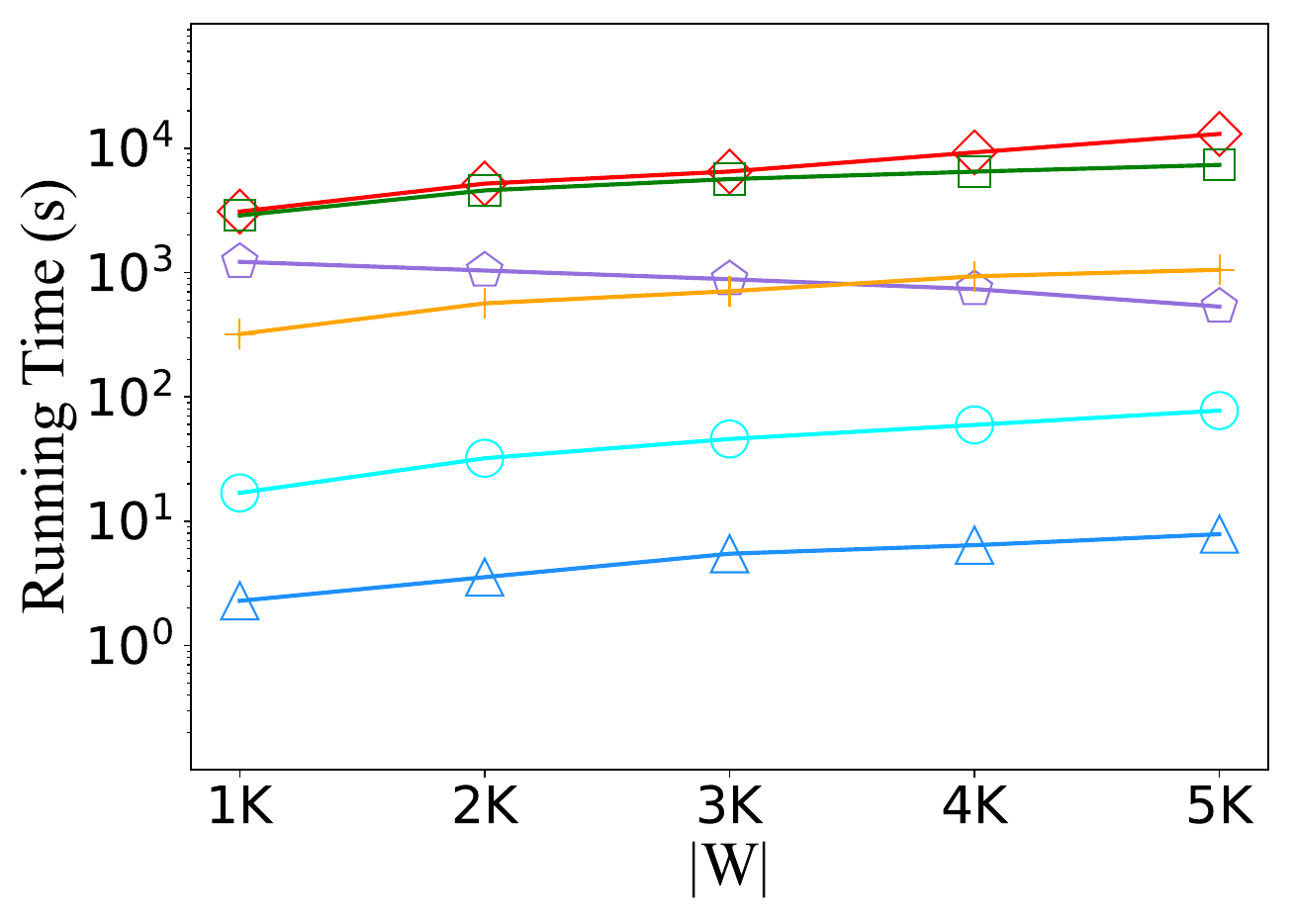}}}
				\label{subfig:tm_varing_w_nyc}}
			\caption{Performance of Varying $|W|$.}
			\label{fig:vary_worker}
		\end{minipage} &
		\begin{minipage}[c]{.5\textwidth}
			\subfigure{
				\scalebox{0.45}[0.45]{\includegraphics{legend-eps-converted-to.pdf}}}\hfill
			\addtocounter{subfigure}{-1}\\[-3ex]
			\subfigure[][{\scriptsize Unified Cost (\textit{CHD})}]{
				\raisebox{-1ex}{\scalebox{0.19}[0.17]{\includegraphics{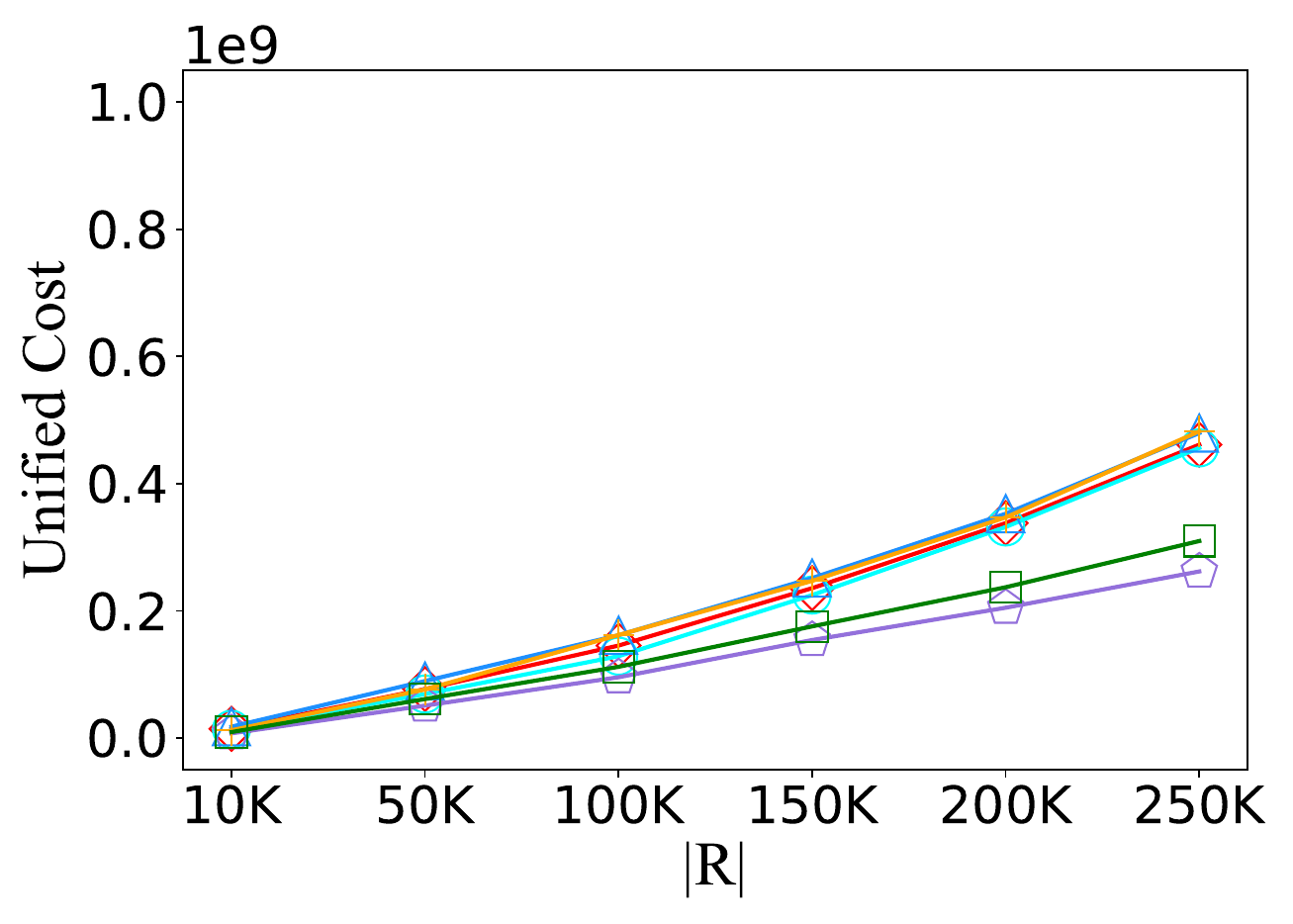}}}
				\label{subfig:uc_varing_req_cd}}\hspace{-2ex}
			\subfigure[][{\scriptsize Unified Cost (\textit{NYC})}]{
				\raisebox{-1ex}{\scalebox{0.19}[0.17]{\includegraphics{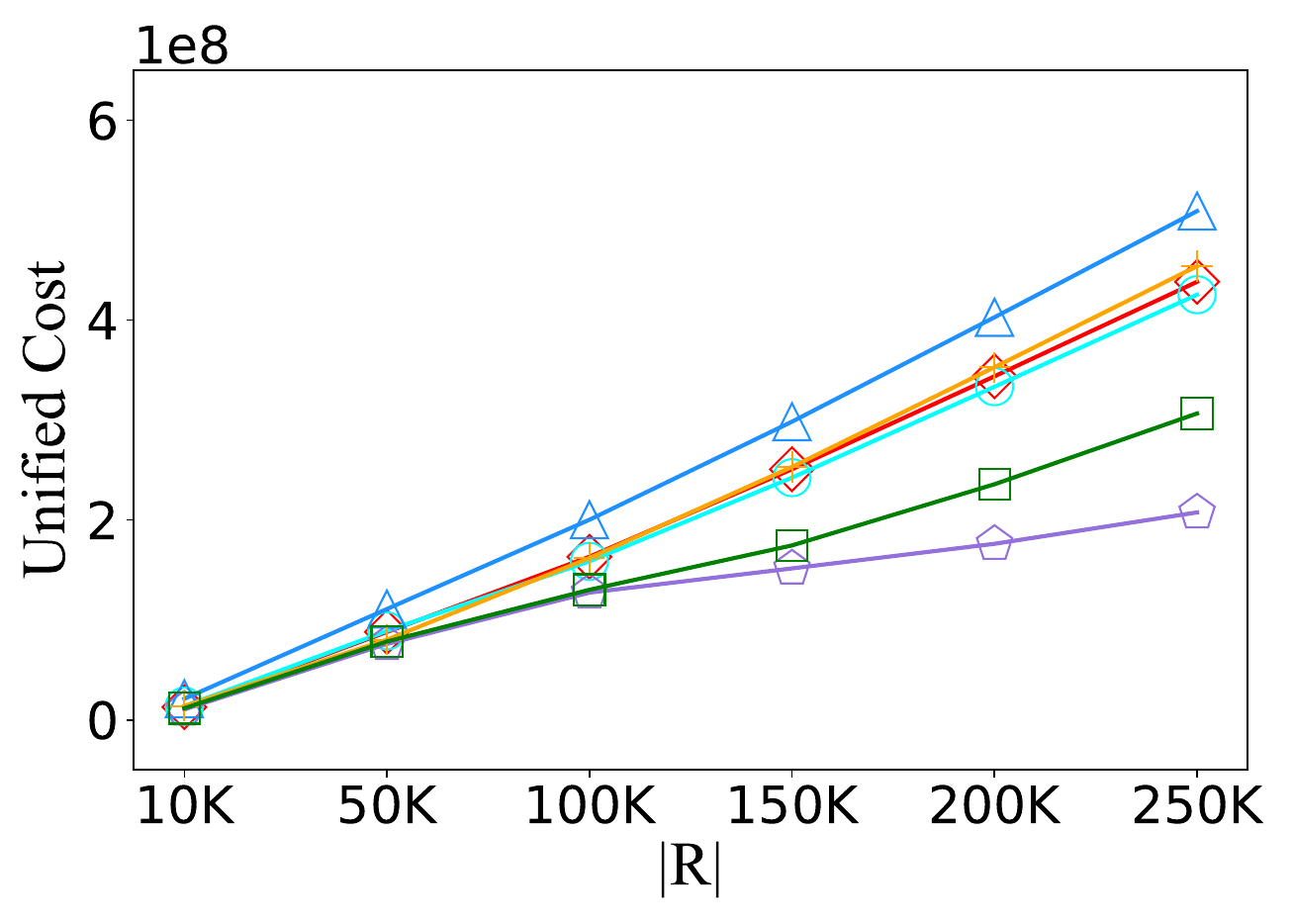}}}
				\label{subfig:uc_varing_req_nyc}}\\[-2ex]
			\subfigure[][{\scriptsize Service Rate (\textit{CHD})}]{
				\raisebox{-1ex}{\scalebox{0.19}[0.17]{\includegraphics{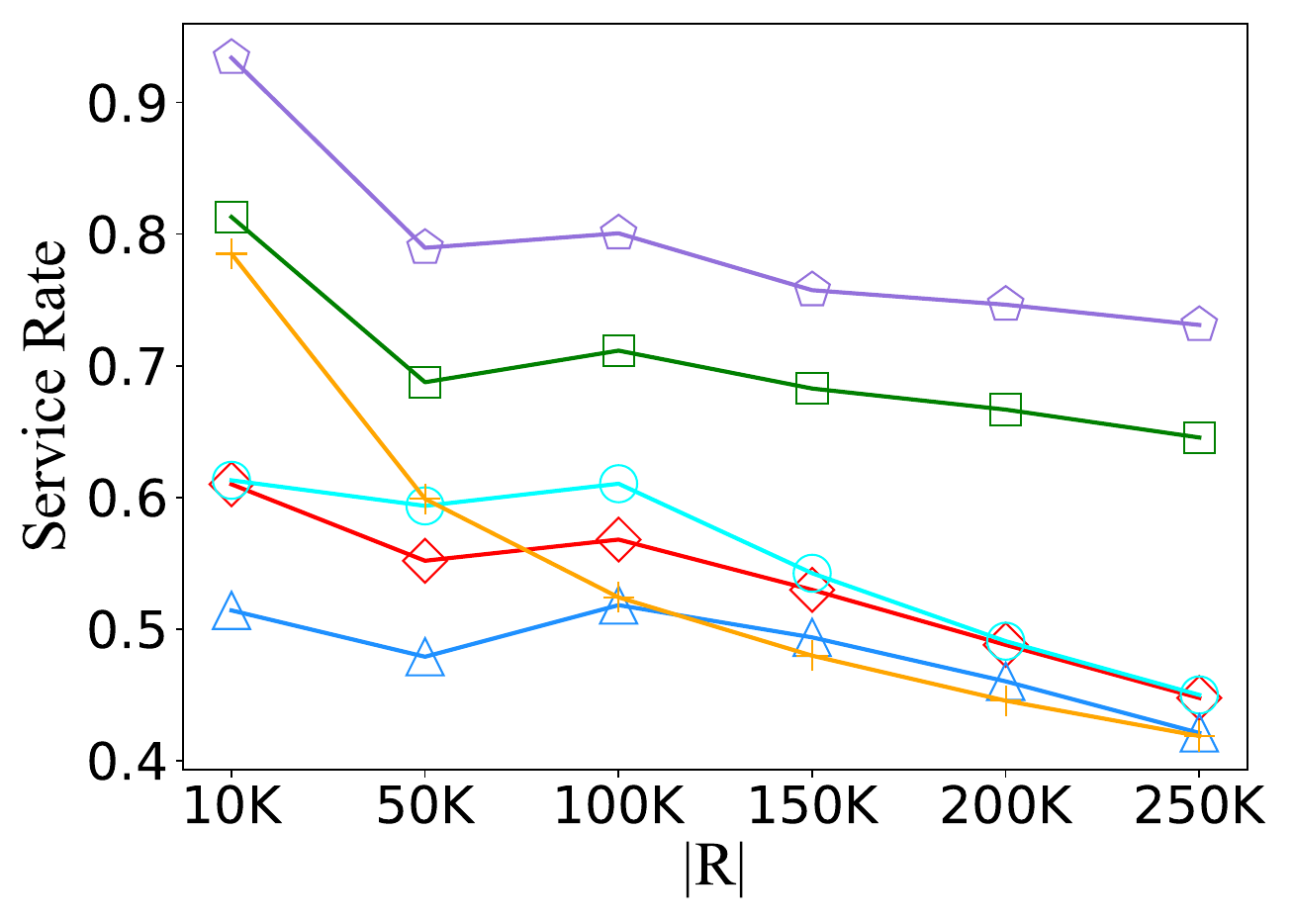}}}
				\label{subfig:sr_varing_req_cd}}\hspace{-2ex}
			\subfigure[][{\scriptsize Service Rate (\textit{NYC})}]{
				\raisebox{-1ex}{\scalebox{0.19}[0.17]{\includegraphics{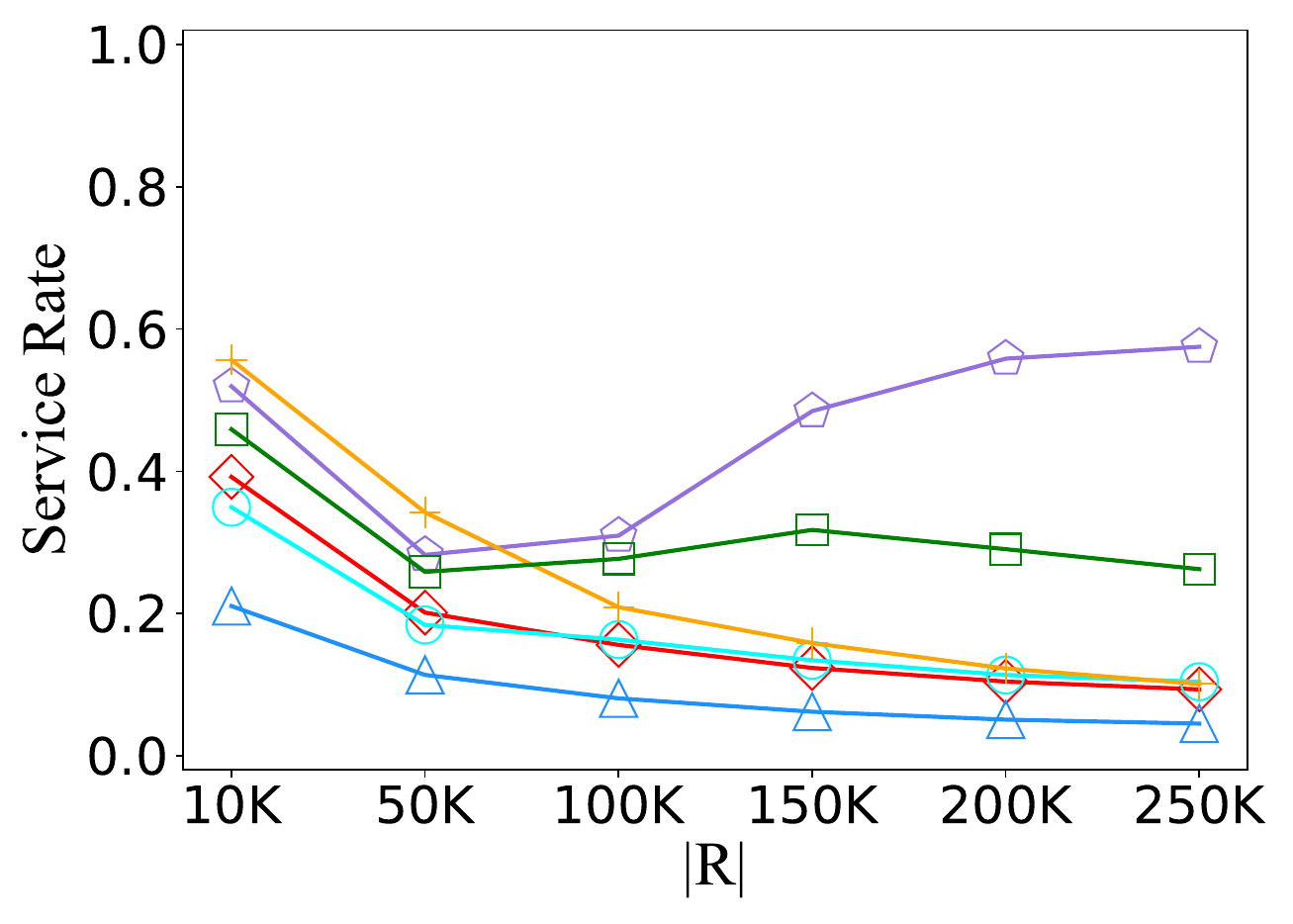}}}
				\label{subfig:sr_varing_req_nyc}}\\[-2ex]
			\subfigure[][{\scriptsize Running Time (\textit{CHD})}]{
				\raisebox{-1ex}{\scalebox{0.19}[0.17]{\includegraphics{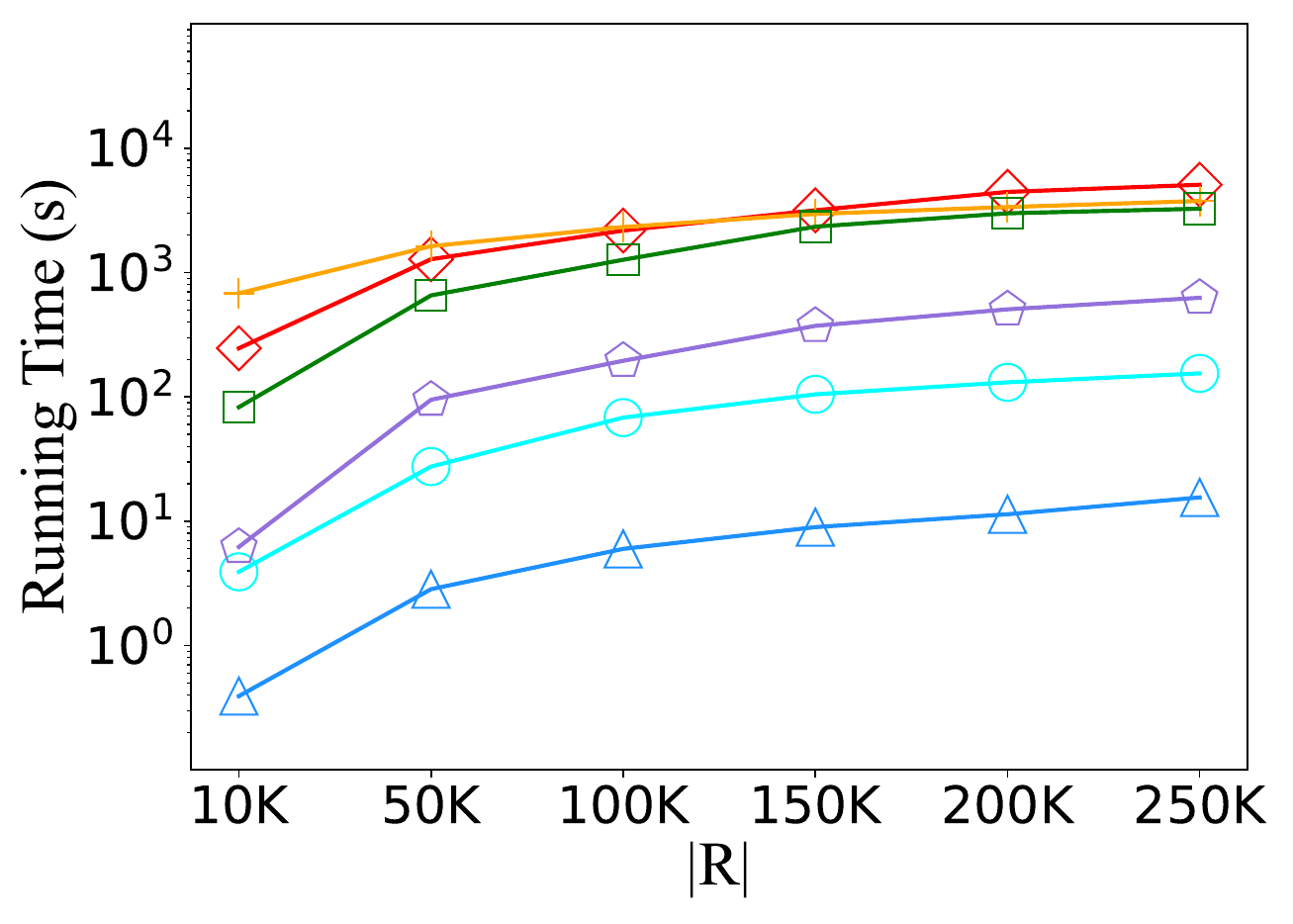}}}
				\label{subfig:tm_varing_req_cd}}\hspace{-2ex}
			\subfigure[][{\scriptsize Running Time (\textit{NYC})}]{
				\raisebox{-1ex}{\scalebox{0.19}[0.17]{\includegraphics{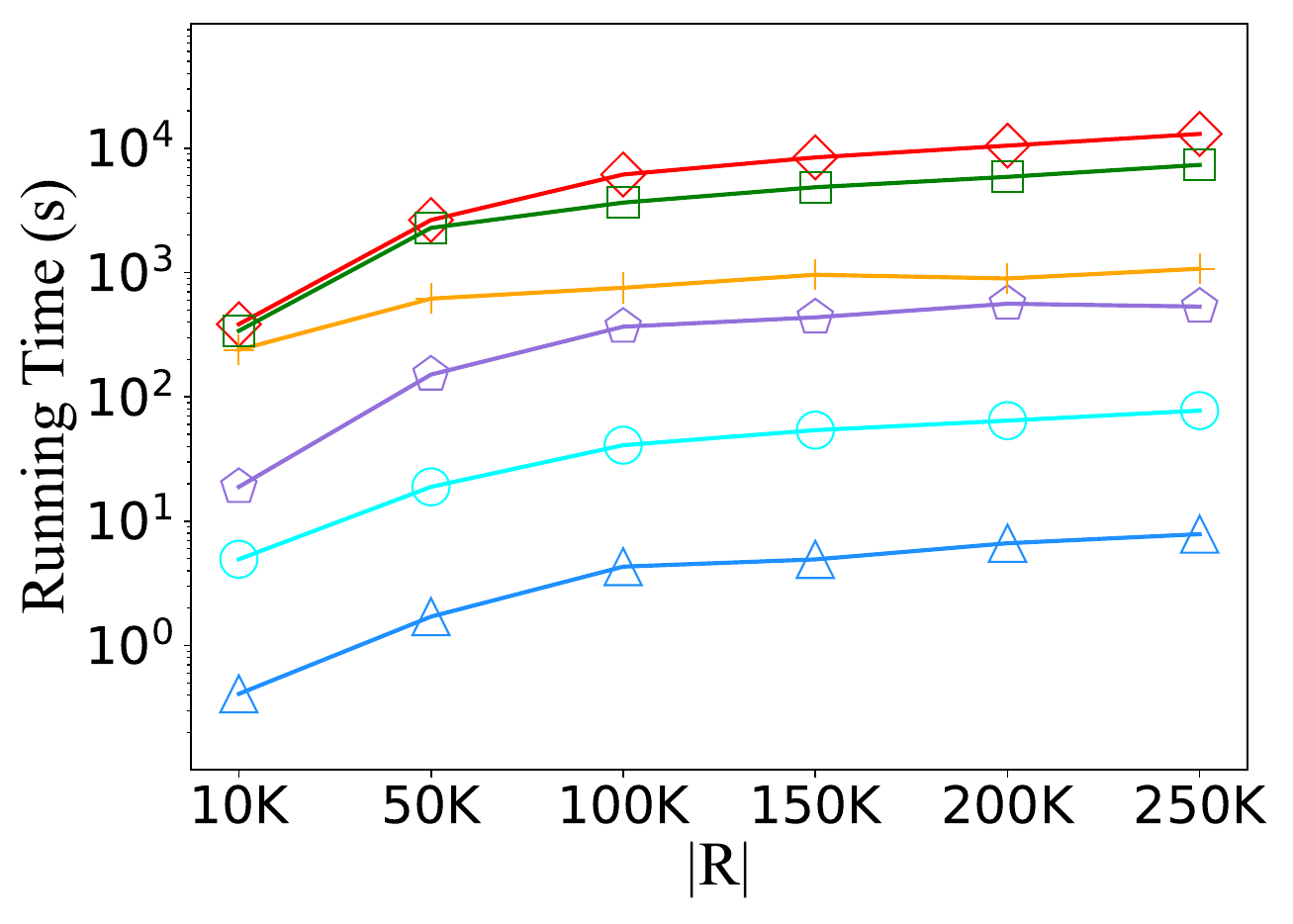}}}
				\label{subfig:tm_varing_req_nyc}}
			\caption{Performance of Varying $|R|$.}
			\label{fig:vary_req}
		\end{minipage}
	\end{tabularx}
\end{figure*}

\begin{figure*}[t!]\centering
	\begin{tabularx}{\textwidth}{XX}
		\begin{minipage}[t]{.5\textwidth}
			\subfigure{
				\scalebox{0.45}[0.45]{\includegraphics{legend-eps-converted-to.pdf}}}\hfill
			\addtocounter{subfigure}{-1}\\[-3ex]
			\subfigure[][{\scriptsize Unified Cost (\textit{CHD})}]{
				\raisebox{-1ex}{\scalebox{0.19}[0.17]{\includegraphics{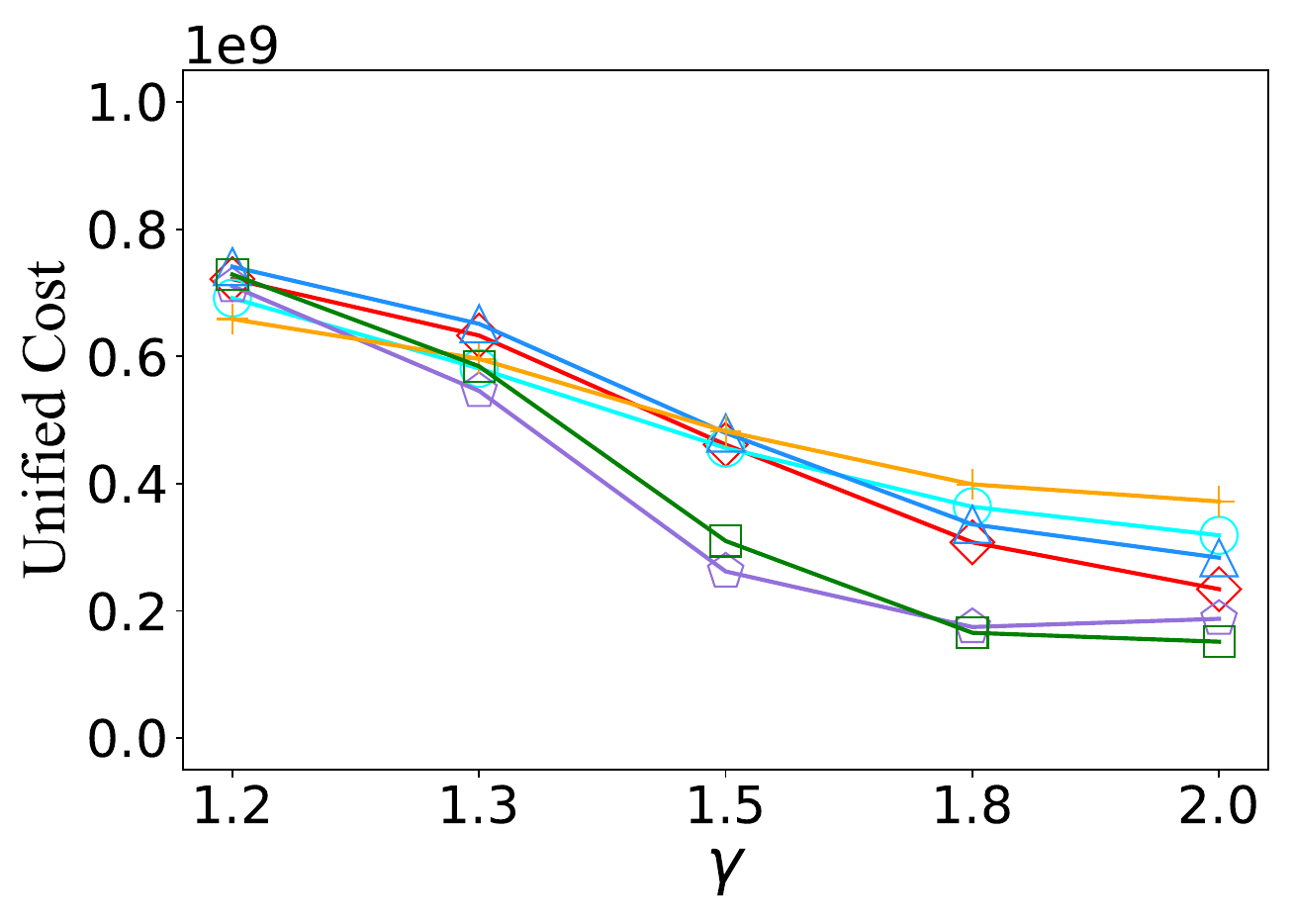}}}
				\label{subfig:uc_varing_ddl_cd}}\hspace{-2ex}
			\subfigure[][{\scriptsize Unified Cost (\textit{NYC})}]{
				\raisebox{-1ex}{\scalebox{0.19}[0.17]{\includegraphics{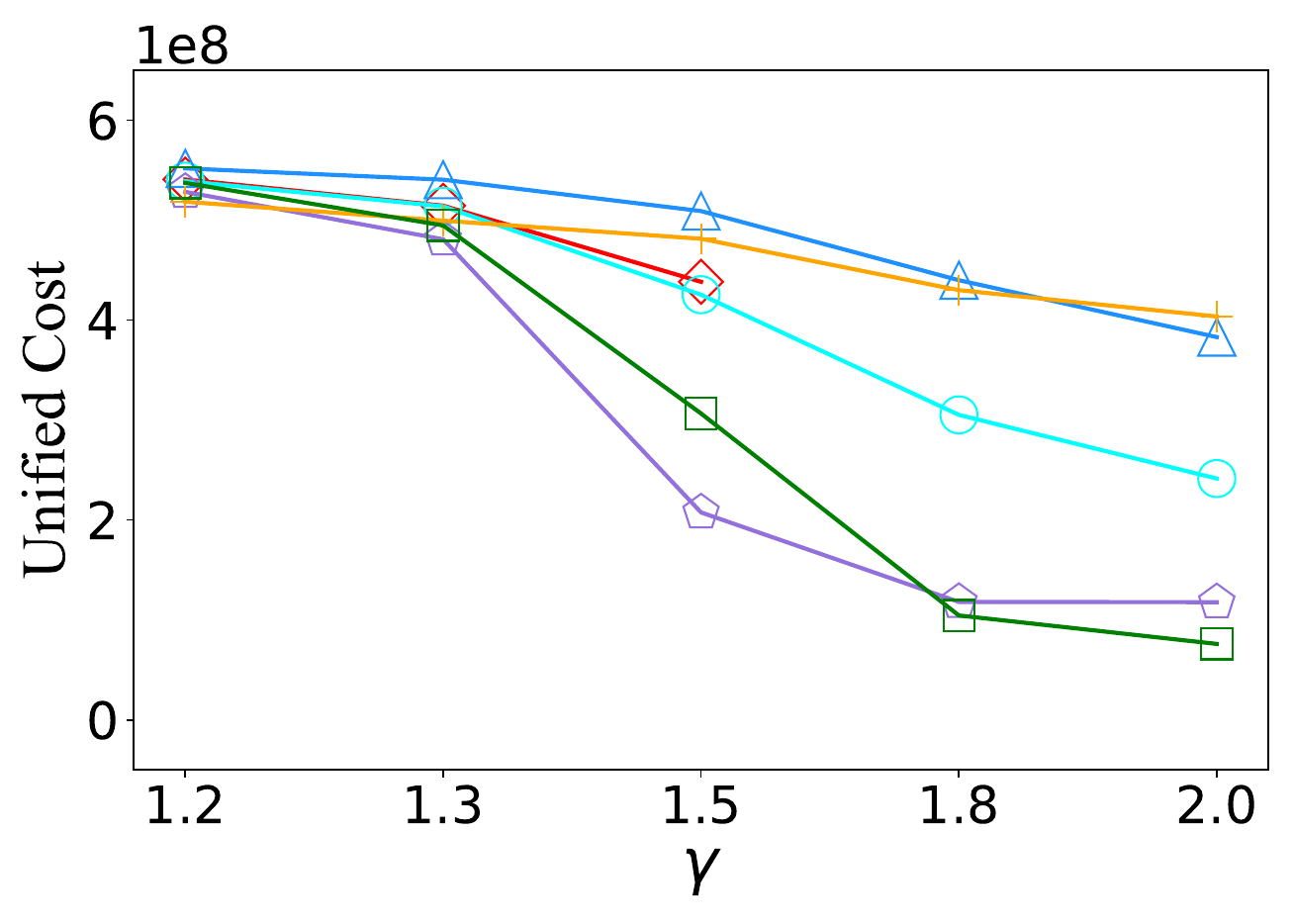}}}
				\label{subfig:uc_varing_ddl_nyc}}\\[-2ex]
			\subfigure[][{\scriptsize Service Rate (\textit{CHD})}]{
				\raisebox{-1ex}{\scalebox{0.19}[0.17]{\includegraphics{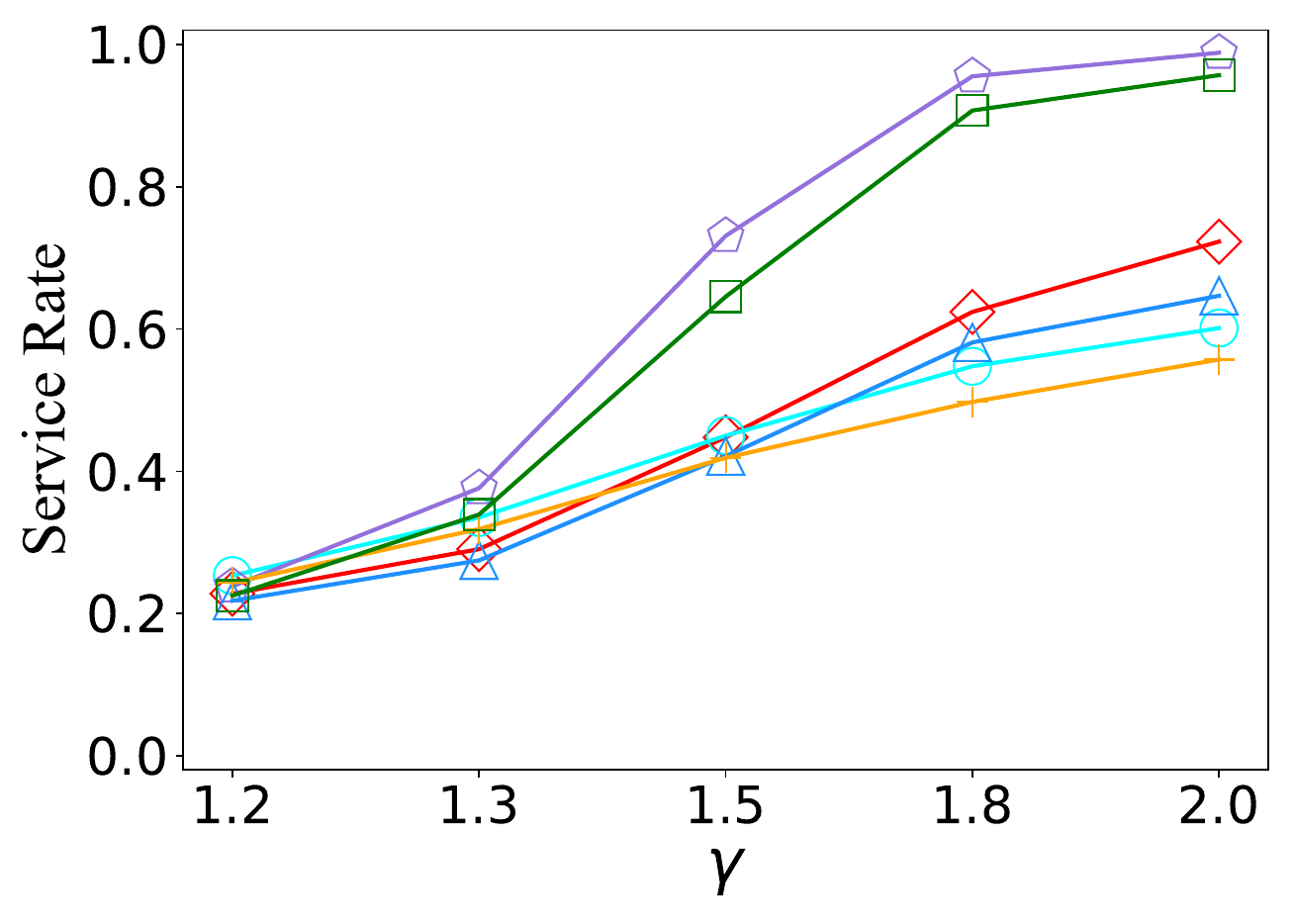}}}
				\label{subfig:sr_varing_ddl_cd}}\hspace{-2ex}
			\subfigure[][{\scriptsize Service Rate (\textit{NYC})}]{
				\raisebox{-1ex}{\scalebox{0.19}[0.17]{\includegraphics{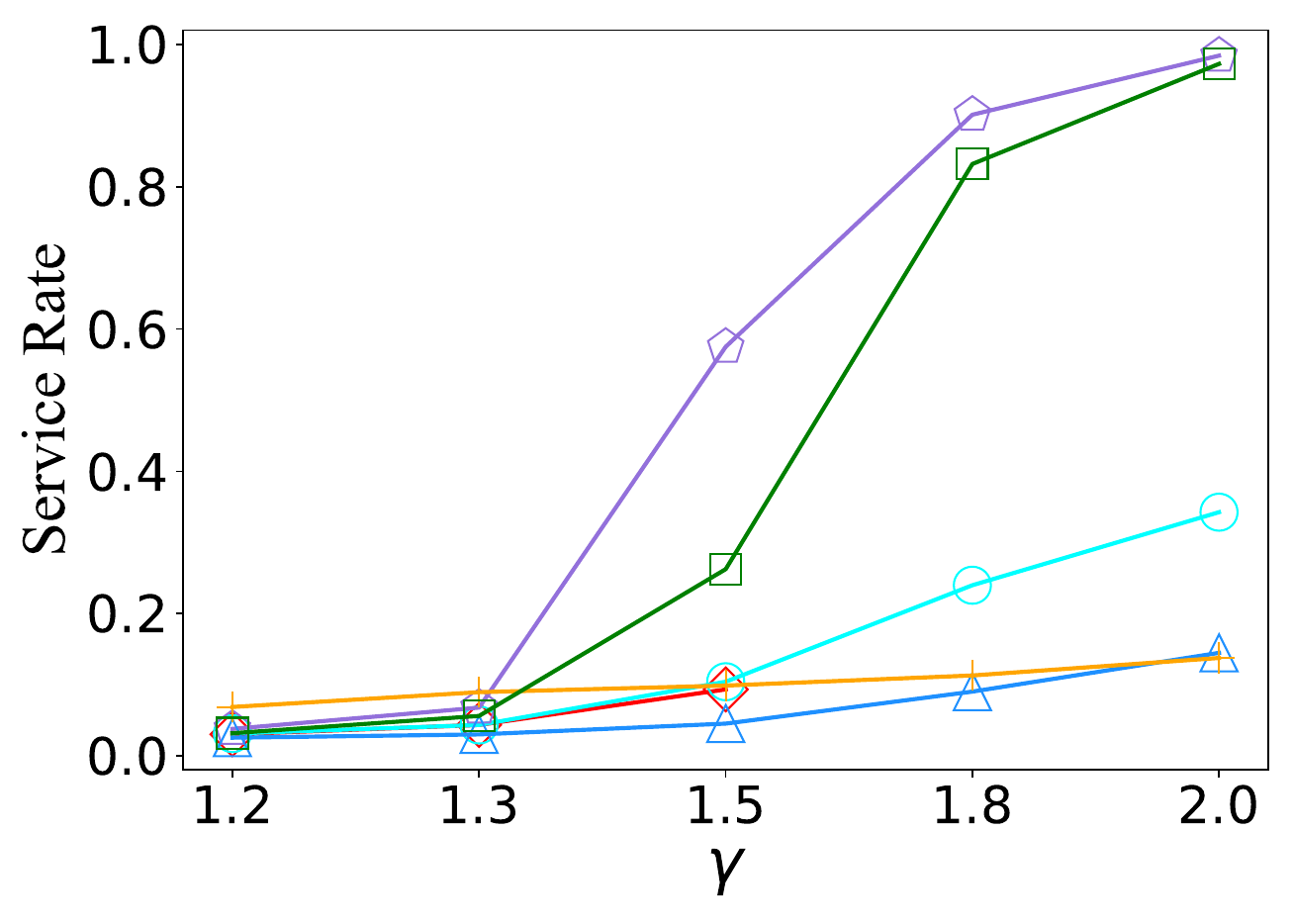}}}
				\label{subfig:sr_varing_ddl_nyc}}\\[-2ex]
			\subfigure[][{\scriptsize Running Time (\textit{CHD})}]{
				\raisebox{-1ex}{\scalebox{0.19}[0.17]{\includegraphics{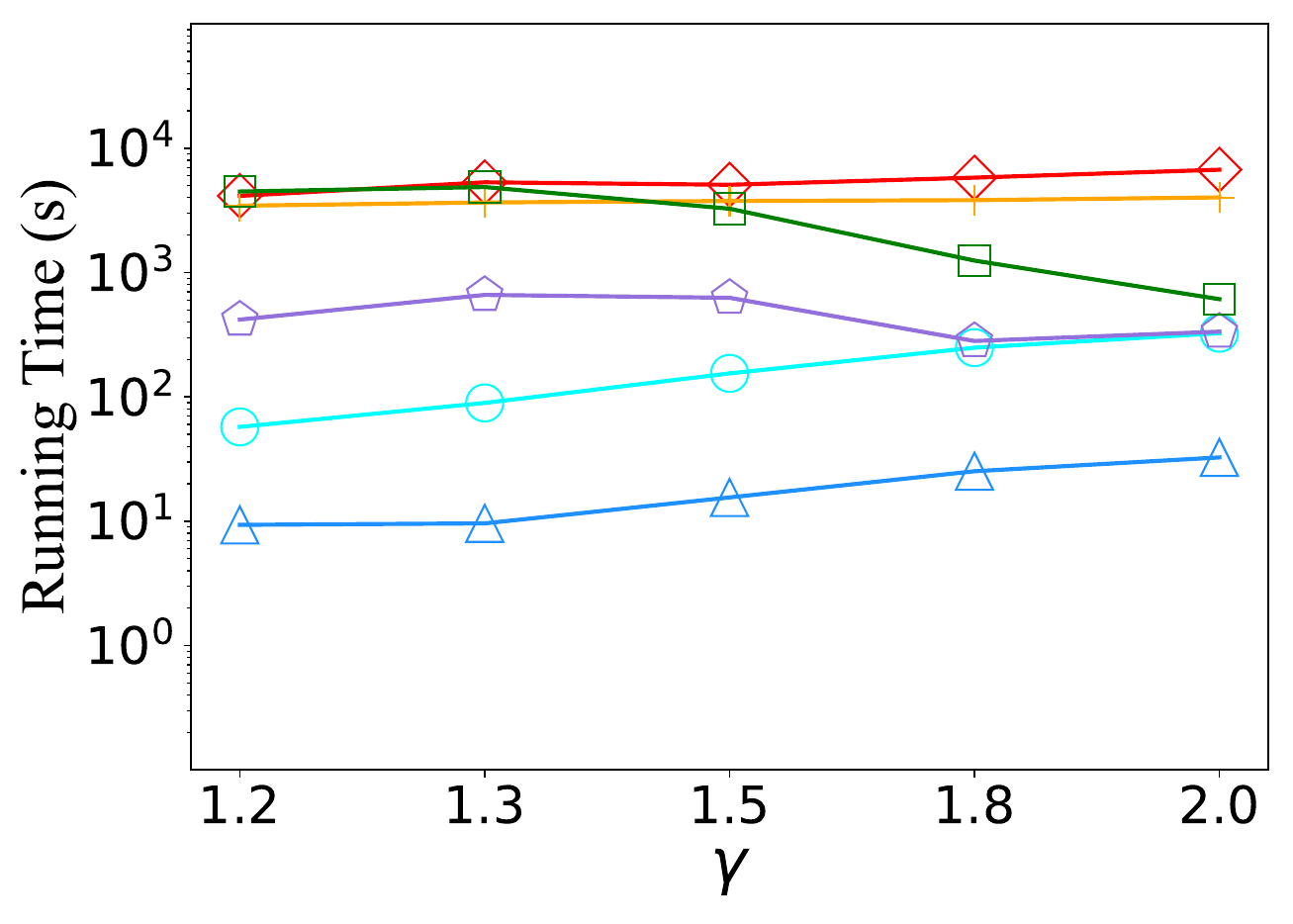}}}
				\label{subfig:tm_varing_ddl_cd}}\hspace{-2ex}
			\subfigure[][{\scriptsize Running Time (\textit{NYC})}]{
				\raisebox{-1ex}{\scalebox{0.19}[0.17]{\includegraphics{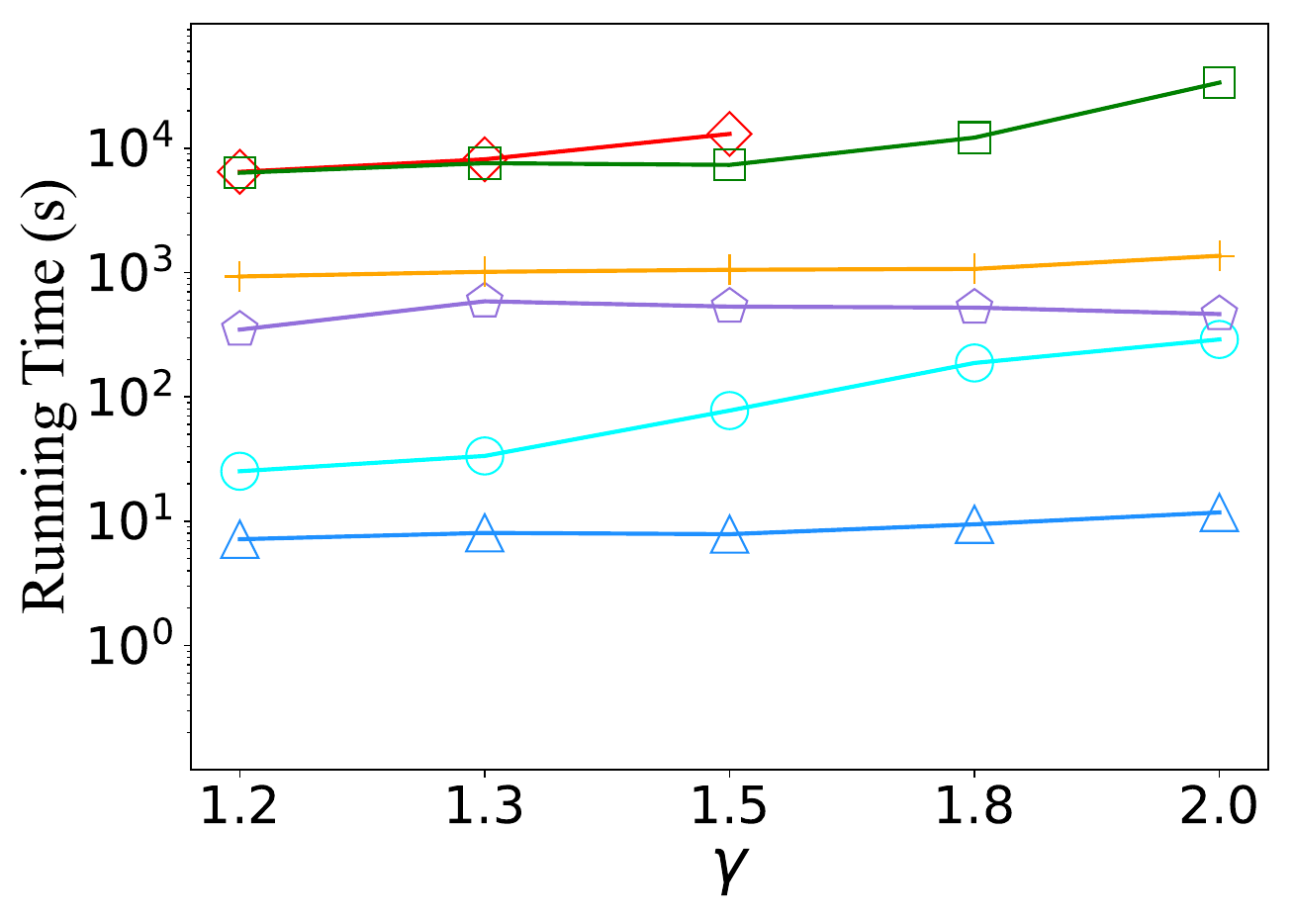}}}
				\label{subfig:tm_varing_ddl_nyc}}
			\caption{ Performance of Varying $\gamma$.}
			\label{fig:vary_ddl}
		\end{minipage} &
		\begin{minipage}[t]{.5\textwidth}
			\subfigure{
				\scalebox{0.45}[0.45]{\includegraphics{legend-eps-converted-to.pdf}}}\hfill
			\addtocounter{subfigure}{-1}\\[-3ex]
			\subfigure[][{\scriptsize Unified Cost (\textit{CHD})}]{
				\raisebox{-1ex}{\scalebox{0.19}[0.17]{\includegraphics{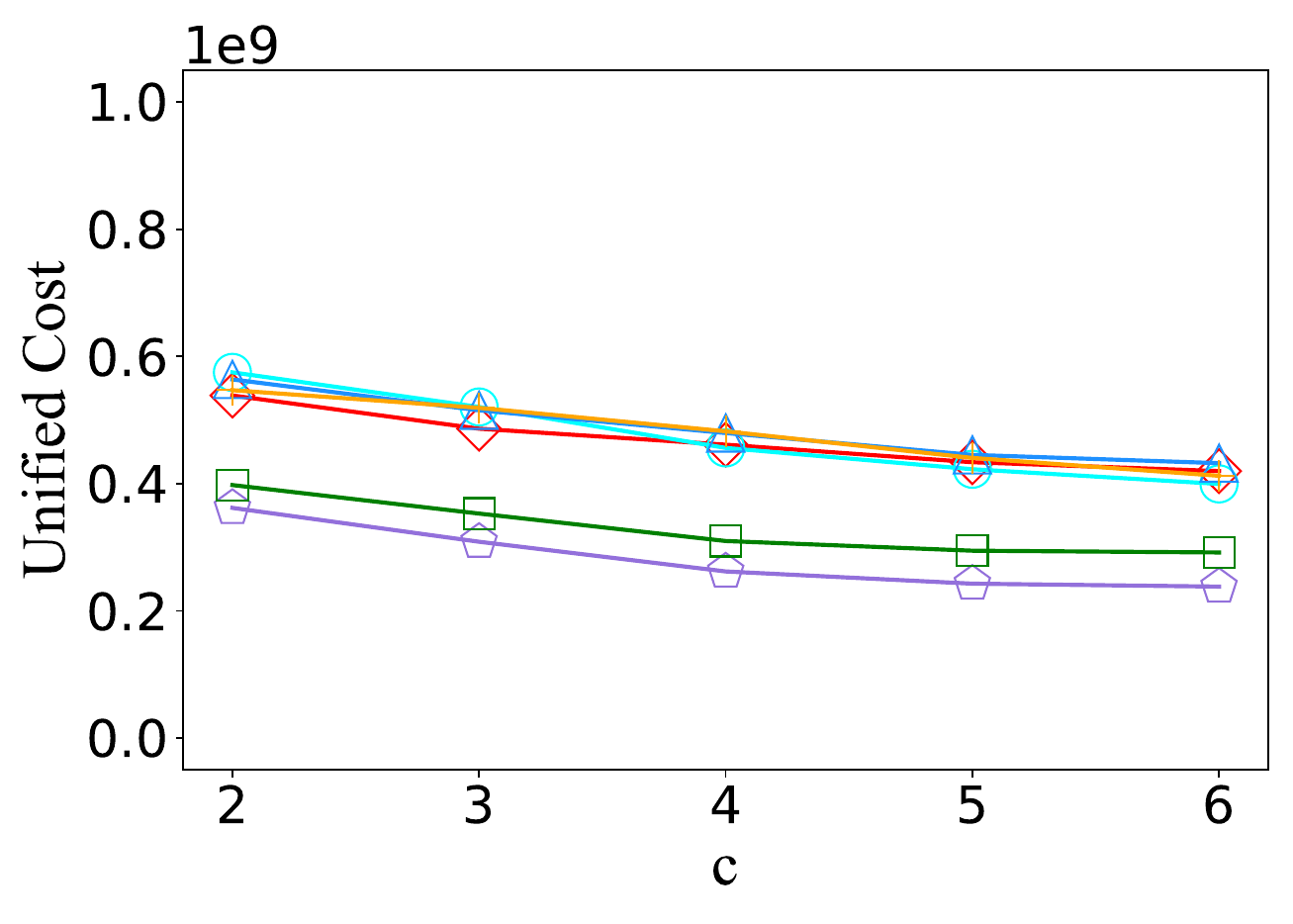}}}
				\label{subfig:uc_varing_cap_cd}}\hspace{-2ex}
			\subfigure[][{\scriptsize Unified Cost (\textit{NYC})}]{
				\raisebox{-1ex}{\scalebox{0.19}[0.17]{\includegraphics{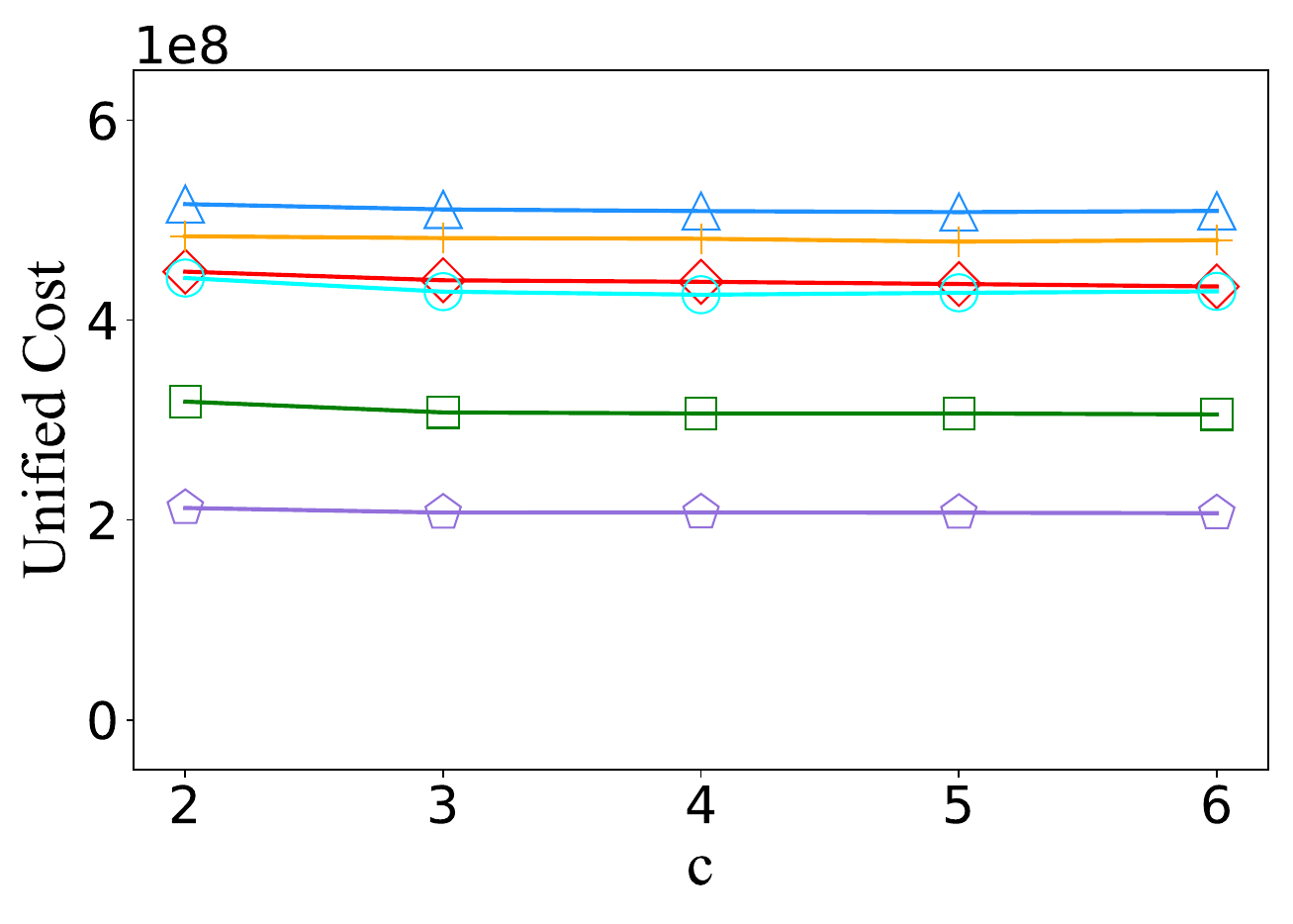}}}
				\label{subfig:uc_varing_cap_nyc}}\\[-2ex]
			\subfigure[][{\scriptsize Service Rate (\textit{CHD})}]{
				\raisebox{-1ex}{\scalebox{0.19}[0.17]{\includegraphics{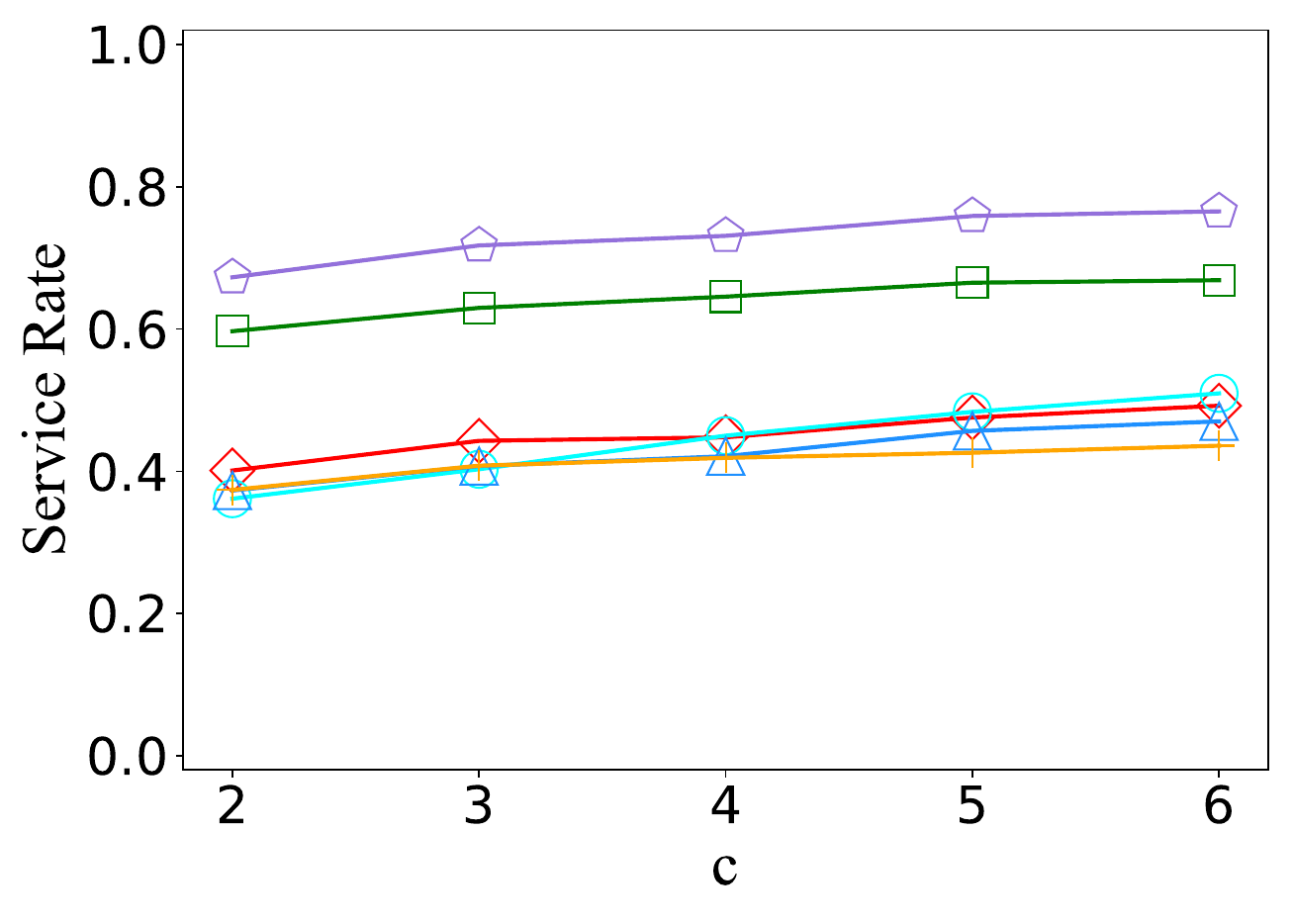}}}
				\label{subfig:sr_varing_cap_cd}}\hspace{-2ex}
			\subfigure[][{\scriptsize Service Rate (\textit{NYC})}]{
				\raisebox{-1ex}{\scalebox{0.19}[0.17]{\includegraphics{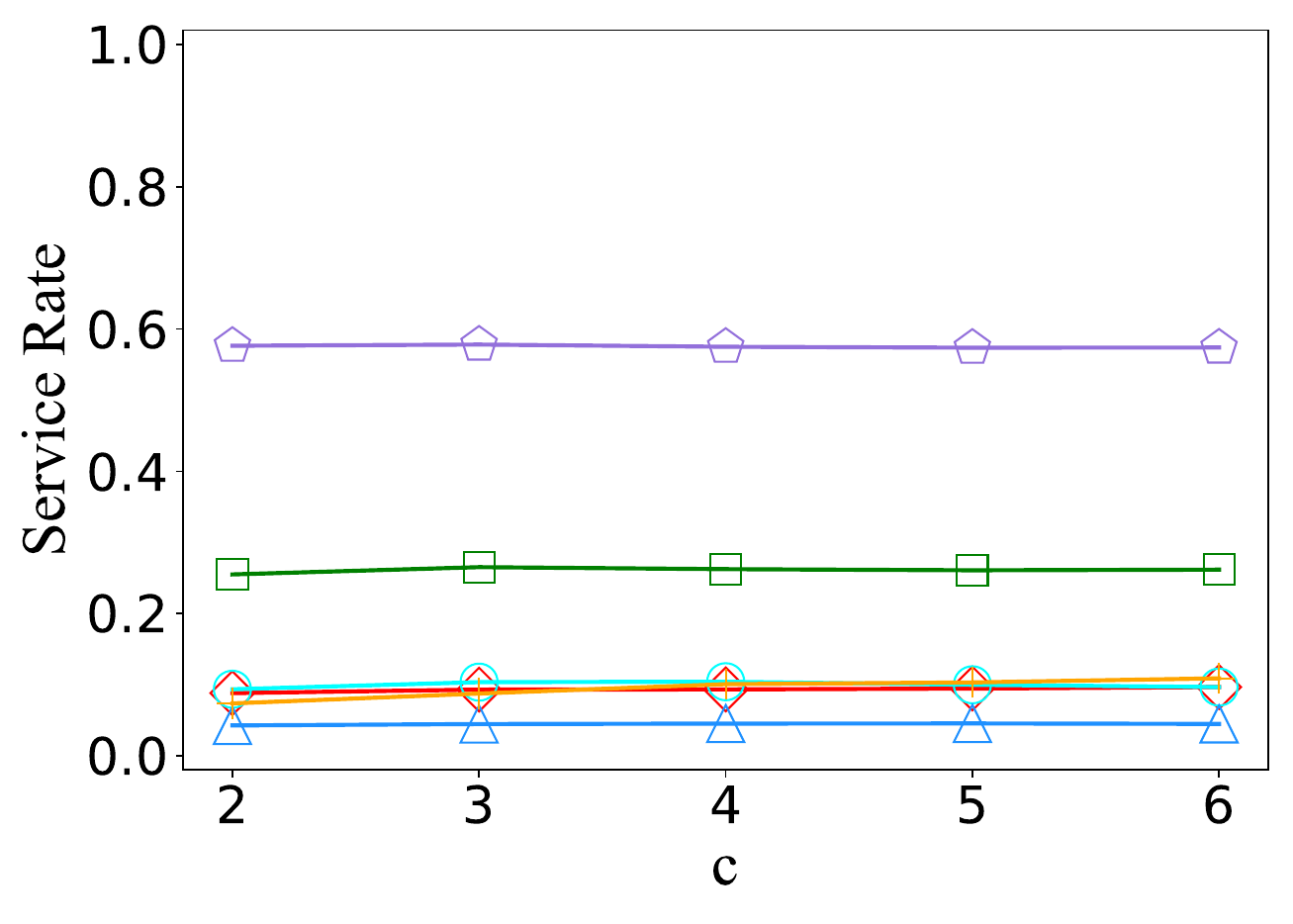}}}
				\label{subfig:sr_varing_cap_nyc}}\\[-2ex]
			\subfigure[][{\scriptsize Running Time (\textit{CHD})}]{
				\raisebox{-1ex}{\scalebox{0.19}[0.17]{\includegraphics{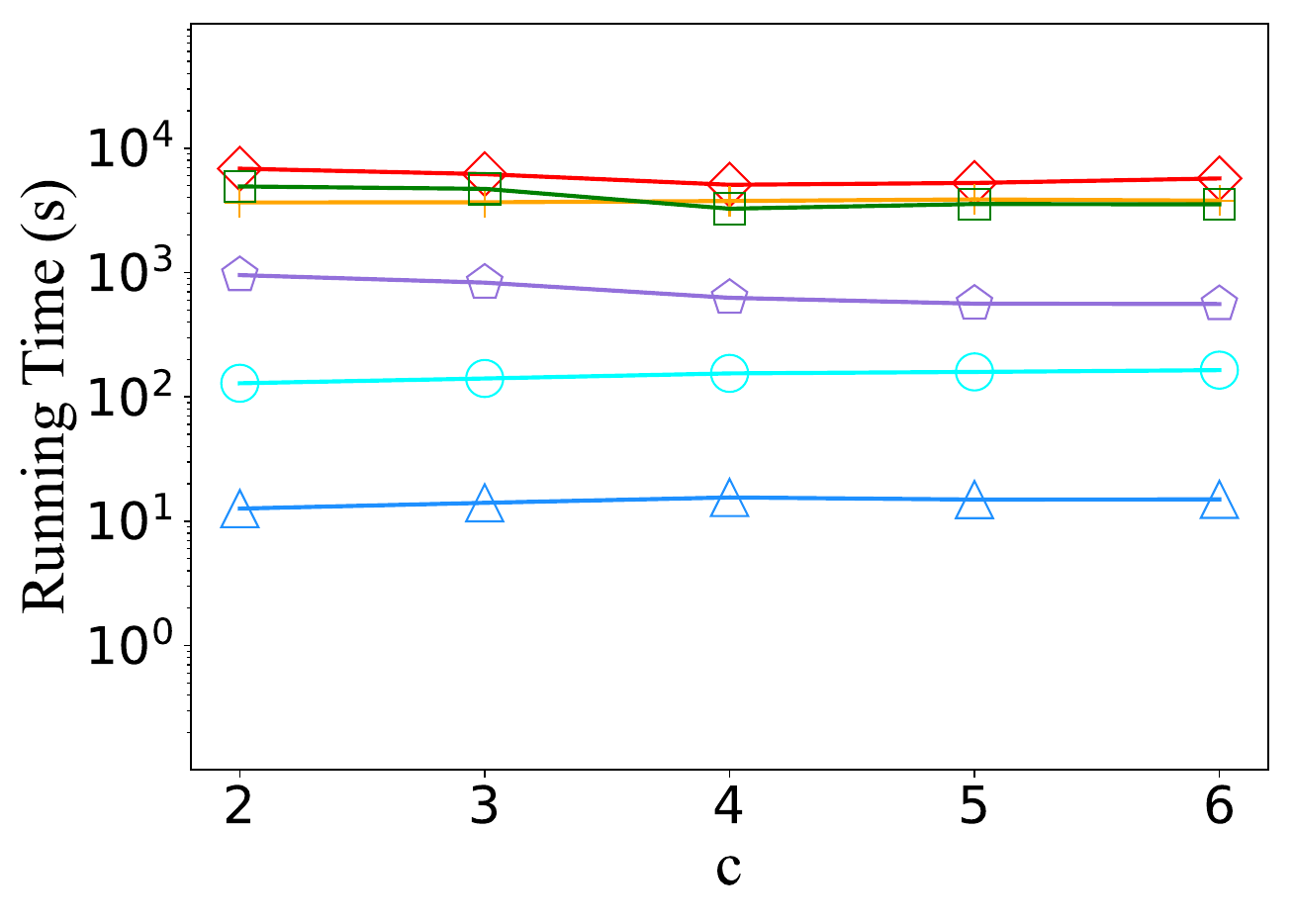}}}
				\label{subfig:tm_varing_cap_cd}}\hspace{-2ex}
			\subfigure[][{\scriptsize Running Time (\textit{NYC})}]{
				\raisebox{-1ex}{\scalebox{0.19}[0.17]{\includegraphics{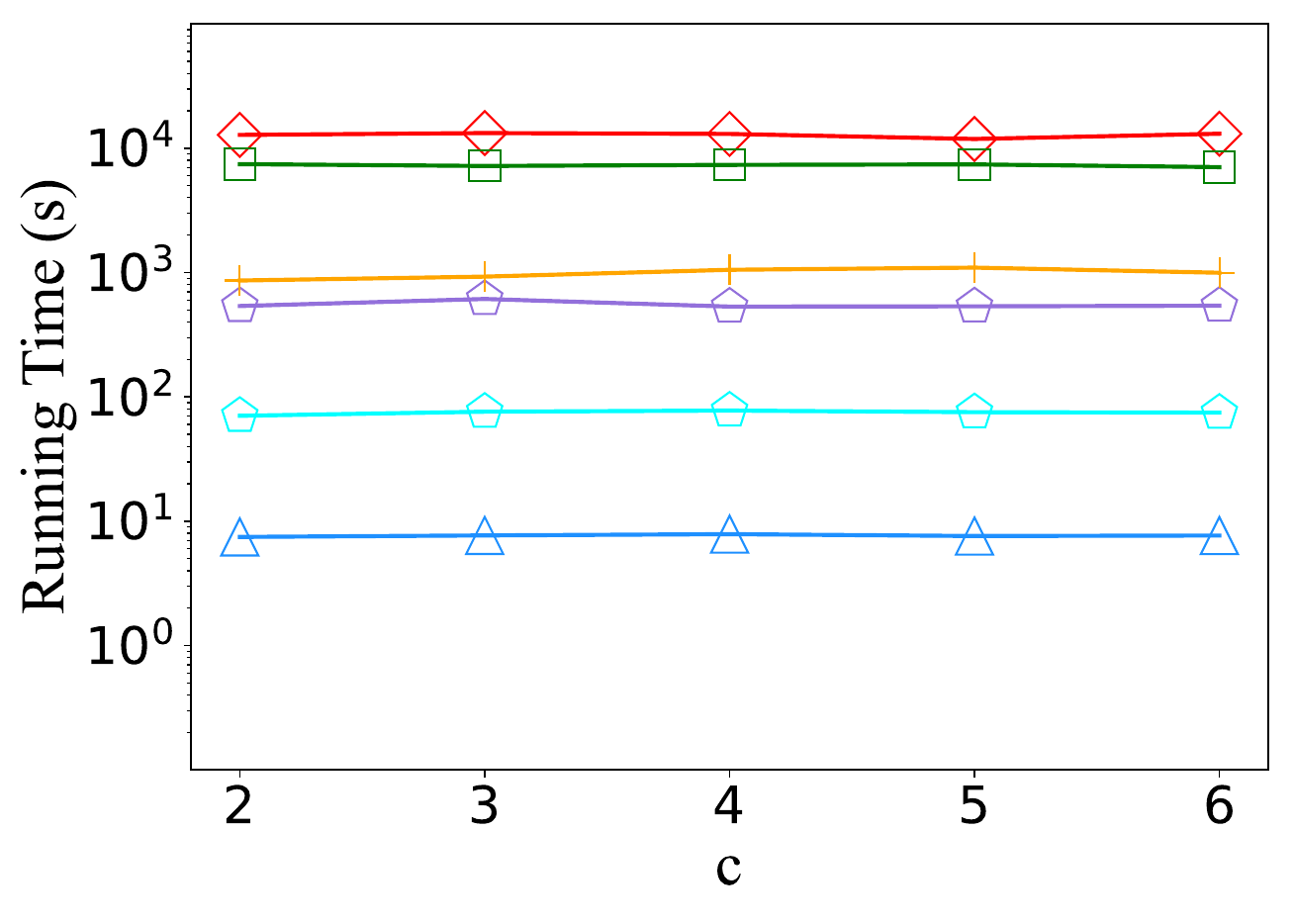}}}
				\label{subfig:tm_varing_cap_nyc}}
			\caption{ Performance of Varying $c$.}
			\label{fig:vary_cap}
		\end{minipage}
	\end{tabularx}
\end{figure*}

\subsection{Experimental Results}
\label{subsec:exp_res}
\noindent\textbf{Effect of the number of vehicles.}
Figure~\ref{fig:vary_worker} represents the results for 1K to 5K vehicles.
For the unified cost, {SARD} and {GAS} are leading the other methods, and the margin between these two algorithms and the others are widening gradually.
In the \textit{CHD} dataset, the results of the two algorithms are very close, 
but {SARD} achieves an improvement of $2.09\%\sim 59.22\%$ compared with other methods in the \textit{NYC} dataset.
As for the service rate, since the penalty part of the unified cost caused by the rejected requests decreases when the algorithm achieves a higher overall service rate, the gap between the algorithms is consistent with unified cost trends.
\revision{However, {DARM+DPRS} is an exception to this trend, as its scheduling of more idle vehicles increases travel time.
	{TicketAssign+} achieves better service rates through simultaneous decision-making, thus escaping the local optima of the greedy algorithm.}
Our method improves service rates by up to $30.99\%$ and $52.99\%$ on two datasets.
{pruneGDP} and \revision{{TicketAssign+}} excel in running time due to the efficiency of linear insertion\revision{, while {TicketAssign+} experiences reduced speed as a result of vehicle resource contention}.
RTV, GAS, and SARD are slower than pruneGDP.
Specifically, {SARD} achieves a speedup ratio up to $7.12\times$ and $23.5\times$ than RTV and GAS on two datasets, respectively.
In addition, the running time of {SARD} even decreases with the increase of vehicle number, which is mainly due to the decrease in the number of propose-acceptance rounds.  
With more vehicles, fewer riders will conflict, and thus fewer rounds of the proposal are needed. 
This decreases the average proposed riders per vehicle, shortening the acceptance stage.
\revision{DARM+DPRS showing more competitive speed on \textit{NYC} but approaching batch-based algorithm runtime on \textit{CHD} due to its larger state space.}

\begin{figure*}[t!]\centering
	\begin{tabularx}{\textwidth}{XX}
		\begin{minipage}[t]{.5\textwidth}
			\subfigure{
				\scalebox{0.45}[0.45]{\includegraphics{legend-eps-converted-to.pdf}}}\hfill
			\addtocounter{subfigure}{-1}\\[-3ex]
			\subfigure[][{\scriptsize Unified Cost (\textit{CHD})}]{
				\raisebox{-1ex}{\scalebox{0.19}[0.17]{\includegraphics{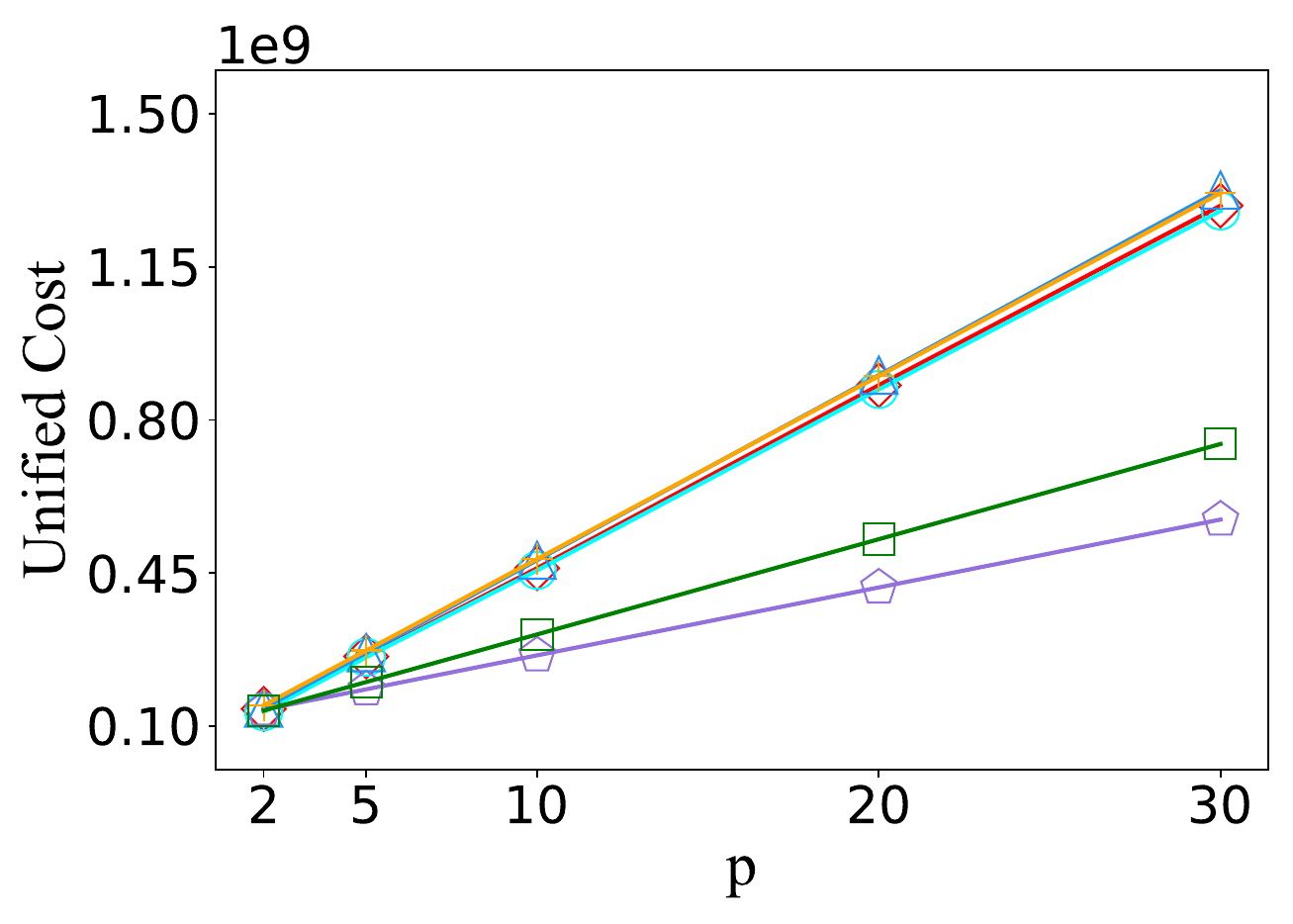}}}
				\label{subfig:uc_varing_pen_cd}}\hspace{-2ex}
			\subfigure[][{\scriptsize Unified Cost (\textit{NYC})}]{
				\raisebox{-1ex}{\scalebox{0.19}[0.17]{\includegraphics{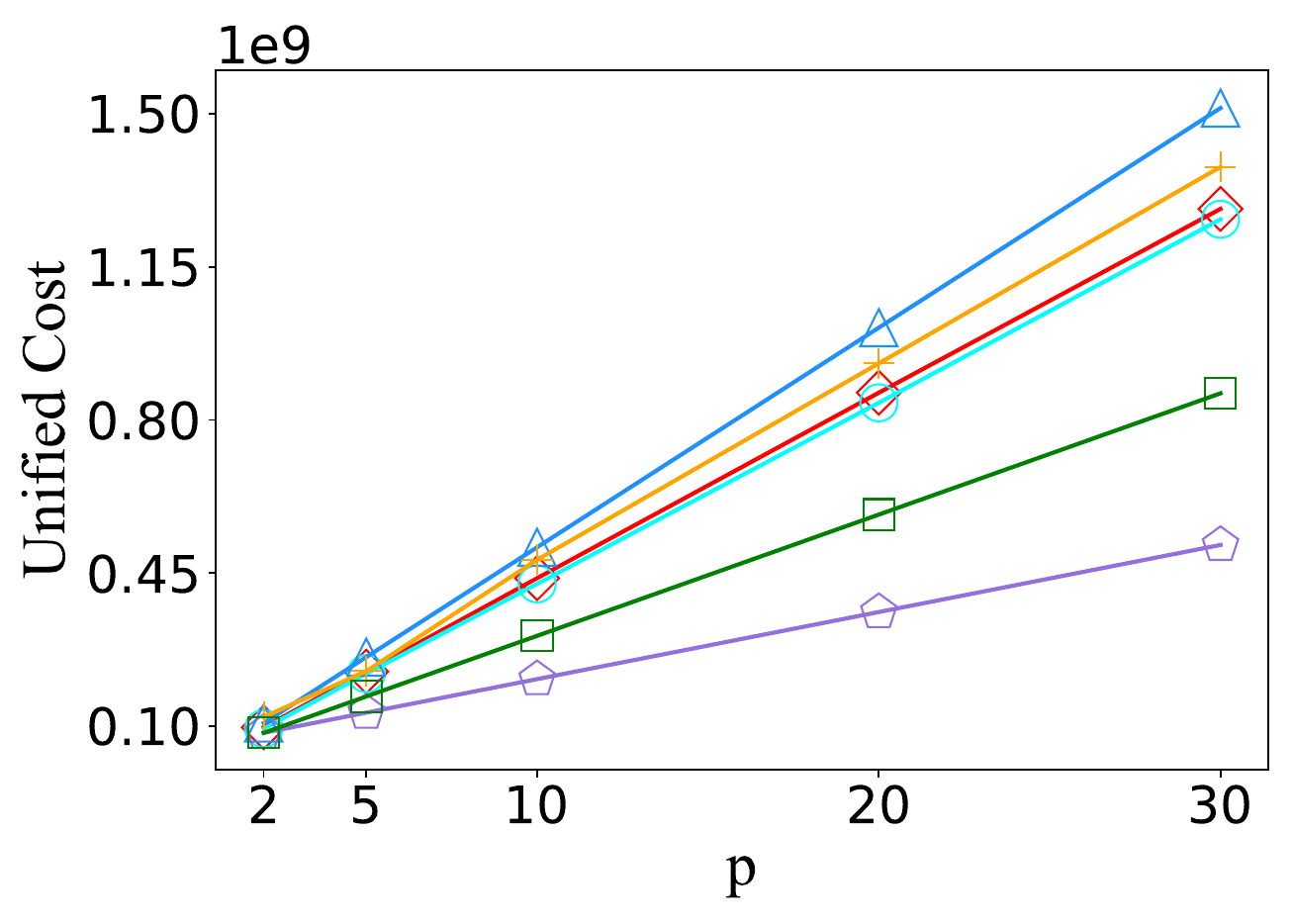}}}
				\label{subfig:uc_varing_pen_nyc}}\\[-2ex]
			\subfigure[][{\scriptsize Service Rate (\textit{CHD})}]{
				\raisebox{-1ex}{\scalebox{0.19}[0.17]{\includegraphics{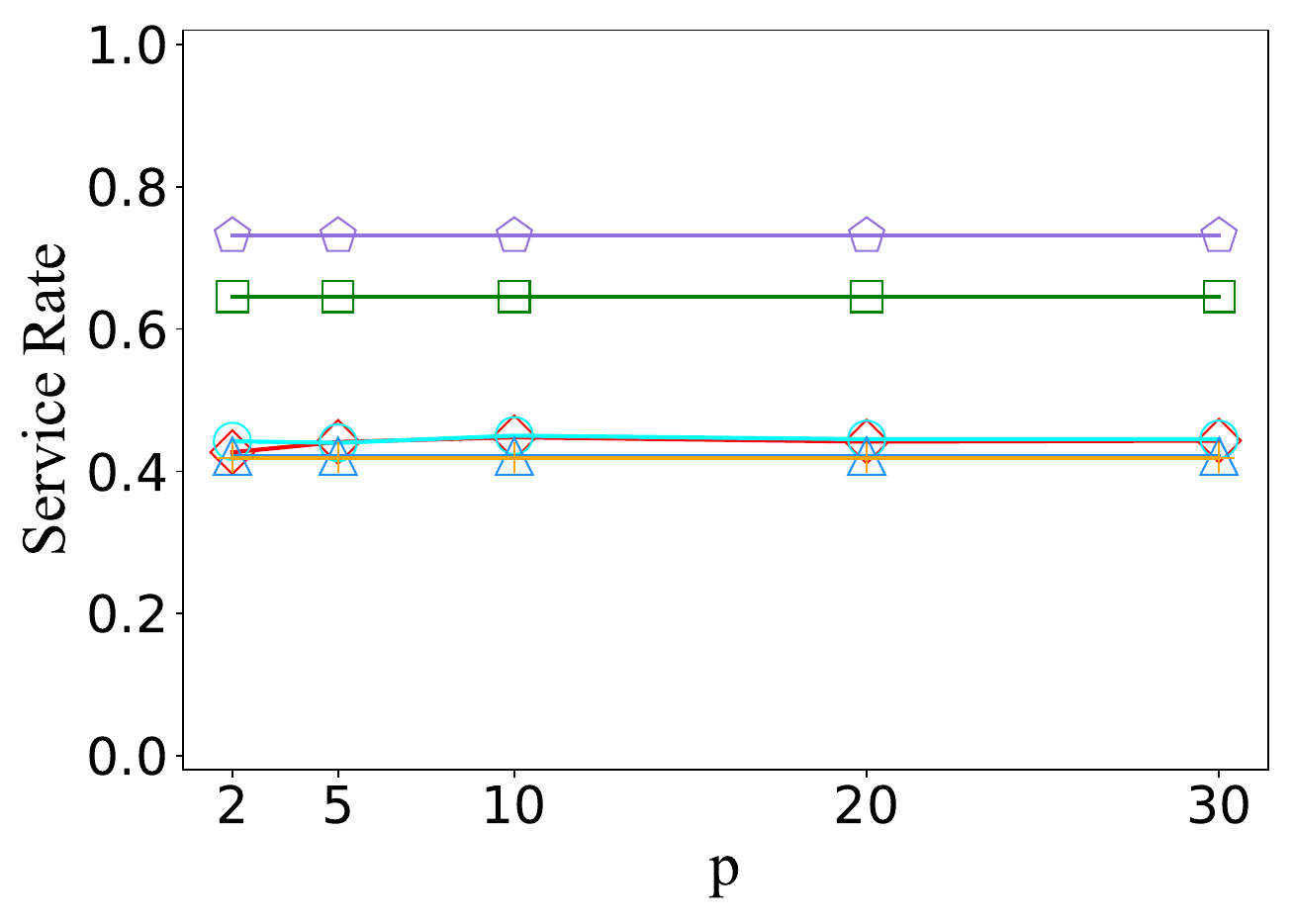}}}
				\label{subfig:sr_varing_pen_cd}}\hspace{-2ex}
			\subfigure[][{\scriptsize Service Rate (\textit{NYC})}]{
				\raisebox{-1ex}{\scalebox{0.19}[0.17]{\includegraphics{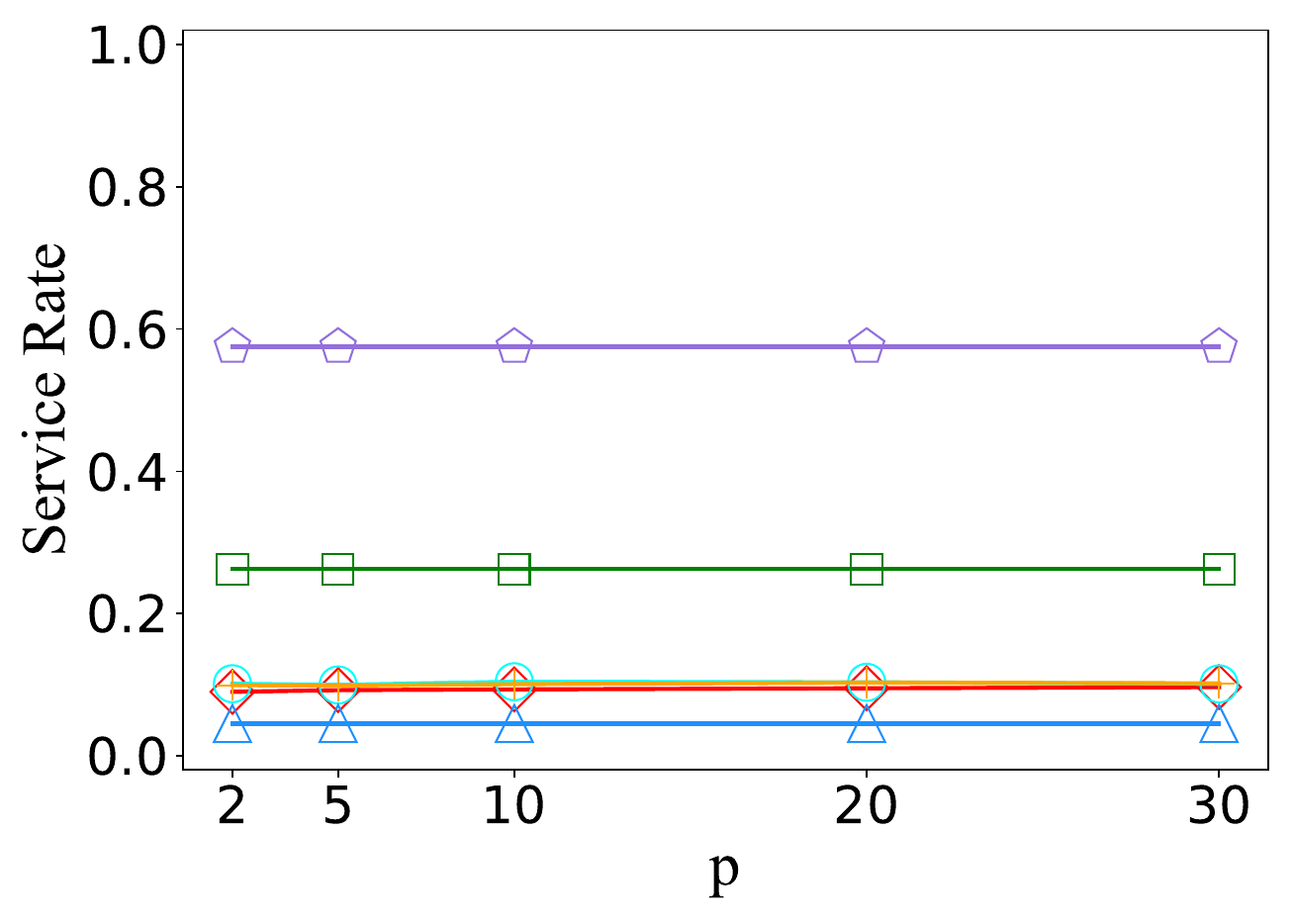}}}
				\label{subfig:sr_varing_pen_nyc}}\\[-2ex]
			\subfigure[][{\scriptsize Running Time (\textit{CHD})}]{
				\raisebox{-1ex}{\scalebox{0.19}[0.17]{\includegraphics{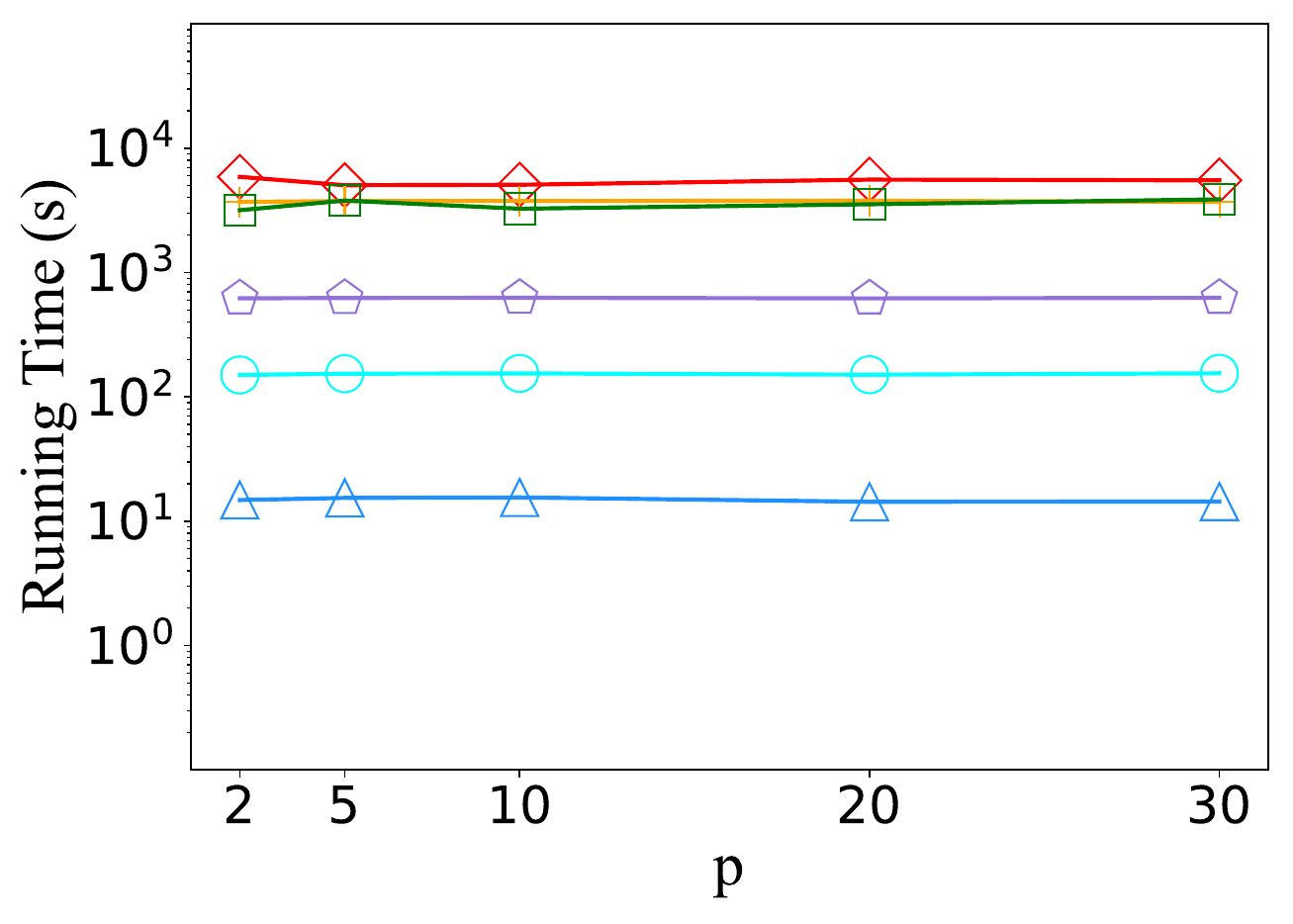}}}
				\label{subfig:tm_varing_pen_cd}}\hspace{-2ex}
			\subfigure[][{\scriptsize Running Time (\textit{NYC})}]{
				\raisebox{-1ex}{\scalebox{0.19}[0.17]{\includegraphics{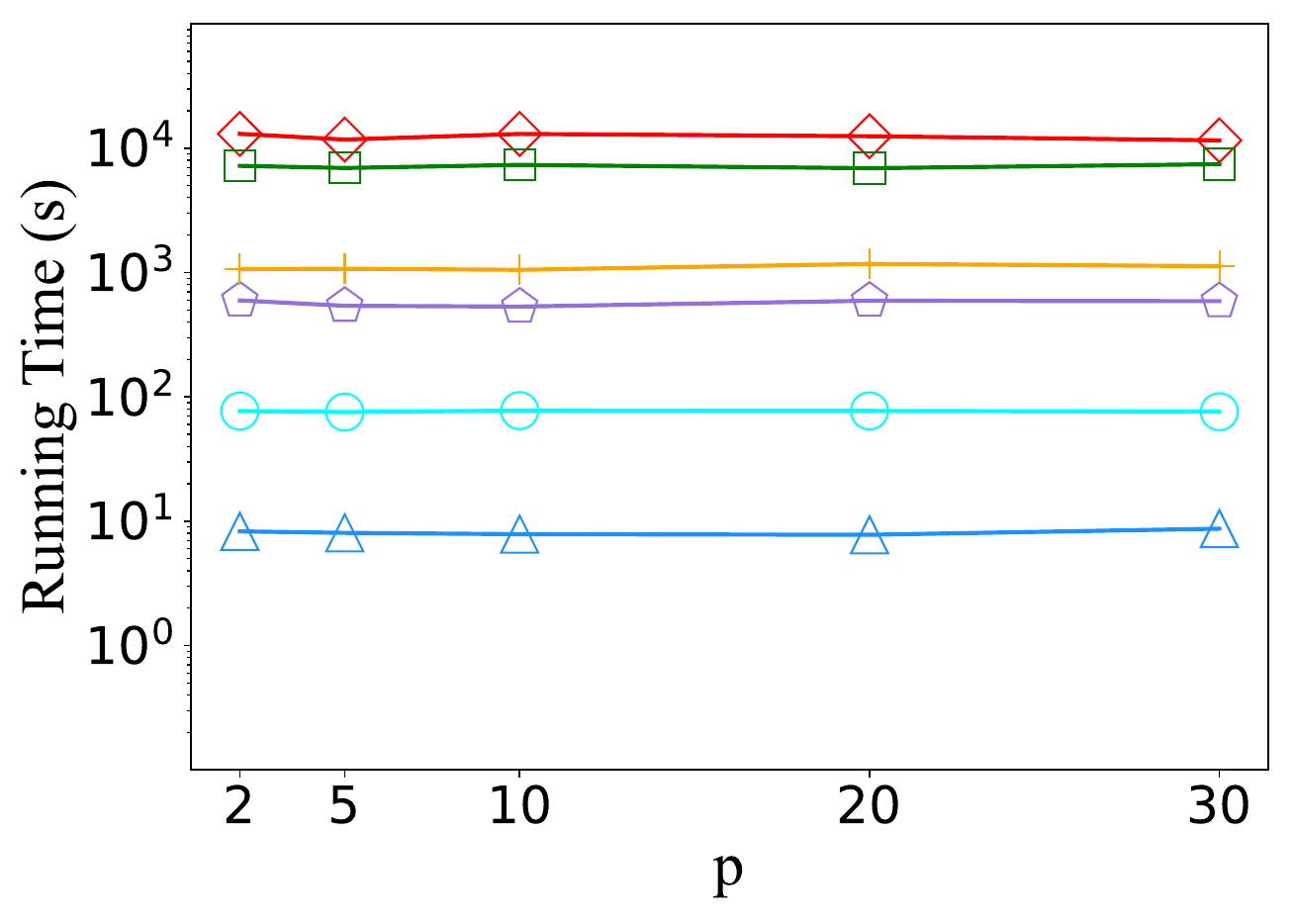}}}
				\label{subfig:tm_varing_pen_nyc}}
			\caption{ Performance of Varying $p_r$.}
			\label{fig:vary_pen}
		\end{minipage} &
		\begin{minipage}[t]{.5\textwidth}
			\subfigure{
				\scalebox{0.5}[0.5]{\includegraphics{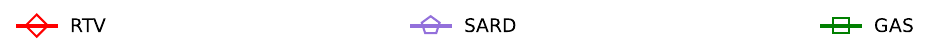}}}\hfill
			\addtocounter{subfigure}{-1}\\[-3ex]
			\subfigure[][{\scriptsize Unified Cost (\textit{CHD})}]{
				\raisebox{-1ex}{\scalebox{0.19}[0.17]{\includegraphics{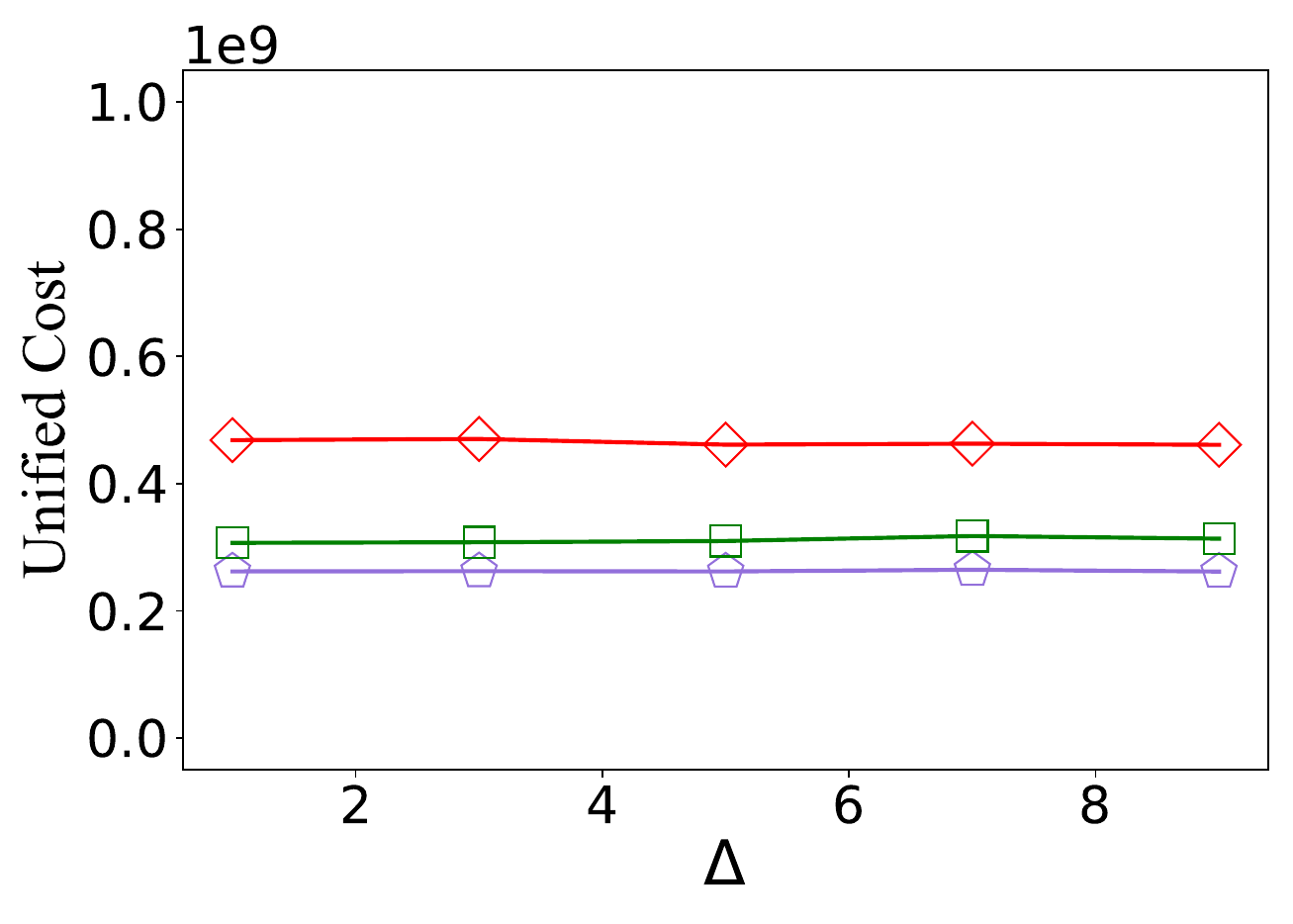}}}
				\label{subfig:uc_varing_batch_cd}}\hspace{-2ex}
			\subfigure[][{\scriptsize Unified Cost (\textit{NYC})}]{
				\raisebox{-1ex}{\scalebox{0.19}[0.17]{\includegraphics{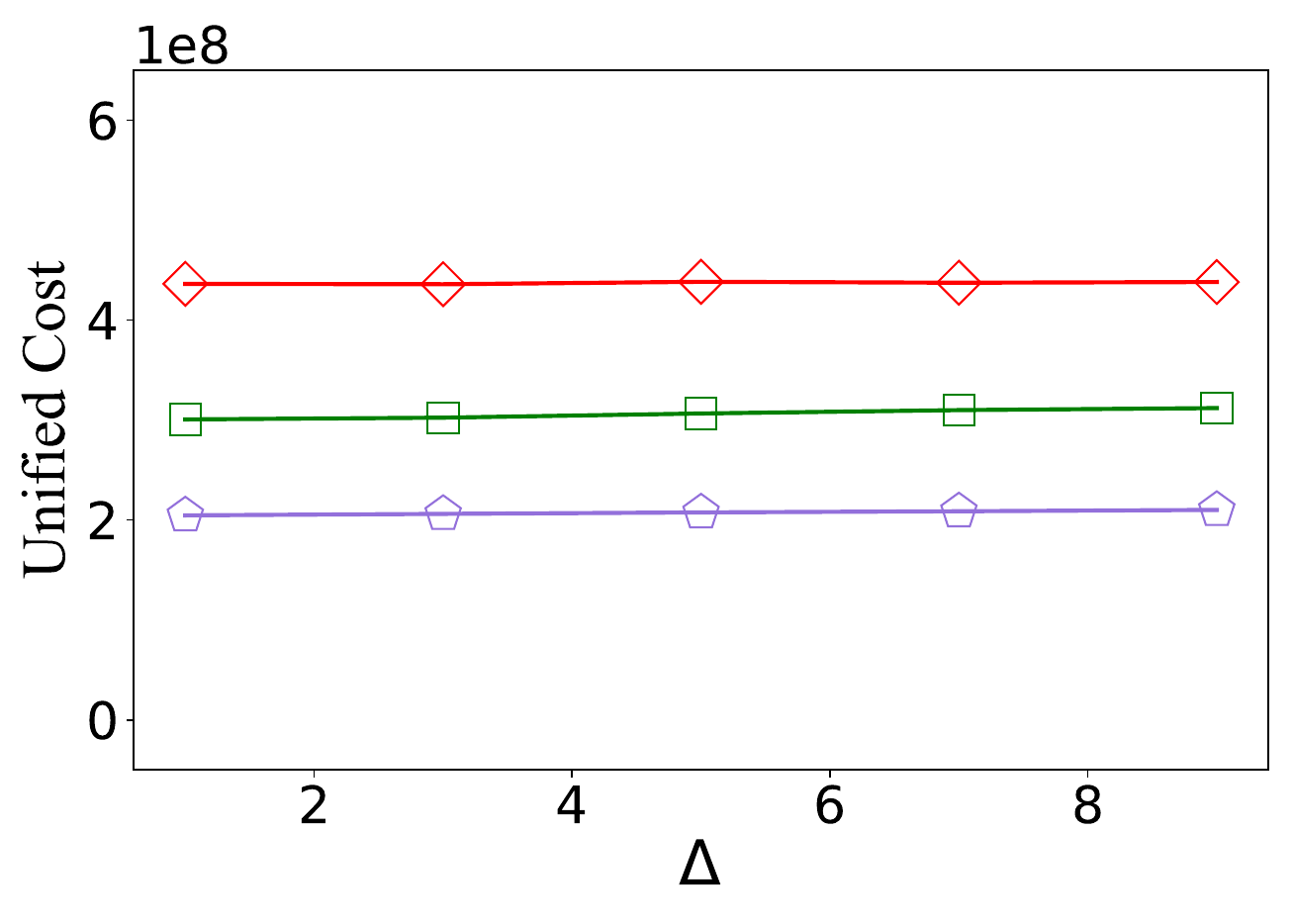}}}
				\label{subfig:uc_varing_batch_nyc}}\\[-2ex]
			\subfigure[][{\scriptsize Service Rate (\textit{CHD})}]{
				\raisebox{-1ex}{\scalebox{0.19}[0.17]{\includegraphics{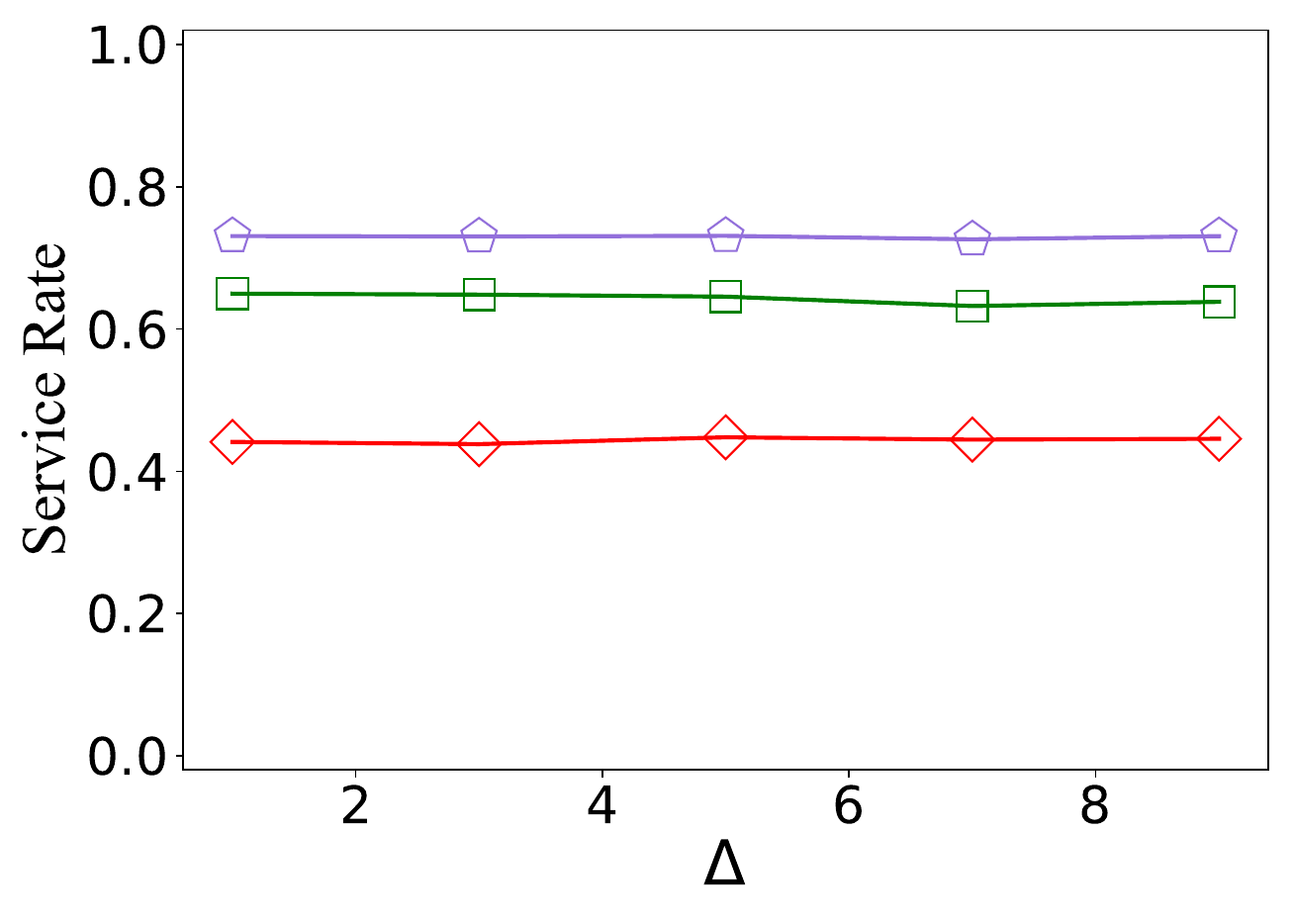}}}
				\label{subfig:sr_varing_batch_cd}}\hspace{-2ex}
			\subfigure[][{\scriptsize Service Rate (\textit{NYC})}]{
				\raisebox{-1ex}{\scalebox{0.19}[0.17]{\includegraphics{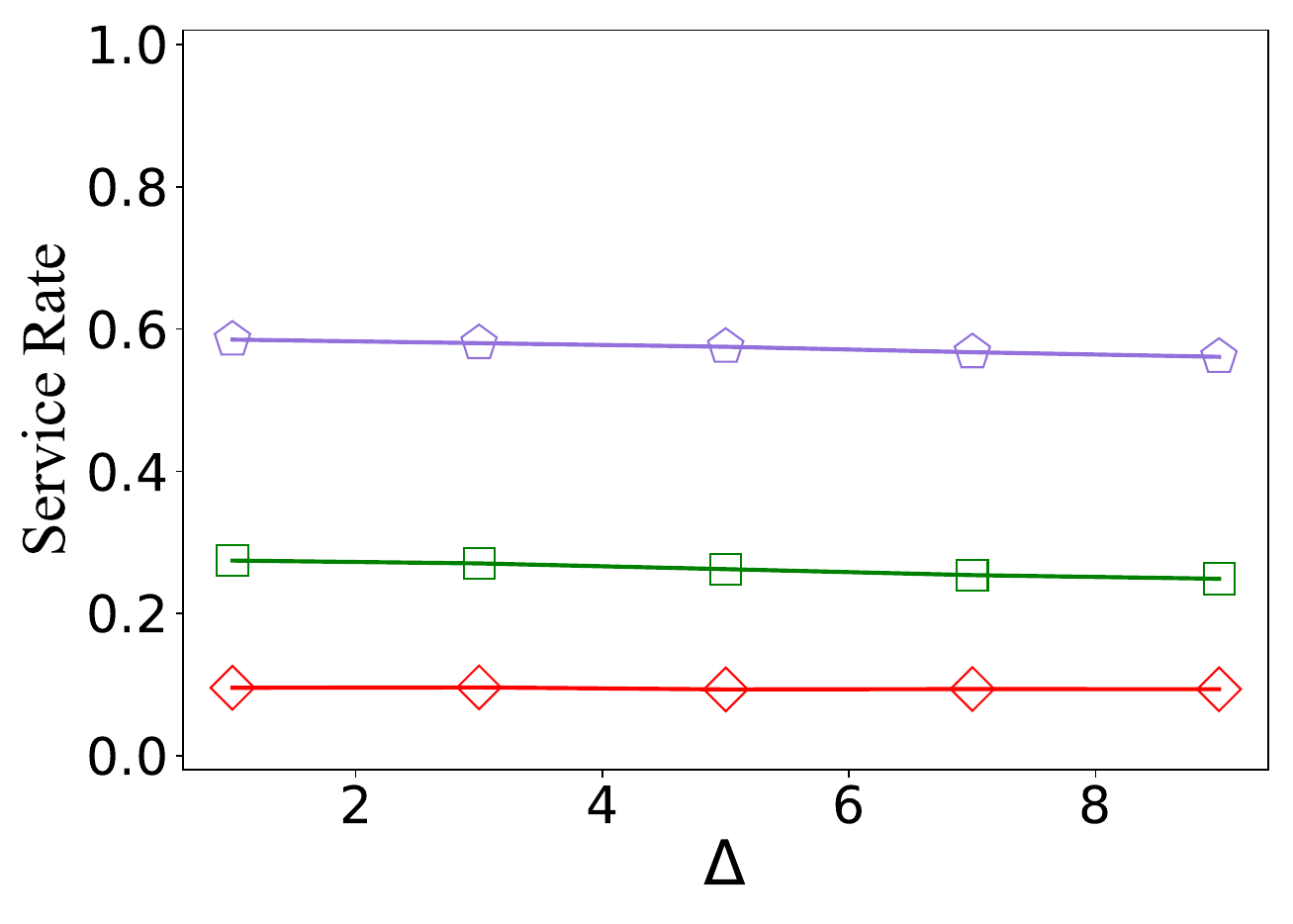}}}
				\label{subfig:sr_varing_batch_nyc}}\\[-2ex]
			\subfigure[][{\scriptsize Running Time (\textit{CHD})}]{
				\raisebox{-1ex}{\scalebox{0.19}[0.17]{\includegraphics{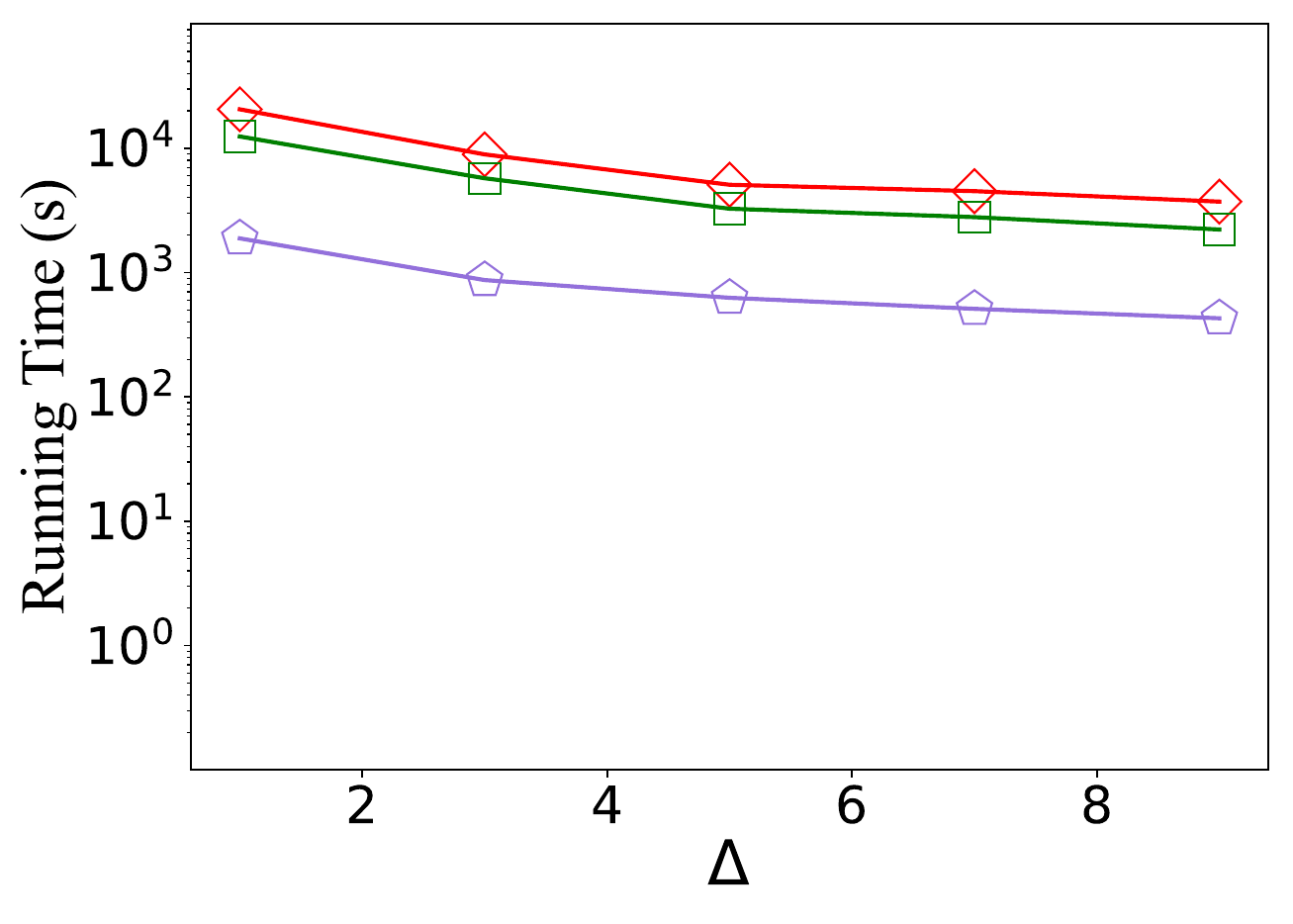}}}
				\label{subfig:tm_varing_batch_cd}}\hspace{-2ex}
			\subfigure[][{\scriptsize Running Time (\textit{NYC})}]{
				\raisebox{-1ex}{\scalebox{0.19}[0.17]{\includegraphics{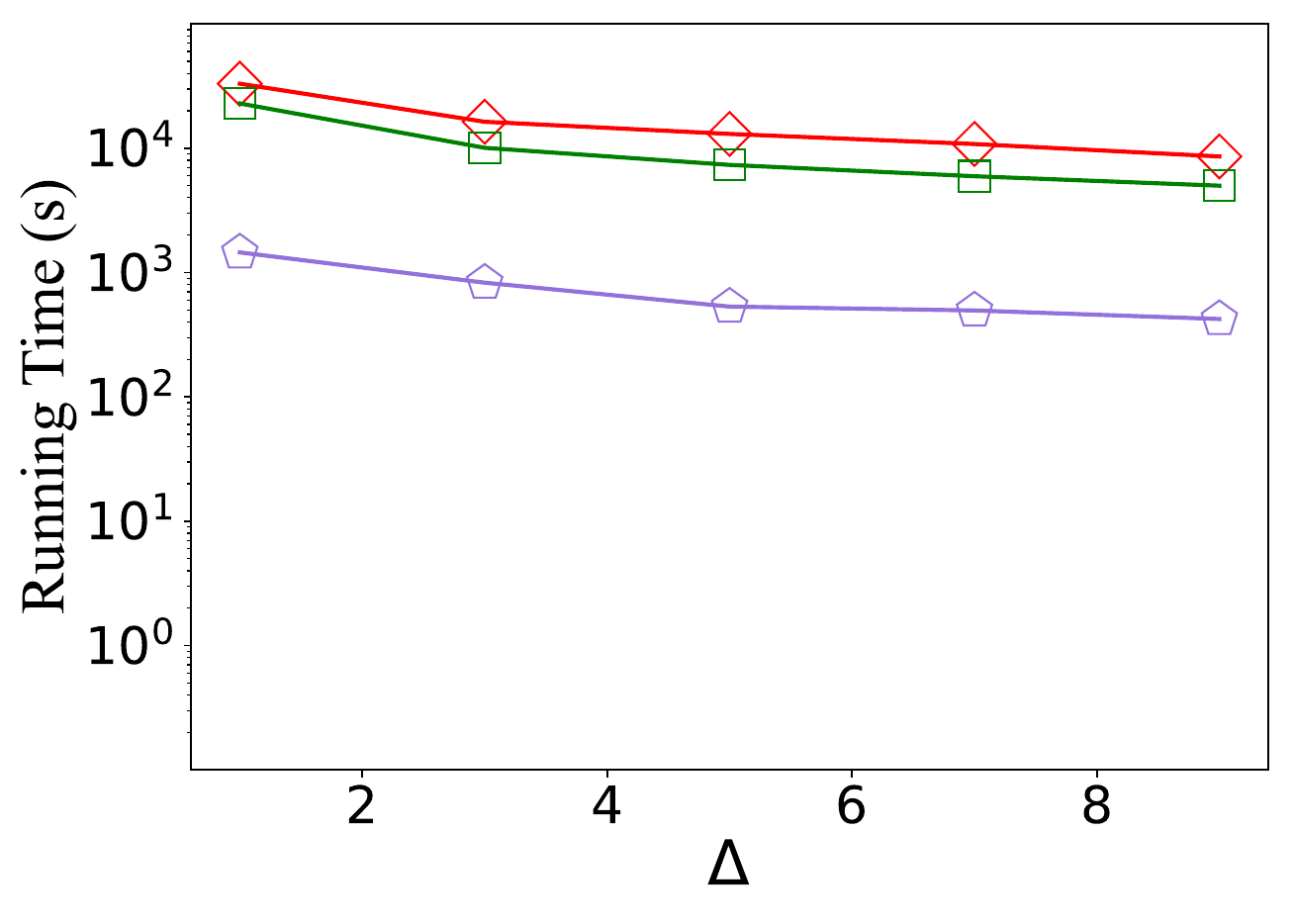}}}
				\label{subfig:tm_varing_batch_nyc}}
			\caption{ Performance of Varying $\Delta$.}
			\label{fig:vary_batch}
		\end{minipage}
	\end{tabularx}
\end{figure*}

\noindent\textbf{Effect of the number of requests.}
Figure~\ref{fig:vary_req} shows that as requests increase from 10K to 250K, the unified costs of all algorithms grow.
Meanwhile, {SARD} and {GAS} achieve smaller unified costs than other methods when the number of requests becomes larger.
For service rate, {SARD} improves the service rate by $41.97\%\sim52.99\%$ at most compared with {pruneGDP}.
Besides, compared with the state-of-art batch-based method {GAS}, {SARD} achieves up to $12.09\%$ and $31.27\%$ higher service rates on the two datasets, respectively.
\revision{{DARM+DPRS} demonstrates the advantages of predictive scheduling when the request volume is very small, but fails to handle a larger number of requests.
	All traditional algorithms initially exhibit a decline in service rate due to insufficient vehicles as requests increase. 
	Subsequently, the service rate grows as the increased number of requests improves the probability of sharing rides with other requests, then more requests can be served.}
In terms of running time, insertion-based methods are faster. 
Among batch-based methods (SARD, RTV and GAS), {SARD} is $4.21\times\sim 38.6\times$ faster than {GAS} and {RTV}.

\noindent\textbf{Effect of deadline.}
Figure~\ref{fig:vary_ddl} presents the impact of varying request deadlines by changing the deadline parameter $\gamma$ from $1.2$ to $2.0$.
With a strict deadline of $\gamma=1.2$, {SARD}'s service rate is similar to existing algorithms. 
This is due to a significant reduction in the number of candidate vehicles for each request, making it challenging to achieve noticeable performance improvements through grouping strategies in batch mode.
However, as the deadline increases, the superiority of {SARD} and {GAS} gradually becomes apparent. 
\revision{TicketAssign+ increases contention with higher deadlines, boosting service rate while increasing the gap in runtime compared to pruneGDP on the \textit{NYC} dataset}
When the deadline is $1.8\times$, {SARD}'s service rate exceeds $90\%$, up to $37.42\%$ higher than that of other algorithms on the \textit{CHD} dataset.
The superiority of {SARD} is more evident in the \textit{NYC} dataset, where its service rate is up to $83.98\%$ higher than that of {pruneGDP}.
{GAS} operates less efficiently on the \textit{NYC} dataset with $\gamma=2.0$, mainly because there are more request combinations as the deadline is relaxed. 
Additionally, {GAS} considers all combinations of requests and schedules for almost every vehicle. 
In {SARD}, the requests are proposed to different vehicles more decentralized in each round and only go to the next round of enumeration after being discarded.
Thus, the time cost for the combination enumeration of each vehicle is significantly reduced in SARD, benefiting from our ``proposal-acceptance'' execution strategy. 
{RTV} results for $\gamma\geq 1.8$ on the \textit{NYC} dataset are not presented because the constraints of {RTV-Graph} exceed the limit of \textit{glpk}~\cite{GLPK}. 
{SARD} achieves the best unified cost, saving up to $48.03\%$ and $73.16\%$ compared to others on two datasets.
For the running time, SARD is $1.83\times$ to $72.68\times$ faster than {RTV} and {GAS}.

\noindent\textbf{Effect of vehicle's capacity constraint.}
Figure~\ref{fig:vary_cap} illustrates the results of varying vehicle capacity from 2 to 6.
{SARD} and {GAS} save at least $30.87\%$ in unified cost compared to other algorithms.
For service rate, {SARD} is the best of all tested algorithms ($7.58\%\sim 53.39\%$ higher service rate than other tested algorithms).
As for running time, {SARD} is also the fastest among batch methods (RTV, GAS, SARD), being $5.17\times\sim 17.52\times$ faster than {RTV} and {GAS}.

\noindent\textbf{Effect of penalty.}
Figure~\ref{fig:vary_pen} represents the effect of varying the penalty coefficient from 2 to 30.
Most methods' service rates are unaffected by the penalty change.
Since {pruneGDP}, \revision{{TicketAssign+}, {DARM+DPRS},} {GAS} and {SARD} take distance, group profit, and shareability loss as indicators in greedy strategies in the assignment phase, the penalty coefficient only affect their scores in the unified cost. 
However, {RTV} utilized the penalty coefficient in its linear programming (LP) constraint matrix, affecting LP results only when the penalty is small. 
For large penalties, the effect is negligible.
Unified cost in all algorithms is proportional to the penalty coefficient.
{SARD} leads in both datasets, with a service rate increase of $8.55\%\sim 52.99\%$.
Similarly, since the penalty coefficient does not affect the assignment phase, the execution time remains stable across datasets.

\noindent\textbf{Effect of the batching period.}
Figure~\ref{fig:vary_batch} represents the results of varying the batching period from 1 to 9 seconds for batch-based methods.
The varying batch time affects the batch-based algorithms in terms of unified cost and service rate.
{SARD} performs the best with the varying the batching period, which saves up to $44.18\%$ and $53.08\%$ in unified cost and increases the service rate by up to $29.2\%$ and $48.96\%$ in two datasets, respectively. 
For the running time, since the number of execution rounds will decrease when the period of each batch increases, the running times of these methods decrease. 
{SARD} is $5.17\times\sim 24.47\times$ faster than {RTV} and {GAS}.

\noindent\textbf{Discussion.}
(1) \textbf{Running Time}. Online methods pruneGDP and \revision{TicketAssign+} have a complexity of $O(n^2)$, making them the fastest in the experiments. 
\revision{The running time of the reinforcement learning method DARM+DPRS hinges primarily on the size of its scheduled state space and the complexity of its predictive model structure.}
RTV, GAS, and our SARD are batch-based. 
The time complexity of RTV is $O((m|G|)^c)$, where $m$ is the number of vehicles, $|G|$ is the number of possible combinations of $n$ tasks in the batch, notated as $n^c$. 
GAS's complexity is $O(T_s+mT_c)$, with $T_s$ for building an additive index (as $O(n^c)$), and $T_c$ for greedily searching the additive index (as O($n^c$)). 
Our SARD needs $O(cn^2+mn^c)$, where $n$ is the numbers of requests. 
Simplified, their complexities are $O(m^c n^2c)$, $O(mn^c)$, and $O(mn^c)$, respectively. 
GAS and SARD outperform RTV. 
Unlike GAS, our SARD enumerates combinations only among proposed requests per vehicle, resulting in smaller $n$ than GAS. 
Despite multiple propose-acceptance rounds, SARD's small $n$ makes it much faster than GAS.
(2) \textbf{DataSet}. In the \textit{NYC} dataset, requests per unit time are about double those of \textit{CHD}. 
Hence, in the same batch time, algorithms using combination enumeration (i.e., SARD and GAS) can get more candidate schedules and perform better in \textit{NYC}. 
In addition, \textit{NYC}'s more compact road network (with only half the nodes) and concentrated requests increase sharing opportunities.
Thus, SARD performs better in \textit{NYC} than \textit{CHD} in most experiments.
\revision{DARM+DPRS faces a greater challenge on the \textit{CHD} dataset due to its larger state space.}
(3) \textbf{Scalability}. For SARD's scalability, multi-threading can speed up the Shareability Graph building and acceptance stage as each vehicle decides independently. 
In practice, SARD can be implemented with streaming distribution computation engines (e.g., Apache Flink \cite{flink}) and apply a tumbling window strategy to the distribution environment.

\noindent\textbf{Summary of the experimental study:} 
\begin{itemize}[leftmargin=*]
	\item The batch-based methods (i.e., RTV, GAS and SARD) can get less unified costs and higher service rates than the online-based methods (i.e., pruneGDP, \revision{DARM+DPRS and TicketAssign+}).
	For instance, the service rate of {SARD} can be $50\%$ higher than that of other tested methods.
	\item {SARD} runs up to 72.68 times faster than the other batch-based methods.
	For example, in Figure \ref{subfig:tm_varing_ddl_nyc}, {SARD} handles \textit{NYC} requests in $8$ minutes, while {GAS} takes $9$ hours.
\end{itemize}

\section{Related Work}
\label{sec:related}
Ridesharing research dates back to the dial-a-ride problem~\cite{darp_03}. 
Ridesharing services plan routes for each vehicle to fulfill requests while optimizing various objectives like minimizing travel time~\cite{pnas,T-share,Kinetic_tree}, maximizing served requests~\cite{Online-ridesharing,flexible-realtime}, or both~\cite{luo2020}.
Liu \textit{et al.}~\cite{liu2020} proposed probabilistic routing for taxis to encounter suitable offline passengers.
Existing works are either online-based~\cite{TongZZCYX18,2019Last,insertion,cheng2017utility,demand_aware} or batch-based~\cite{simple_better,zheng2019auction,zheng2018order,pnas} depending on whether the requests are known in advance.

The \textit{insertion} operation is a commonly used core operation~\cite{TongZZCYX18, 2019Last, insertion, demand_aware, Kinetic_tree} in online-based methods. 
It inserts the origin-destination pair of a new request into the current route.
Chen \textit{et al.}~\cite{chen2018} utilized grid indexing to filter vehicles and applied deadline constraints pruning strategy to reduce shortest path distance calculations when inserting requests into the Kinetic Tree~\cite{Kinetic_tree}.
Tong \textit{et al.}~\cite{TongZZCYX18} minimized travel time with a dynamic programming method and proposed a linear-time insertion, applied in~\cite{2019Last}.
Xu \textit{et al.}~\cite{insertion} sped up insertion to linear time by leveraging a segment-based DP algorithm and Fenwick tree.
Wang \textit{et al.}~\cite{demand_aware} enhanced the results by considering insertion effects and demand prediction.

In batch-based methods, requests are grouped by similarity before assignment.
Zheng \textit{et al.}~\cite{zheng2018order} used bipartite matching to group similar riders and drivers.
Zeng \textit{et al.}~\cite{simple_better} developed an index called the \textit{additive tree} to simplify the process and used randomization techniques to improve the approximation ratio.
Alonso-Mora \textit{et al.}~\cite{pnas} explored cliques in the pairwise shareability graph and incrementally computed the optimal assignment using an integer linear program (ILP).
However, none of these methods prioritize batches by taking advantage of the shareable relationships between requests.

\section{Conclusion}
\label{sec:conclusion}
This paper studies the dynamic ridesharing problem with a batch-based processing strategy.
We introduce a graph-based shareability graph to reveal the sharing relationship between requests in each batch. 
With the structure information of the shareability graph, we measure the shareability loss of each request group.
We also propose a heuristic algorithm, \textit{SARD}, with a two-phase ``proposal-acceptance'' strategy. 
Experiments show that our method provides a better service rate, lower cost, and shorter running time compared to state-of-the-art batch-based methods~\cite{simple_better,pnas} on real datasets.

\section*{Acknowledgment}
Peng Cheng's work is partially supported by the National Natural Science Foundation of China under Grant No. 62102149. Lei Chen’s work is partially supported by National Key Research and Development Program of China Grant No. 2023YFF0725100, National Science Foundation of China (NSFC) under Grant No. U22B2060, the Hong Kong RGC GRF Project 16213620, RIF Project R6020-19, AOE Project AoE/E-603/18, Theme-based project TRS T41-603/20R, CRF Project C2004-21G, Guangdong Province Science and Technology Plan Project 2023A0505030011, Hong Kong ITC ITF grants MHX/078/21 and PRP/004/22FX, Zhujiang scholar program 2021JC02X170, Microsoft Research Asia Collaborative Research Grant, HKUST-Webank joint research lab and HKUST(GZ)-Chuanglin Graph Data Joint Lab.
Xuemin Lin is supported by NSFC U2241211 and U20B2046. Corresponding Author: Peng Cheng.

\balance

\bibliographystyle{ieeetr}

\bibliography{add}

\appendix

\subsection{Memory consumption}
Figure~\ref{fig:memory_usage} shows the memory usage of tested traditional algorithms under the default parameters. 
The online mode algorithms follow the first-come-first-serve mode and use less memory, while parallelization brings additional overhead of maintaining a lock for each worker.
In contrast, the batch mode algorithms need additional storage to store the combinations of requests in each batch (e.g., RTV-Graph for RTV, the additive index for GAS, shareability graph for SARD). 
Since RTV rely on an integer linear program, the memory usage in RTV is more than twice that in GAS and SARD. 
Moreover, SARD and GAS have similar costs for storing undirected, unweighted shareability graphs.

\begin{figure}[h!]\centering
	\subfigure{
		\scalebox{0.65}[0.65]{\includegraphics{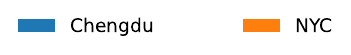}}}\hfill
	\addtocounter{subfigure}{-1}\\[-2ex]
	\subfigure{
		\scalebox{0.32}[0.32]{\includegraphics{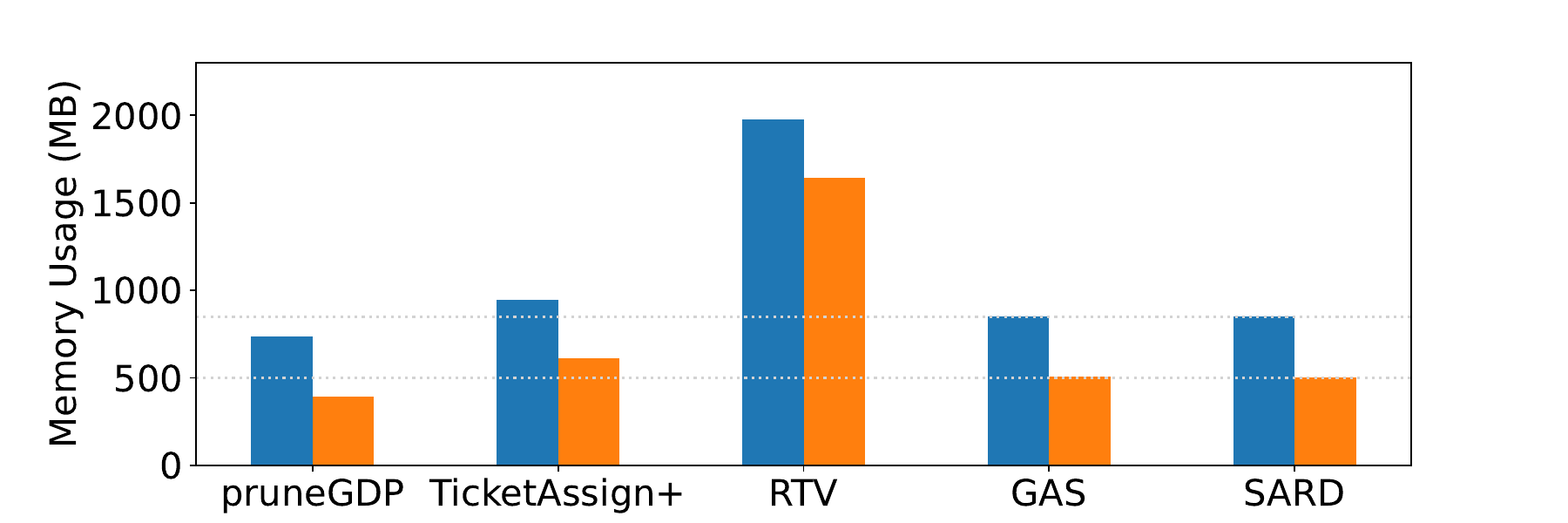}}}
	\caption{ Memory Consumption.}
	\label{fig:memory_usage}
\end{figure}

\subsection{Experimental Study on \textit{Cainiao} Dataset}
We conduct comprehensive experiments using the Cainiao~\cite{cainiao} dataset, a real-world dataset from a prominent delivery platform in Shanghai, China, published by LaDe~\cite{cainiaodataset}. 
This dataset encompasses millions of packages and provides detailed daily information on tasks and workers from the Cainiao platform throughout 2021. 
Unlike taxi datasets, the delivery dataset exhibits a more dispersed request distribution and features more generous deadlines, allowing for greater routing flexibility. 
Due to insufficient training data, we only report the results of traditional algorithms here.
Table~\ref{tab:cainiao_settings} presents the parameter settings, with default values in bold.

\begin{table}[h!]
	\begin{center}
		{\scriptsize 
			\caption{ Experimental Settings.} 
			\label{tab:cainiao_settings}
			\begin{tabular}{l|l}
				{\bf \qquad \qquad \quad Parameters} & {\bf \qquad \qquad \qquad Values} \\ \hline \hline
				the number, $n$, of requests  & 50K, 75K, \textbf{100K}, 125K, 150K\\
				the number, $m$, of vehicles  & 3K, 3.5K, 4K, 4.5K, \textbf{5K} \\
				the capacity of vehicles $c$ & 2, 3, \textbf{4}, 5, 6\\
				the deadline parameter $\gamma$ & 1.8, 1.9, \textbf{2.0}, 2.1, 2.2 \\
				the penalty coefficient $p_r$ & 2, 5, \textbf{10}, 20, 30 \\
				the batching time $\Delta$ (s) & 1, 3, \textbf{5}, 7, 9 \\
				the variance $\sigma$ &\textbf{0.0}, 0.5, 1.0, 1.5, 2.0 \\
				\hline
			\end{tabular}
		}
	\end{center}
\end{table}

\begin{figure*}[t!]\centering
	\subfigure{
		\scalebox{0.5}[0.5]{\includegraphics{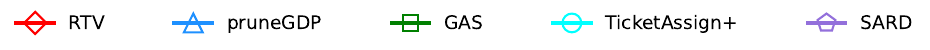}}
	}\hfill\\\vspace{-2ex}
	\addtocounter{subfigure}{-1}
	\subfigure[][{\scriptsize Unified Cost (\textit{Cainiao})}]{
		\scalebox{0.15}[0.14]{\includegraphics{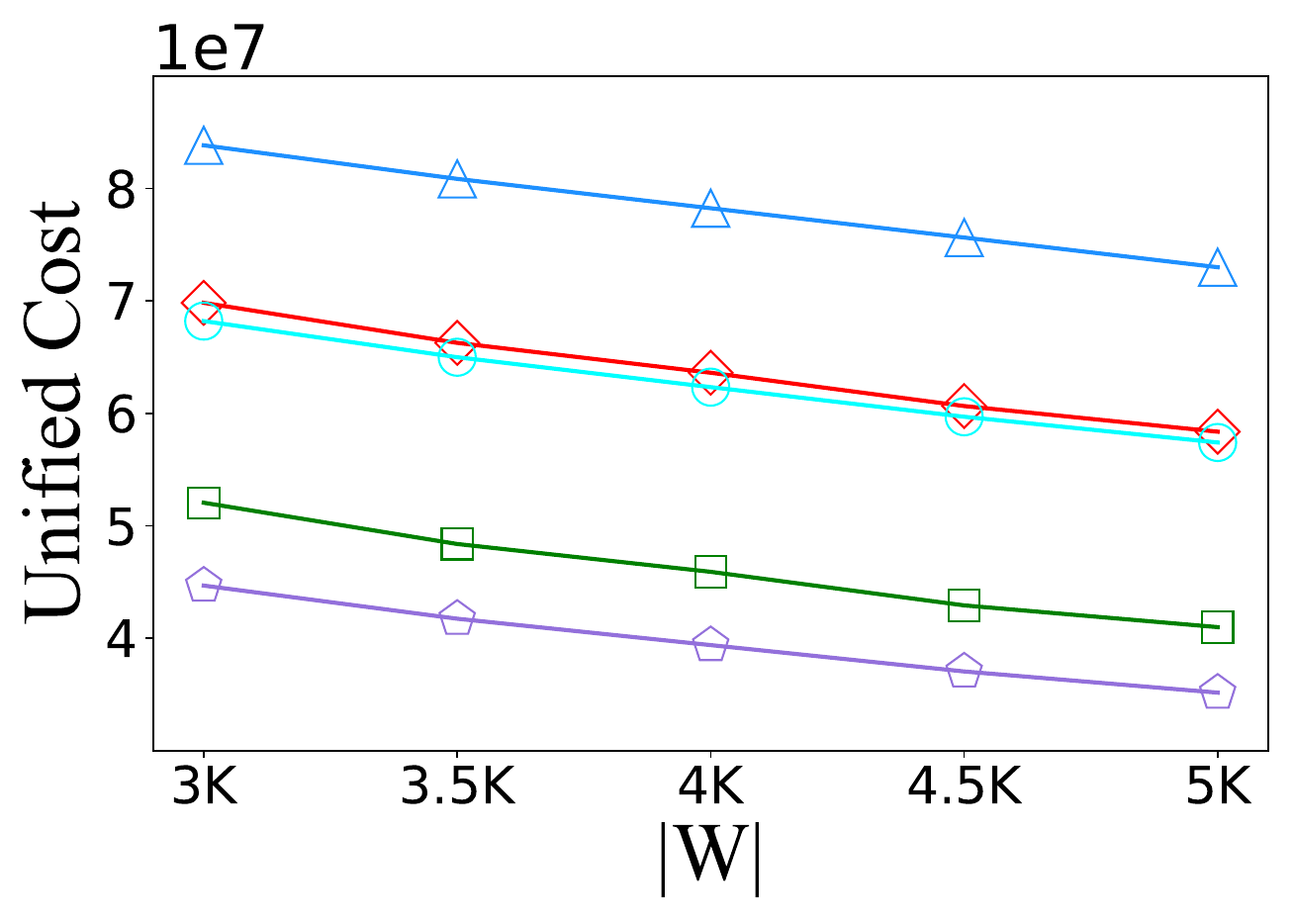}}
		\label{subfig:uc_varing_worker_sh}
	}\hspace{-2ex}
	\subfigure[][{\scriptsize Unified Cost (\textit{Cainiao})}]{
		\scalebox{0.16}[0.14]{\includegraphics{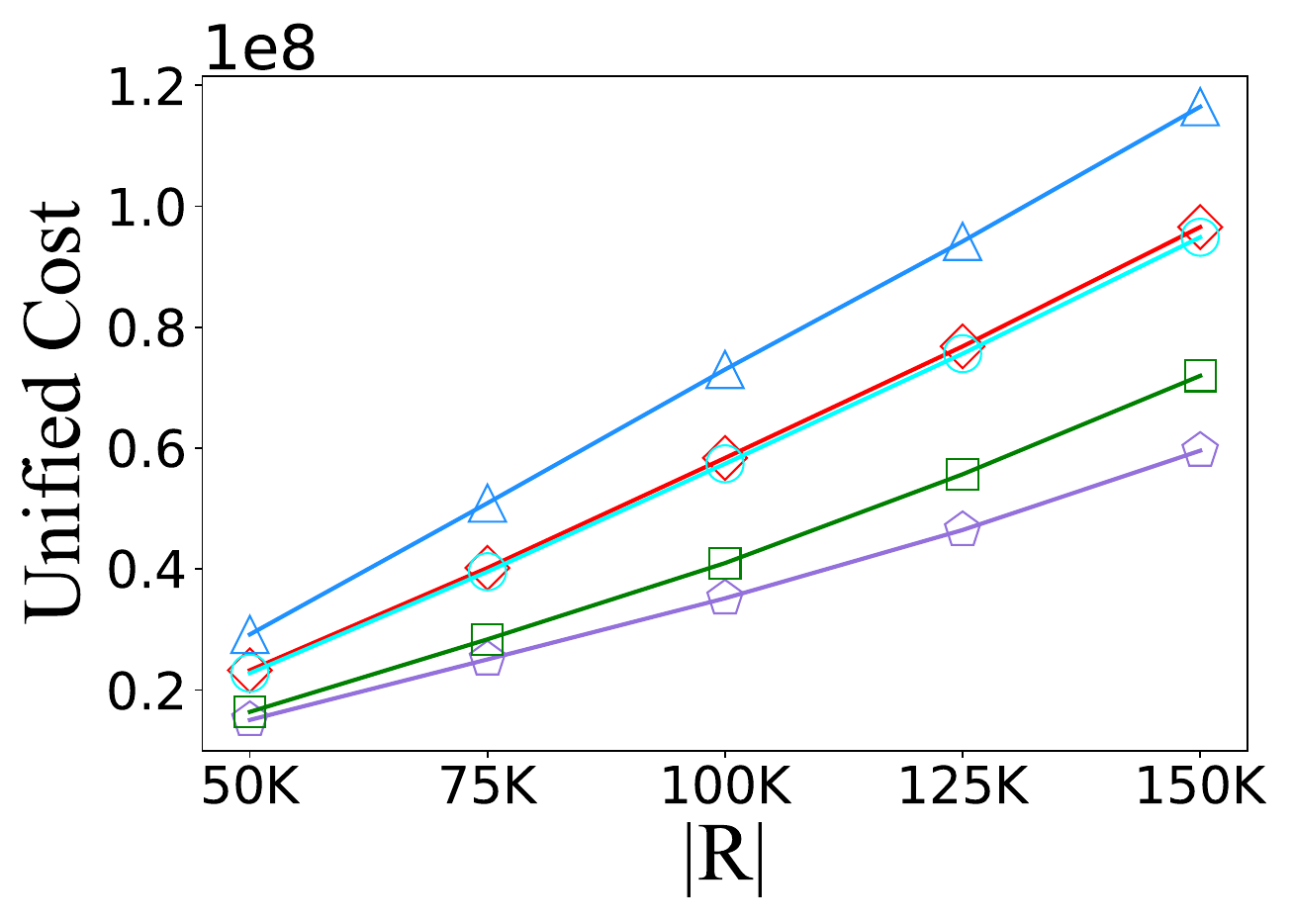}}
		\label{subfig:uc_varing_req_sh}
	}\hspace{-2ex}
	\subfigure[][{\scriptsize Unified Cost (\textit{Cainiao})}]{
		\scalebox{0.16}[0.14]{\includegraphics{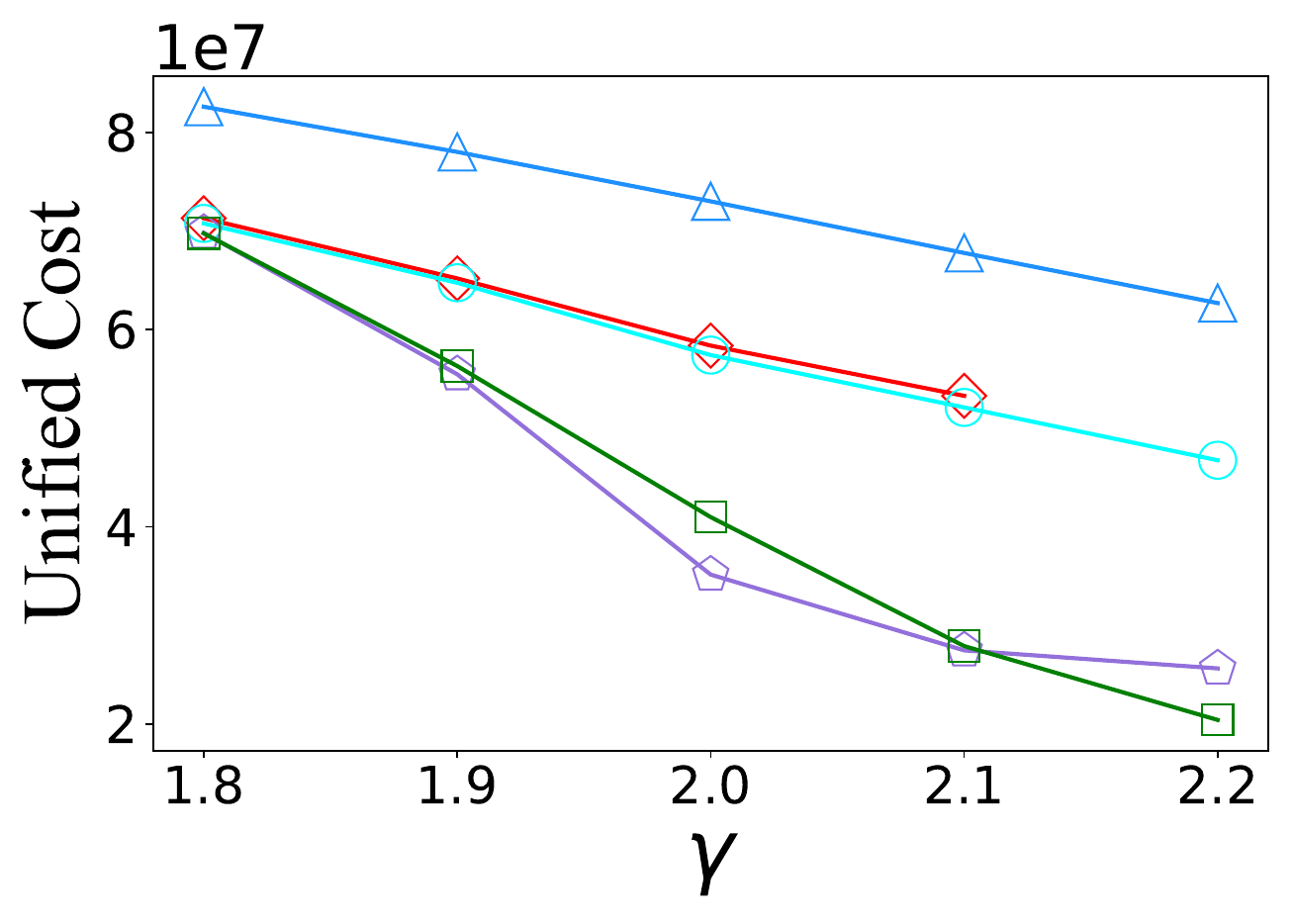}}
		\label{subfig:uc_varing_ddl_sh}
	}\hspace{-2ex}
	\subfigure[][{\scriptsize Unified Cost (\textit{Cainiao})}]{
		\scalebox{0.16}[0.14]{\includegraphics{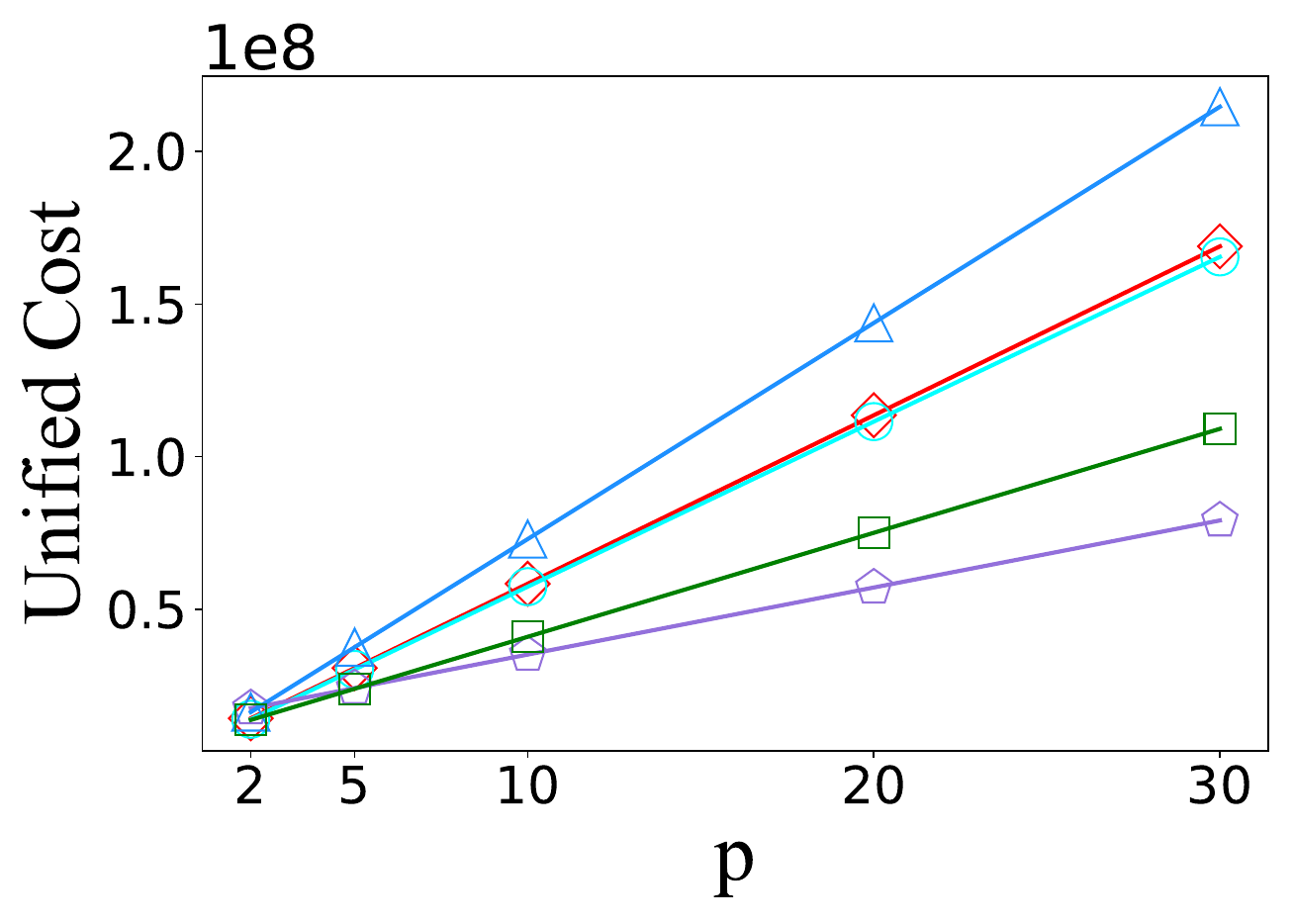}}
		\label{subfig:uc_varing_pen_sh}
	}\hspace{-2ex}
	\subfigure[][{\scriptsize Unified Cost (\textit{Cainiao})}]{
		\scalebox{0.16}[0.14]{\includegraphics{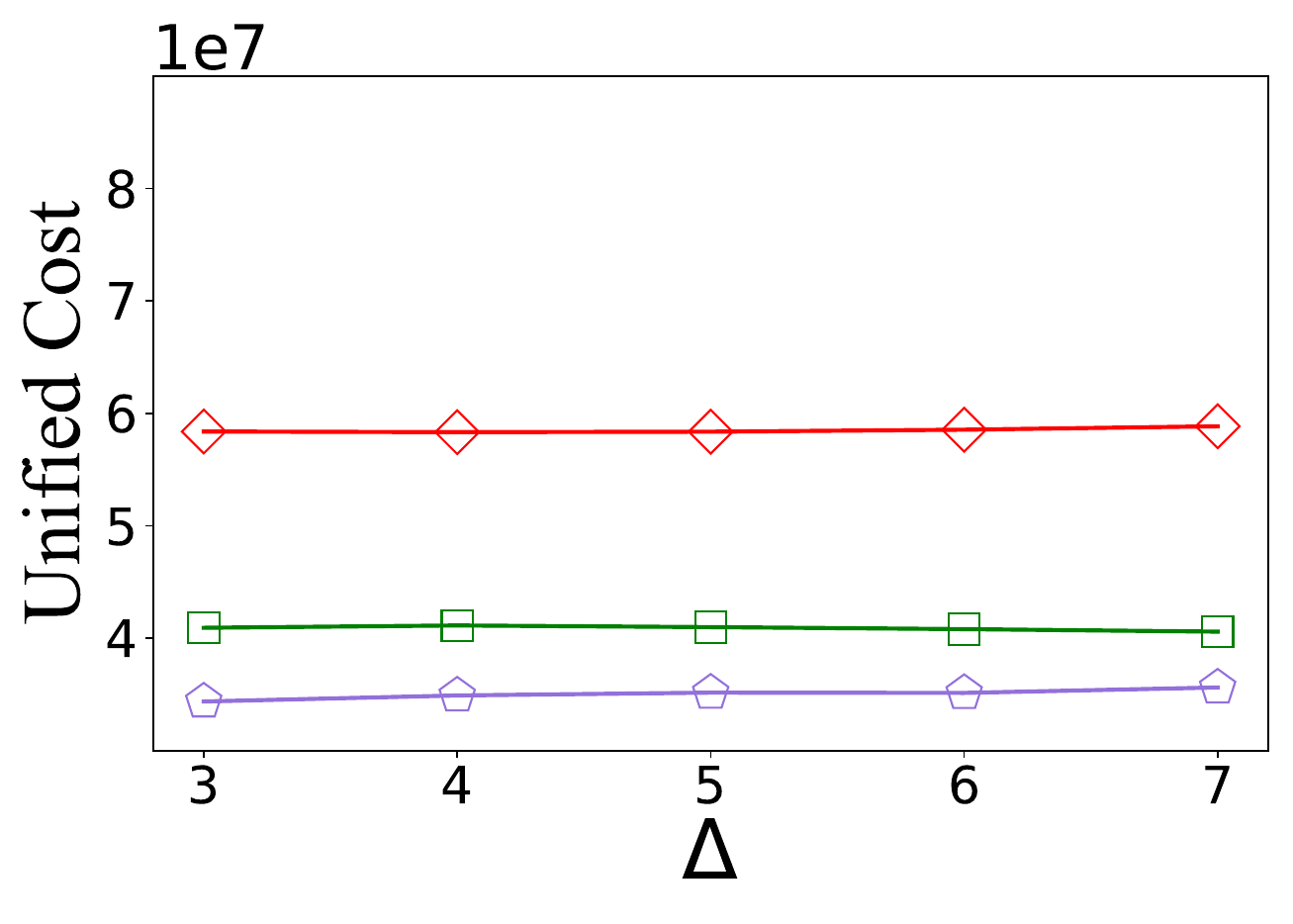}}
		\label{subfig:uc_varing_batch_sh}
	}\\\vspace{-2ex}
	
	\subfigure[][{\scriptsize Service Rate (\textit{Cainiao})}]{
		\scalebox{0.158}[0.14]{\includegraphics{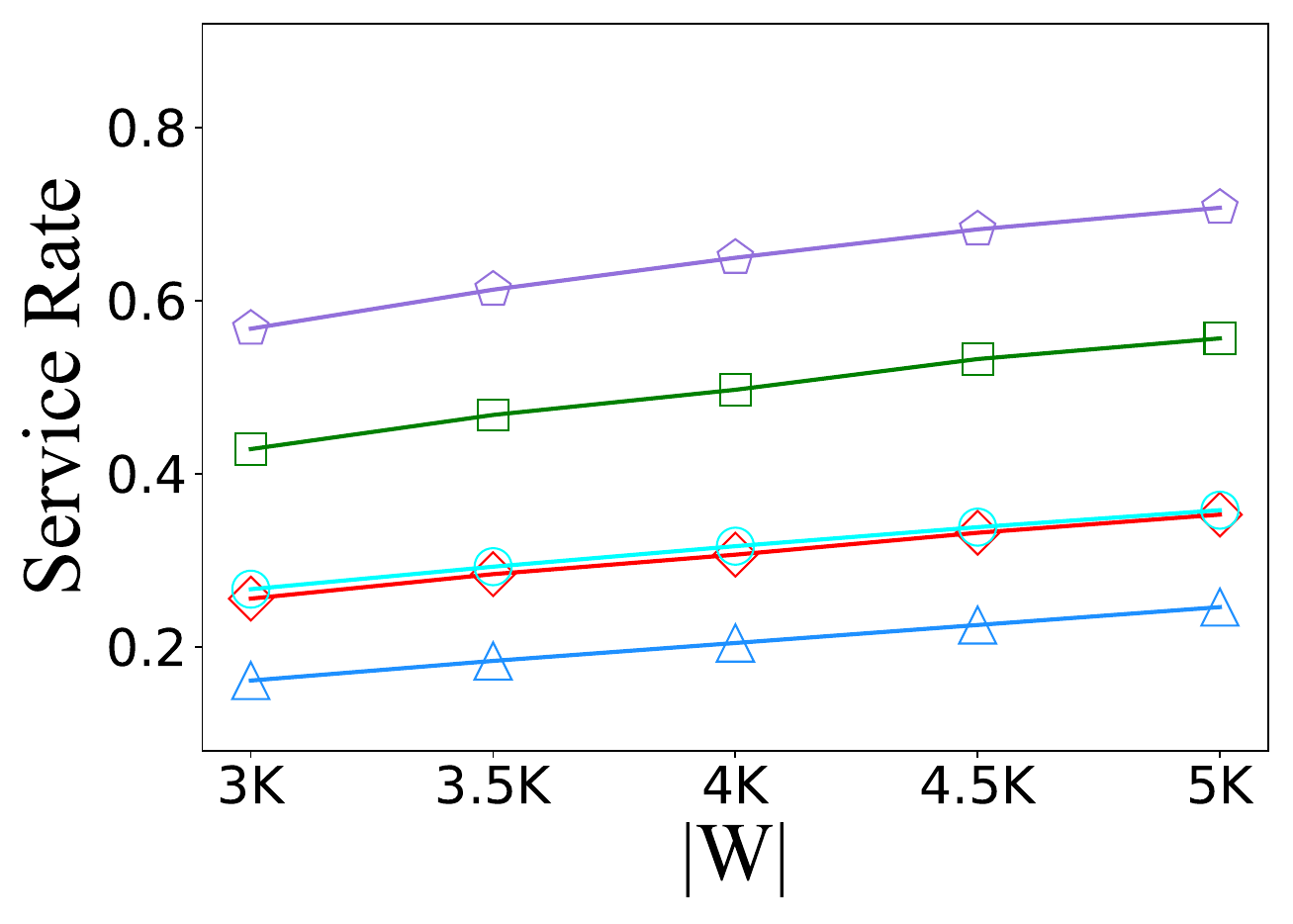}}
		\label{subfig:sr_varing_worker_sh}
	}\hspace{-2ex}
	\subfigure[][{\scriptsize Service Rate (\textit{Cainiao})}]{
		\scalebox{0.158}[0.14]{\includegraphics{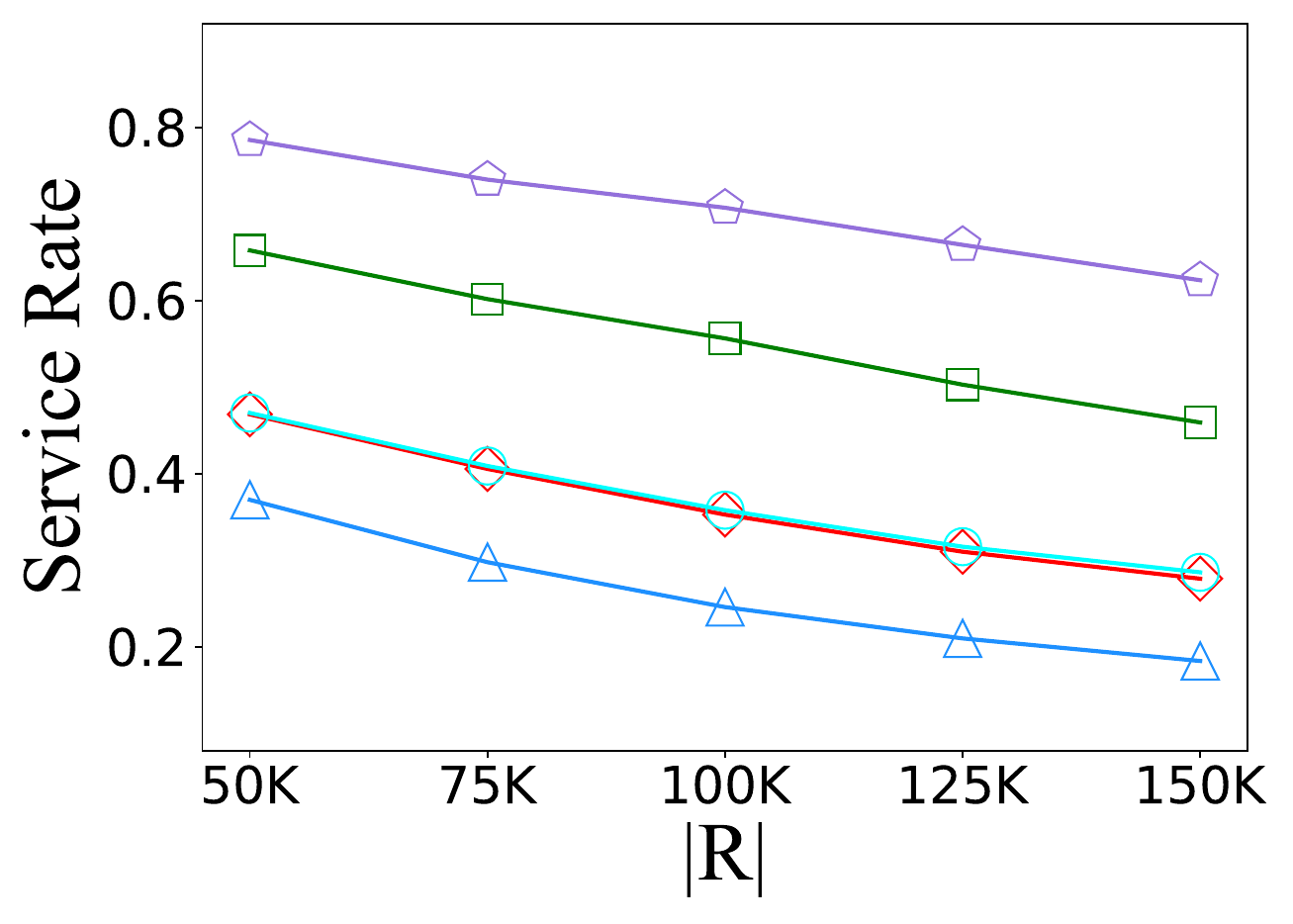}}
		\label{subfig:sr_varing_req_sh}
	} \hspace{-2ex}
	\subfigure[][{\scriptsize Service Rate (\textit{Cainiao})}]{
		\scalebox{0.158}[0.14]{\includegraphics{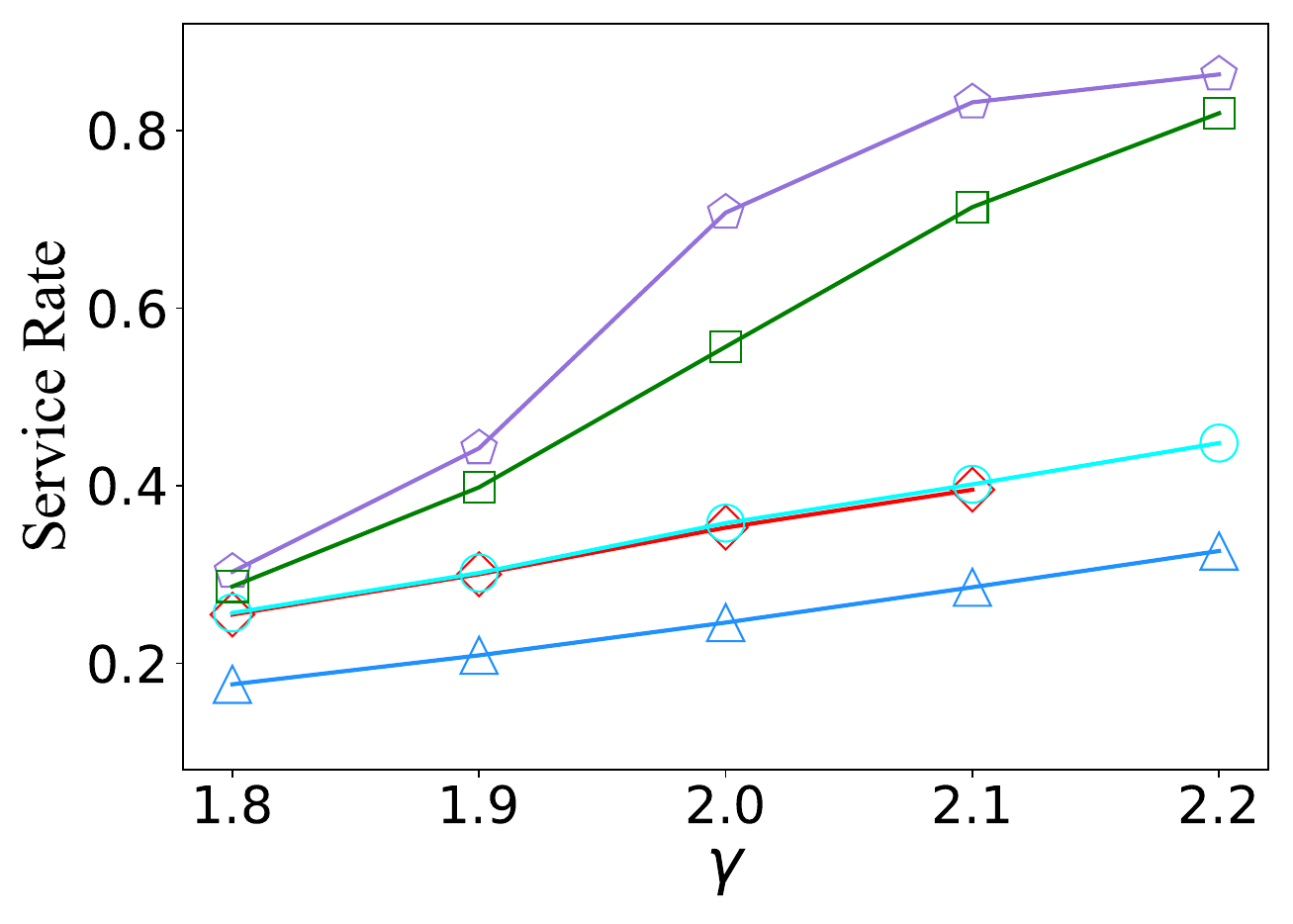}}
		\label{subfig:sr_varing_ddl_sh}
	}\hspace{-2ex}
	\subfigure[][{\scriptsize Service Rate (\textit{Cainiao})}]{
		\scalebox{0.158}[0.14]{\includegraphics{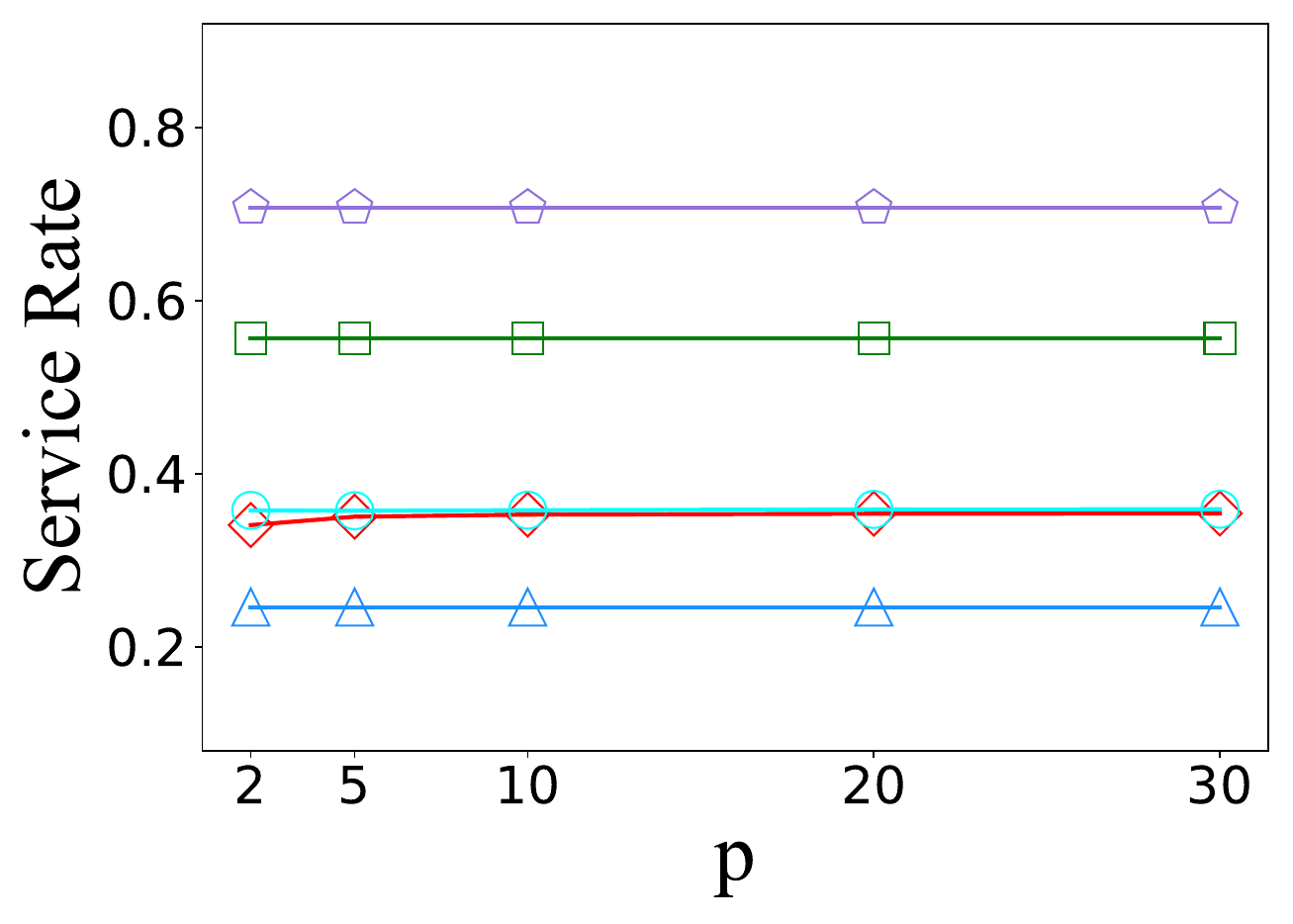}}
		\label{subfig:sr_varing_pen_sh}
	}\hspace{-2ex}
	\subfigure[][{\scriptsize Service Rate (\textit{Cainiao})}]{
		\scalebox{0.158}[0.14]{\includegraphics{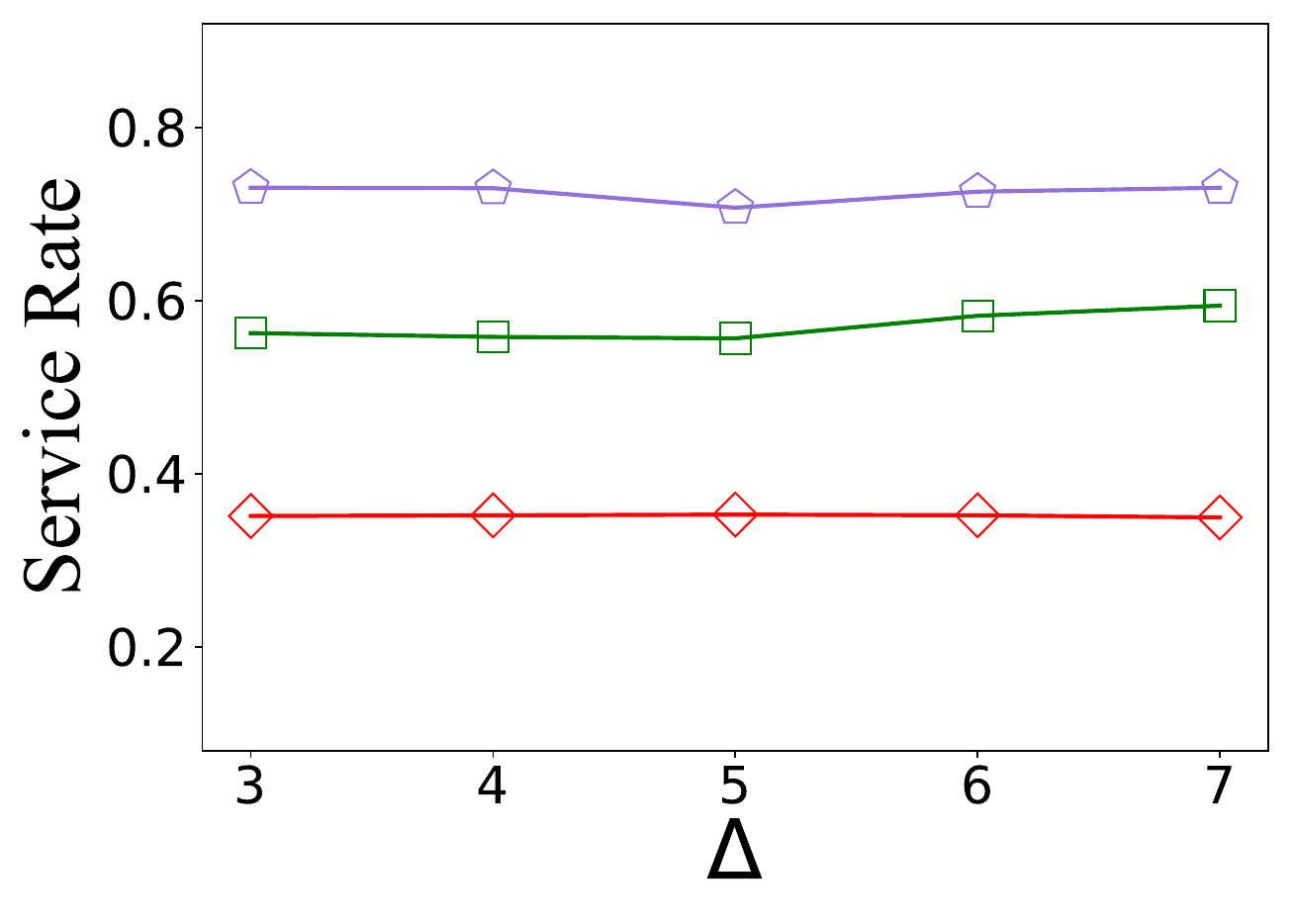}}
		\label{subfig:sr_varing_batch_sh}
	}\vspace{-2ex}
	
	\subfigure[][{\scriptsize Running Time (\textit{Cainiao})}]{
		\scalebox{0.158}[0.14]{\includegraphics{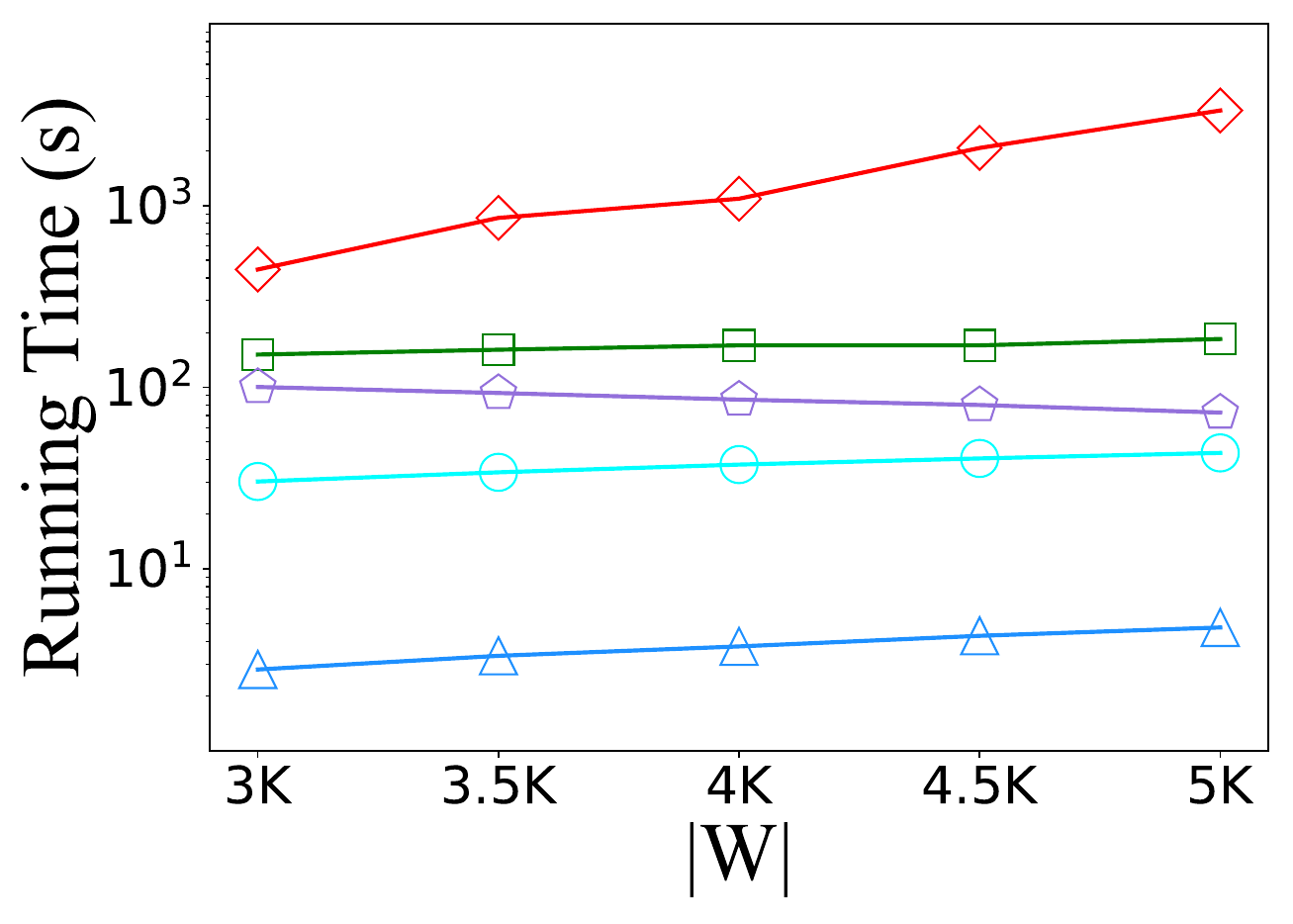}}
		\label{subfig:tm_varing_worker_sh}
	}\hspace{-2ex}
	\subfigure[][{\scriptsize Running Time (\textit{Cainiao})}]{
		\scalebox{0.158}[0.14]{\includegraphics{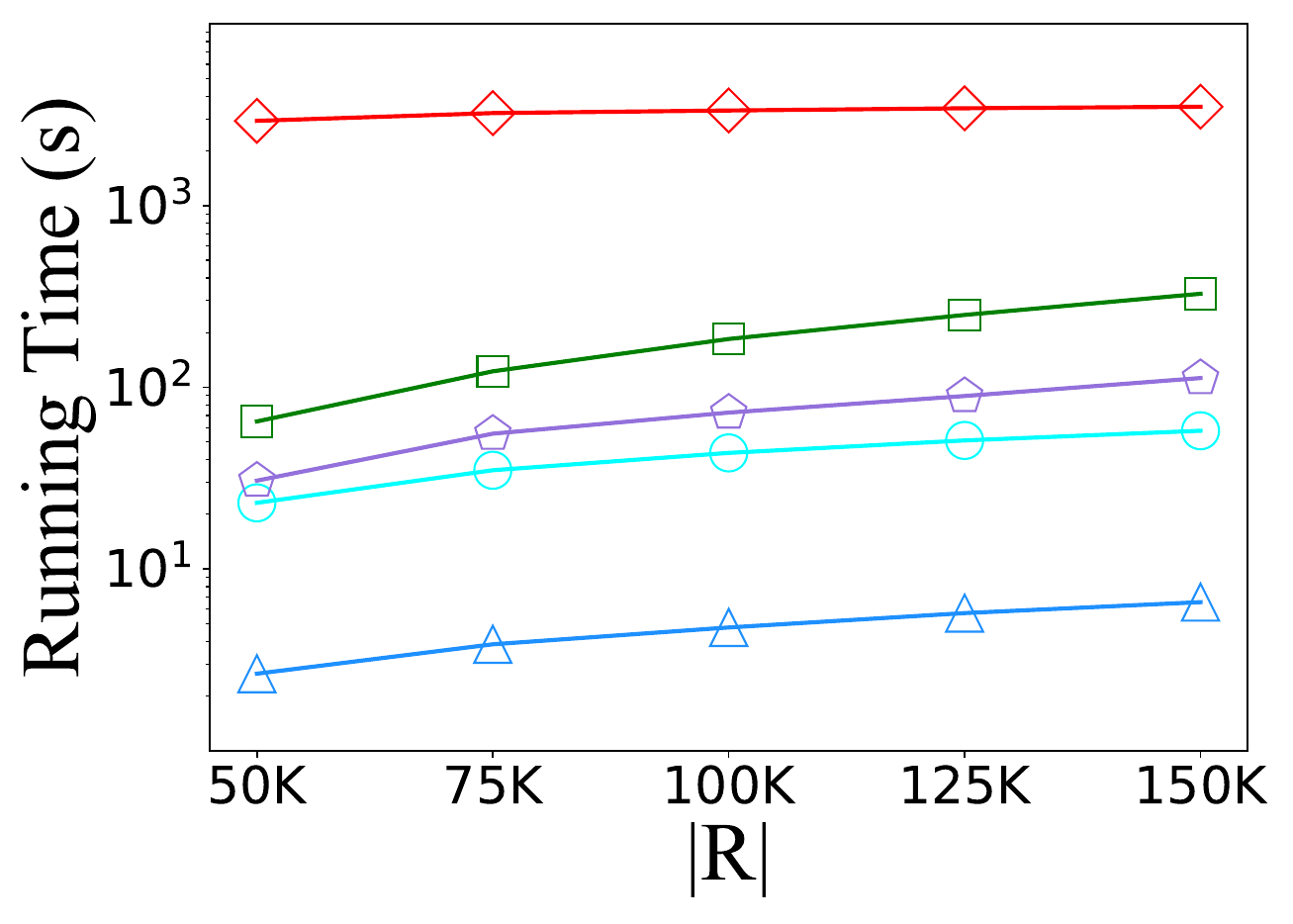}}
		\label{subfig:tm_varing_req_sh}
	} \hspace{-2ex}
	\subfigure[][{\scriptsize Running Time (\textit{Cainiao})}]{
		\scalebox{0.158}[0.14]{\includegraphics{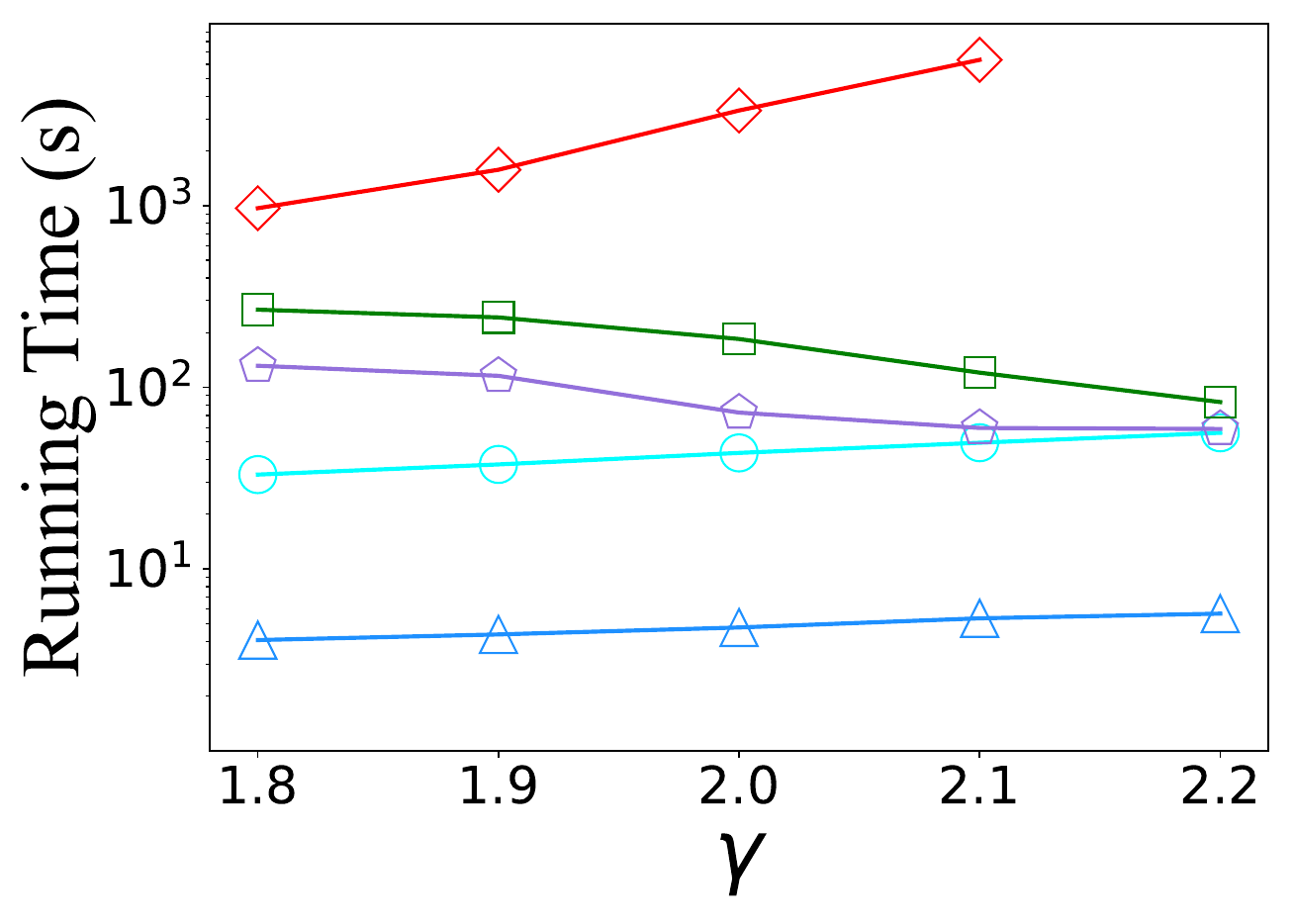}}
		\label{subfig:tm_varing_ddl_sh}
	}\hspace{-2ex}
	\subfigure[][{\scriptsize Running Time (\textit{Cainiao})}]{
		\scalebox{0.158}[0.14]{\includegraphics{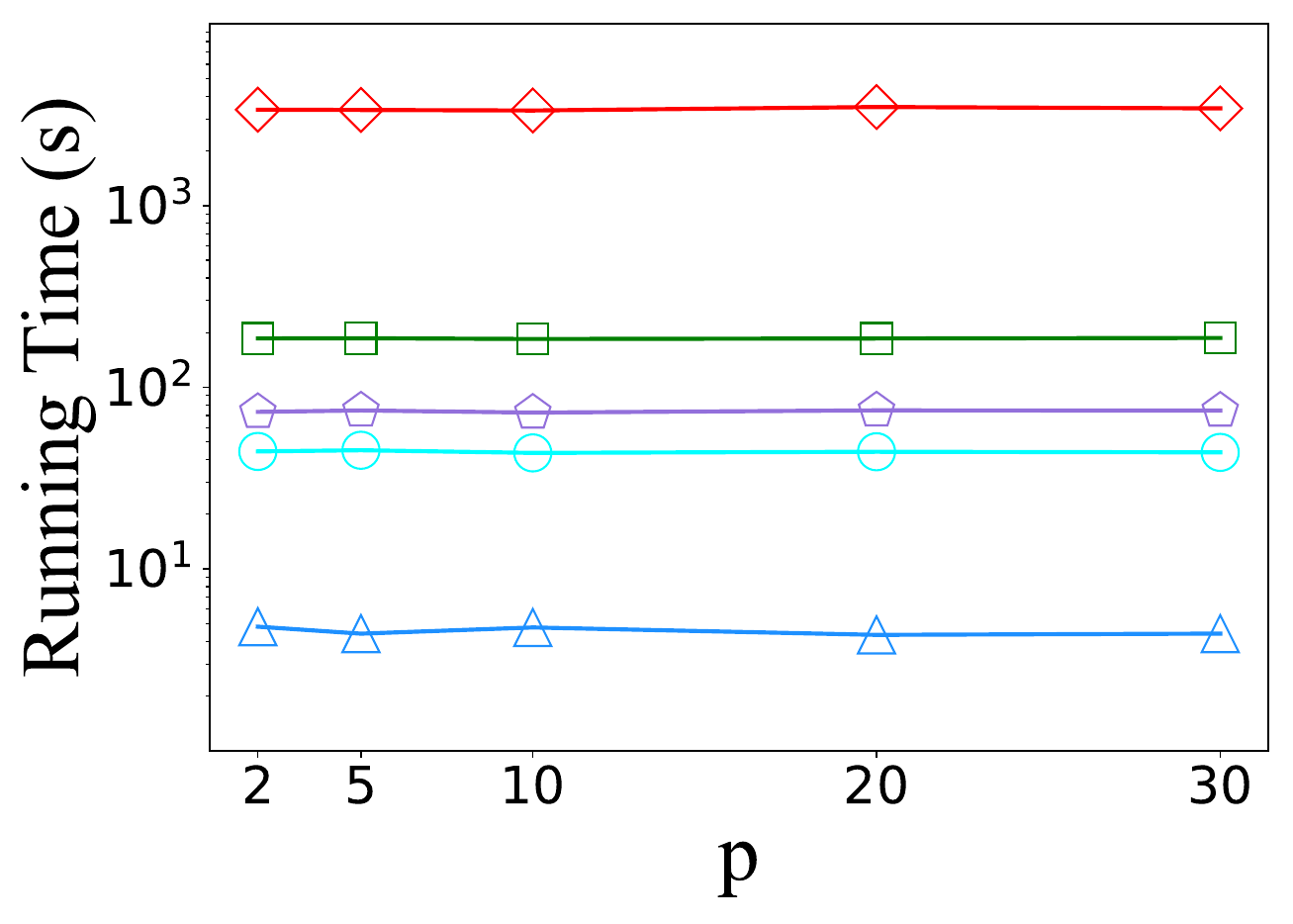}}
		\label{subfig:tm_varing_pen_sh}
	}\hspace{-2ex}
	\subfigure[][{\scriptsize Running Time (\textit{Cainiao})}]{
		\scalebox{0.158}[0.14]{\includegraphics{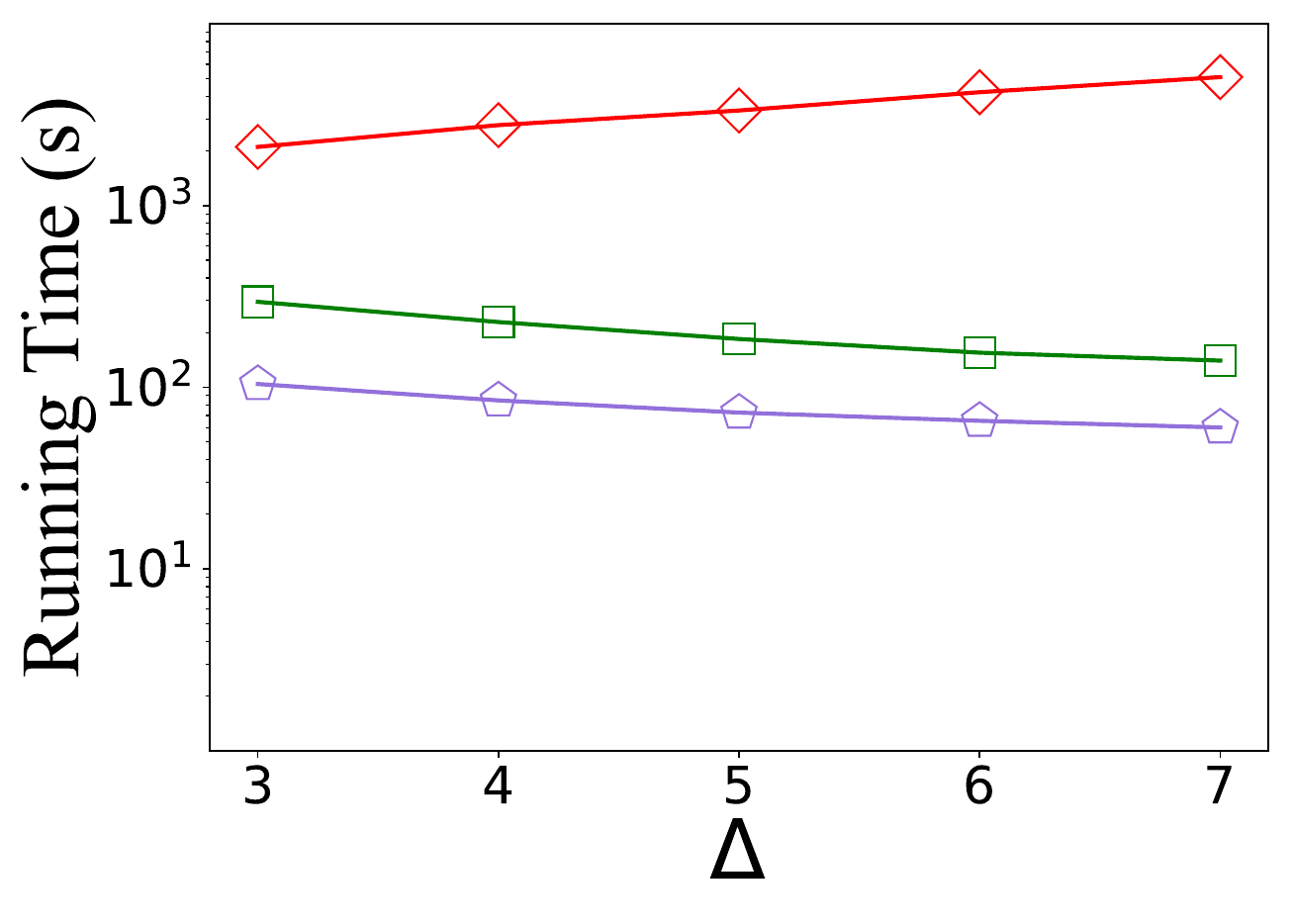}}
		\label{subfig:tm_varing_batch_sh}
	}
	
	\caption{Performance of Cainiao Dataset when Varying $|W|$, $|R|$, $\gamma$, $p_r$ and $\Delta$.}
	\label{fig:sh_result}
\end{figure*}

\noindent\textbf{Effect of the number of vehicles.}
The first column in Figure~\ref{fig:sh_result} shows results as the number of vehicles varies from 3K to 5K.
For unified cost, {SARD} and {GAS} outperform other methods, with {SARD} achieving a $14.16\%\sim 51.85\%$ improvement over other methods on the \textit{Cainiao} dataset.
Our method also demonstrates up to a $46.16\%$ improvement in service rate on the \textit{Cainiao} dataset compared to existing methods.
Compared to pruneGDP, TicketAssign+ enhances service rates through concurrent decision-making processes, effectively mitigating the local optima challenges typically associated with greedy algorithms.
All algorithms exhibit trends in unified cost that correspond consistently with their service rates.
In terms of running time, {pruneGDP} excels as a result of its efficient linear insertion, while {TicketAssign+} experiences increased running time because of vehicle contention.
RTV, GAS, and SARD are slower than pruneGDP.
Notably, {SARD} achieves up to a $46.01\times$ speedup compared to RTV and GAS on the \textit{Cainiao} dataset.
{SARD}'s running time exhibits an inverse correlation with the number of vehicles.
This enhancement in efficiency can be attributed to a reduction in propose-acceptance iterations.
As the number of available vehicles ($m$) increases, competition among riders for individual vehicles decreases, resulting in fewer necessary proposal rounds. 

\noindent\textbf{Effect of the number of requests.}
The second column in Figure~\ref{fig:sh_result} illustrates that as requests increase from 50K to 150K, the unified costs of all algorithms grow. 
{SARD} and {GAS} achieve lower unified costs compared to other methods as the number of requests increases. Regarding service rate, {SARD} outperforms {pruneGDP} by up to $41.59\%$. 
Moreover, {SARD} achieves up to $16.44\%$ higher service rates than the state-of-the-art batch-based method {GAS} on the \textit{Cainiao} dataset. 
In terms of running time, insertion-based methods prove faster. 
Among batch-based methods (SARD, RTV, and GAS), {SARD} demonstrates $2.91\times\sim 31.15\times$ faster performance than {GAS} and {RTV}.

\noindent\textbf{Effect of deadline.}
The third column in Figure~\ref{fig:sh_result} illustrates the impact of varying request deadlines by adjusting the deadline parameter $\gamma$ from $1.8$ to $2.2$.
With a strict deadline of $\gamma=1.8$, {SARD}'s service rate is comparable to existing algorithms. 
This stems from a significant reduction in candidate vehicles for each request, making it difficult to achieve notable performance improvements through batch mode grouping strategies.
However, as the deadline extends, the superiority of {SARD} and {GAS} becomes increasingly evident. 
At a deadline of $2.1\times$, {SARD}'s service rate surpasses $80\%$, up to $54.60\%$ higher than other algorithms on the \textit{Cainiao} dataset.
{RTV} results for $\gamma=2.2$ on the \textit{Cainiao} dataset are omitted due to {RTV-Graph} constraints exceeding the \textit{glpk}~\cite{GLPK} limit. 
{SARD} achieves the best unified cost, saving up to $59.12\%$ compared to others on the \textit{Cainiao} dataset.
Regarding running time, SARD is $1.41\times$ to $106.19\times$ faster than {RTV} and {GAS}.

\noindent\textbf{Effect of penalty.}
The fourth column in Figure~\ref{fig:sh_result} illustrates the effect of varying the penalty coefficient from 2 to 30.
Most methods' service rates remain unaffected by this change.
{pruneGDP}, {TicketAssign+}, {GAS}, and {SARD} incorporate distance, group profit, and shareability loss as indicators in their greedy assignment strategies. Consequently, the penalty coefficient only influences their unified cost scores.
In contrast, {RTV} integrates the penalty coefficient into its linear programming (LP) constraint matrix. This affects LP results solely when the penalty is small; for larger penalties, the impact becomes negligible.
The unified cost for all algorithms shows a proportional relationship to the penalty coefficient.
{SARD} outperforms in \textit{Cainiao} datasets, achieving a service rate increase of $15.10\%\sim 46.16\%$.
As the penalty coefficient doesn't influence the assignment phase, execution times remain consistent across datasets.

\begin{figure*}[t!]\centering
	\begin{tabularx}{\textwidth}{XX}
		\begin{minipage}[t]{.5\textwidth}
			\subfigure{
				\scalebox{0.5}[0.5]{\includegraphics{legend_cainiao-eps-converted-to.pdf}}}\hfill
			\addtocounter{subfigure}{-1}\\[-3ex]
			\subfigure[][{\scriptsize Unified Cost (\textit{Cainiao})}]{
				\raisebox{-1ex}{\scalebox{0.19}[0.17]{\includegraphics{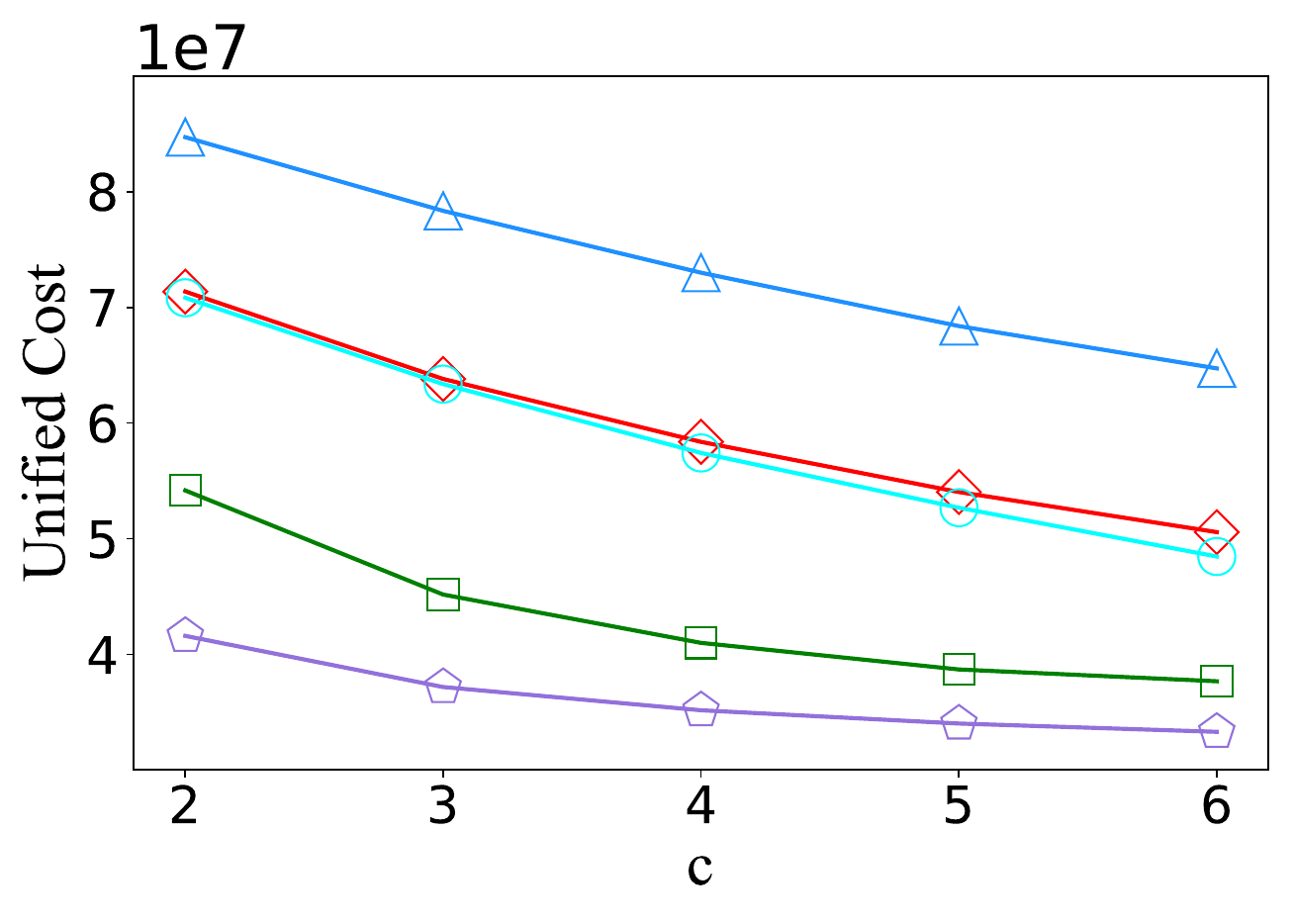}}}
				\label{subfig:uc_varing_cap_sh}}\hspace{-2ex}
			\subfigure[][{\scriptsize Unified Cost (\textit{Cainiao})}]{
				\raisebox{-1ex}{\scalebox{0.19}[0.17]{\includegraphics{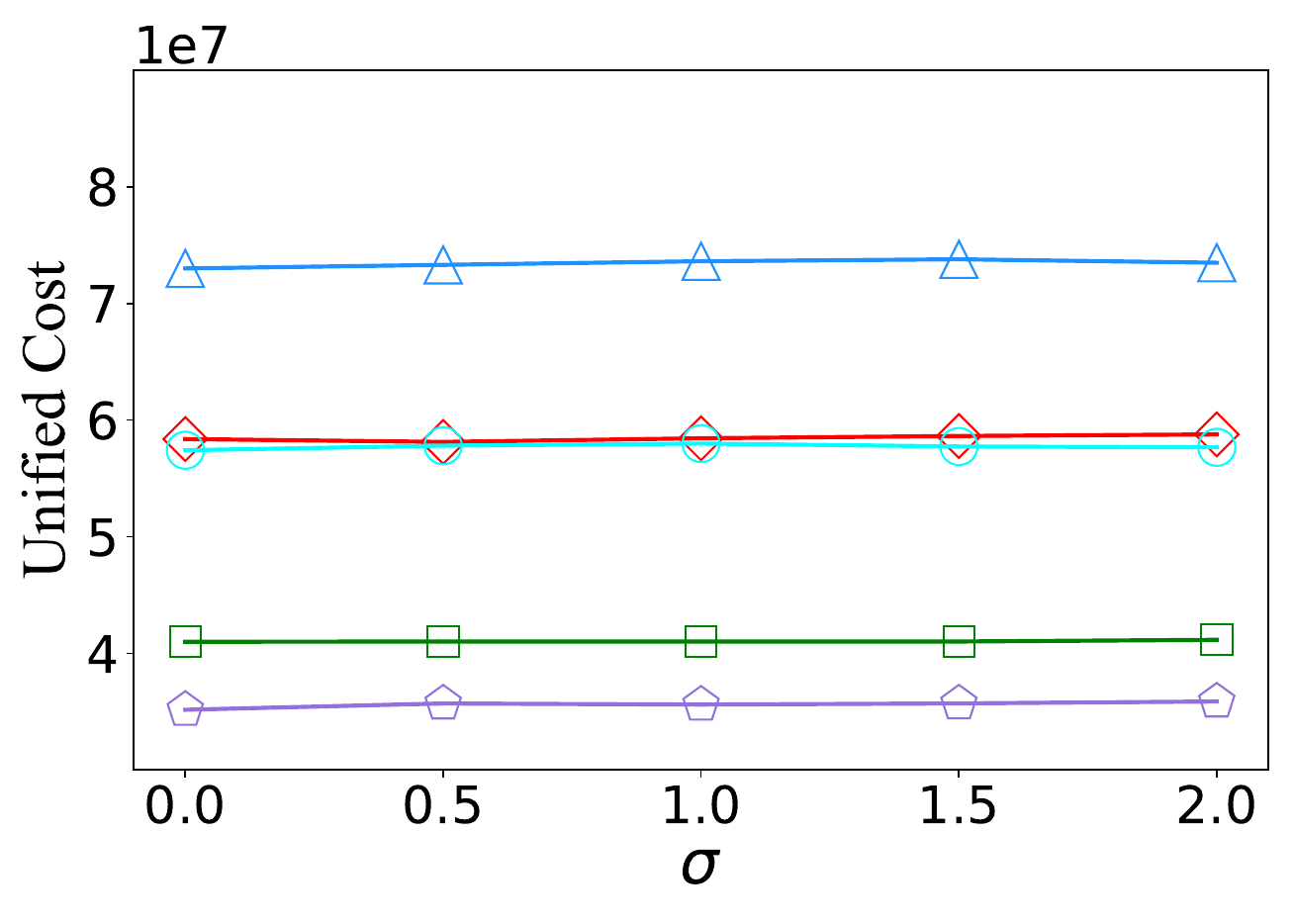}}}
				\label{subfig:uc_varing_var_sh}}\\[-2ex]
			\subfigure[][{\scriptsize Service Rate (\textit{Cainiao})}]{
				\raisebox{-1ex}{\scalebox{0.19}[0.17]{\includegraphics{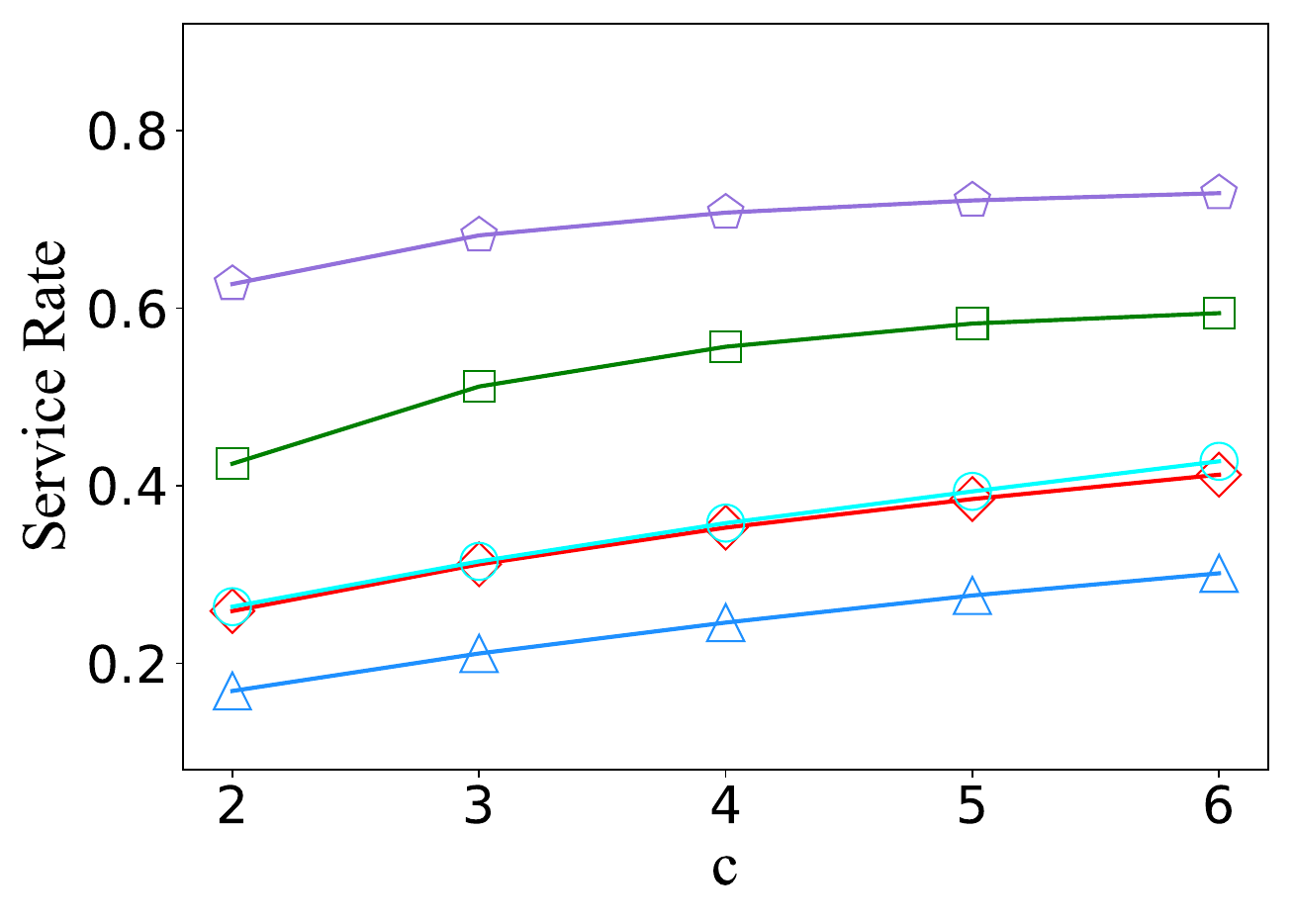}}}
				\label{subfig:sr_varing_cap_sh}}\hspace{-2ex}
			\subfigure[][{\scriptsize Service Rate (\textit{Cainiao})}]{
				\raisebox{-1ex}{\scalebox{0.19}[0.17]{\includegraphics{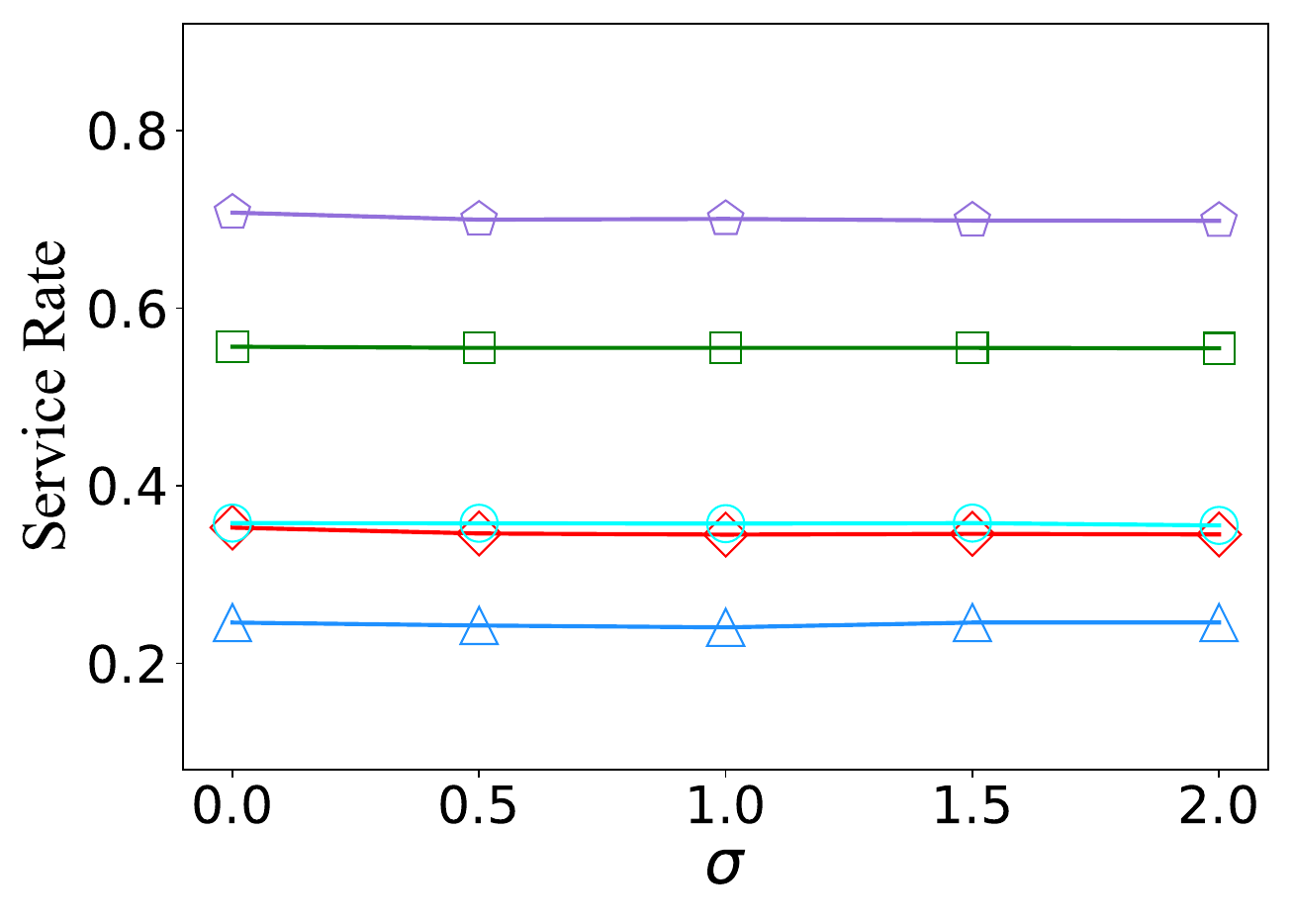}}}
				\label{subfig:sr_varing_var_sh}}\\[-2ex]
			\subfigure[][{\scriptsize Running Time (\textit{Cainiao})}]{
				\raisebox{-1ex}{\scalebox{0.19}[0.17]{\includegraphics{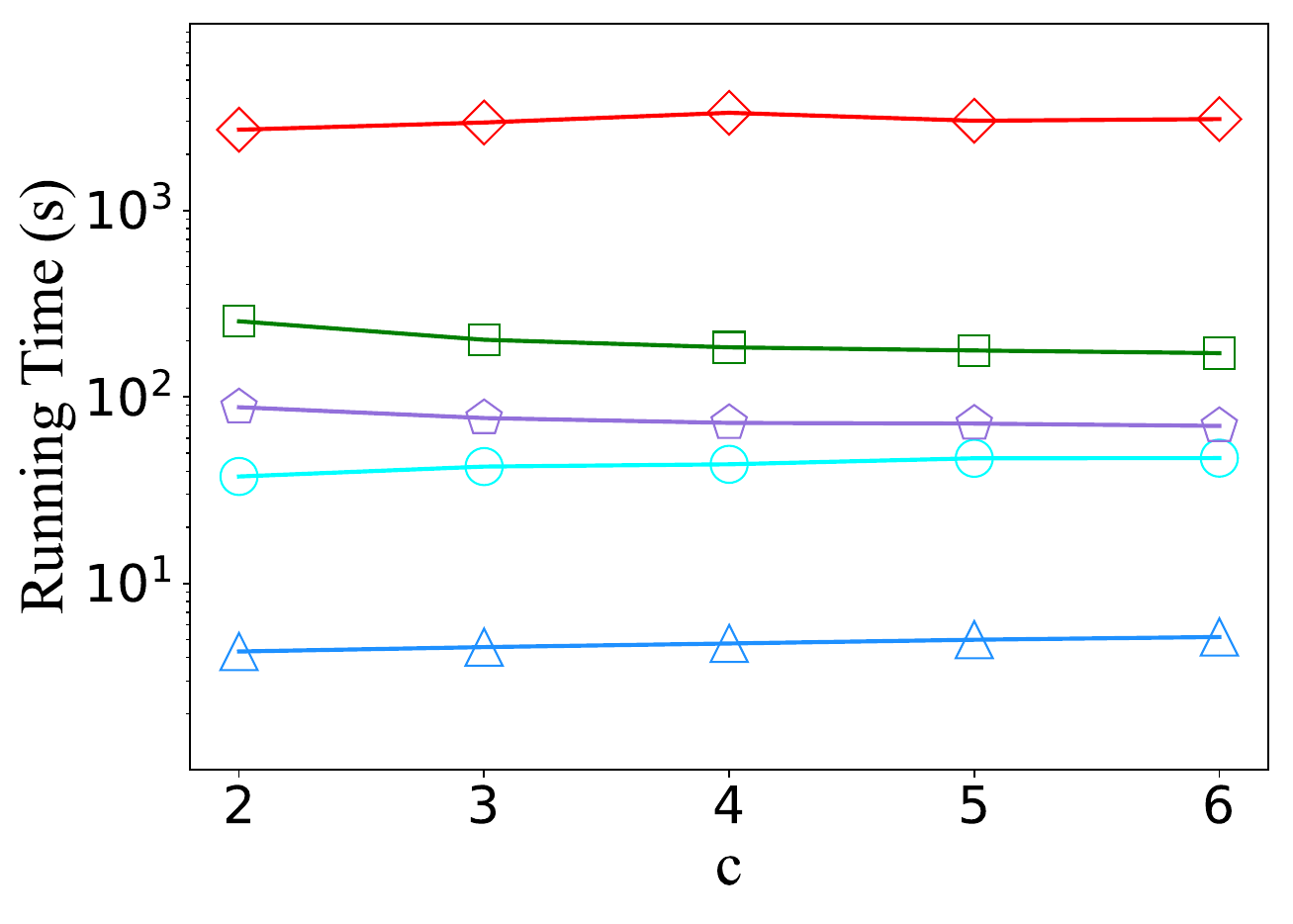}}}
				\label{subfig:tm_varing_cap_sh}}\hspace{-2ex}
			\subfigure[][{\scriptsize Running Time (\textit{Cainiao})}]{
				\raisebox{-1ex}{\scalebox{0.19}[0.17]{\includegraphics{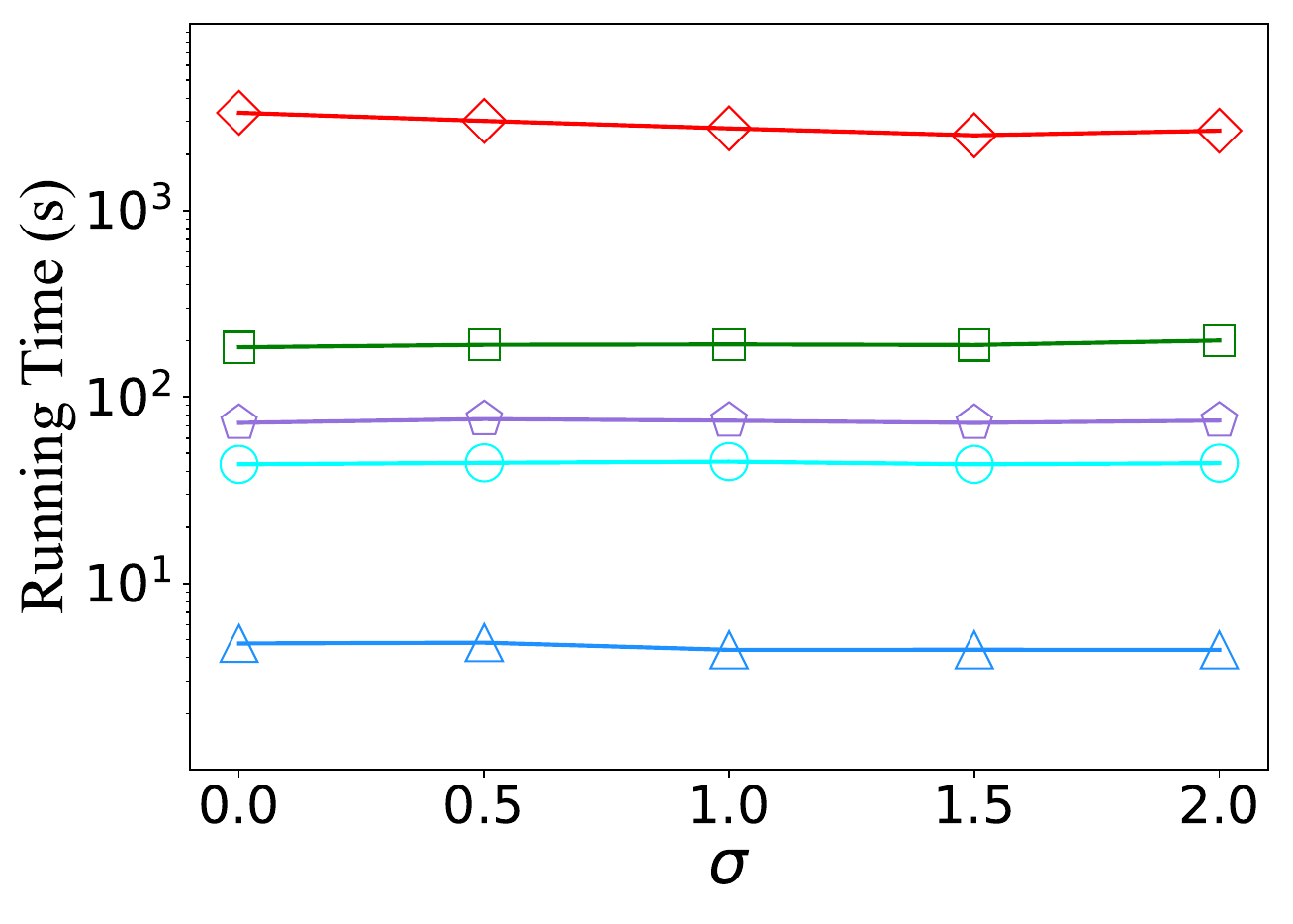}}}
				\label{subfig:tm_varing_var_sh}}
			\caption{ Performance of Varying $c$ and $\sigma$.}
			\label{fig:vary_cap}
		\end{minipage} &
		\begin{minipage}[t]{.5\textwidth}
			\subfigure{
				\scalebox{0.45}[0.45]{\includegraphics{legend-eps-converted-to.pdf}}}\hfill
			\addtocounter{subfigure}{-1}\\[-3ex]
			\subfigure[][{\scriptsize Unified Cost (\textit{CHD})}]{
				\raisebox{-1ex}{\scalebox{0.19}[0.17]{\includegraphics{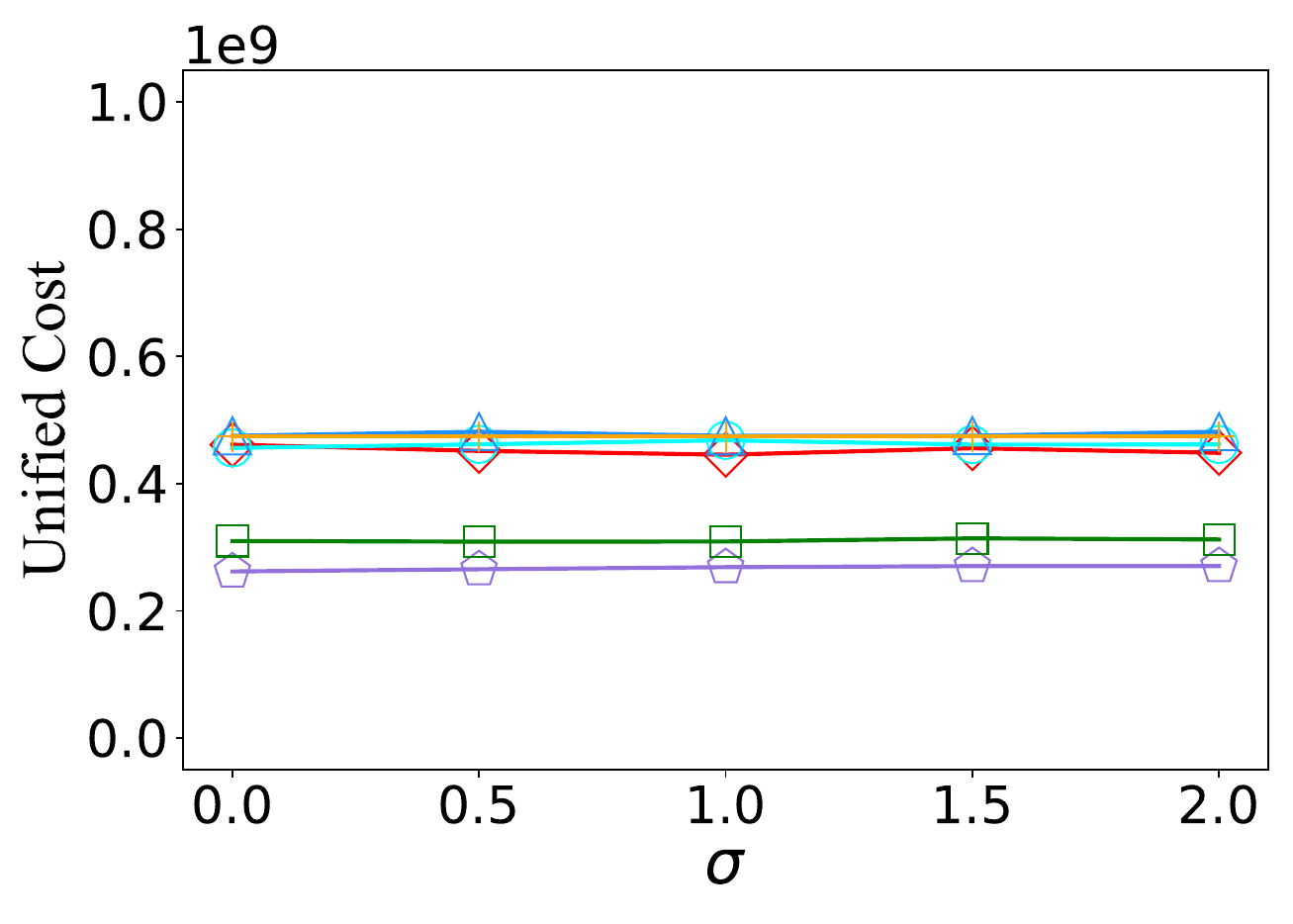}}}
				\label{subfig:uc_varing_var_cd}}\hspace{-2ex}
			\subfigure[][{\scriptsize Unified Cost (\textit{NYC})}]{
				\raisebox{-1ex}{\scalebox{0.19}[0.17]{\includegraphics{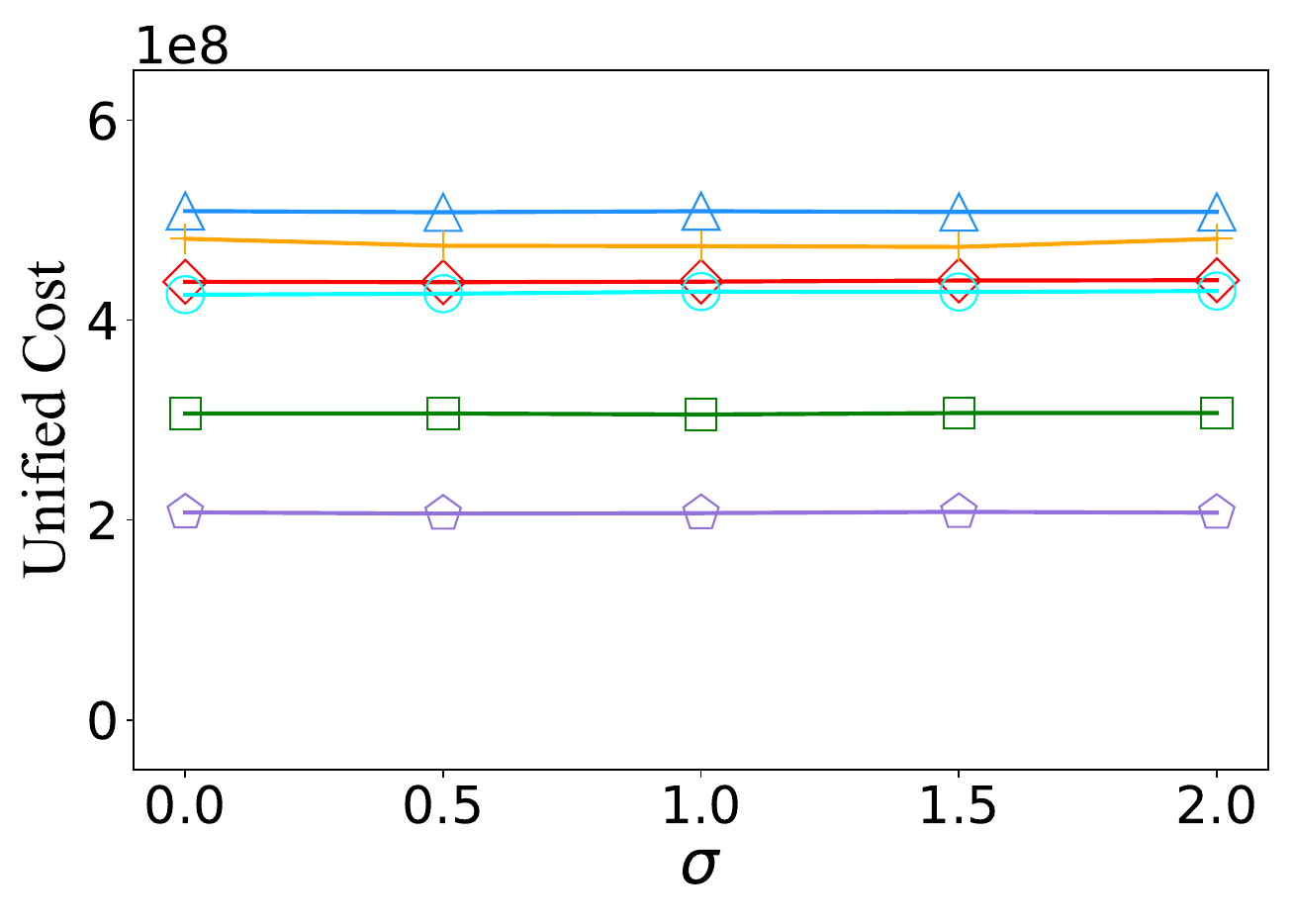}}}
				\label{subfig:uc_varing_var_nyc}}\\[-2ex]
			\subfigure[][{\scriptsize Service Rate (\textit{CHD})}]{
				\raisebox{-1ex}{\scalebox{0.19}[0.17]{\includegraphics{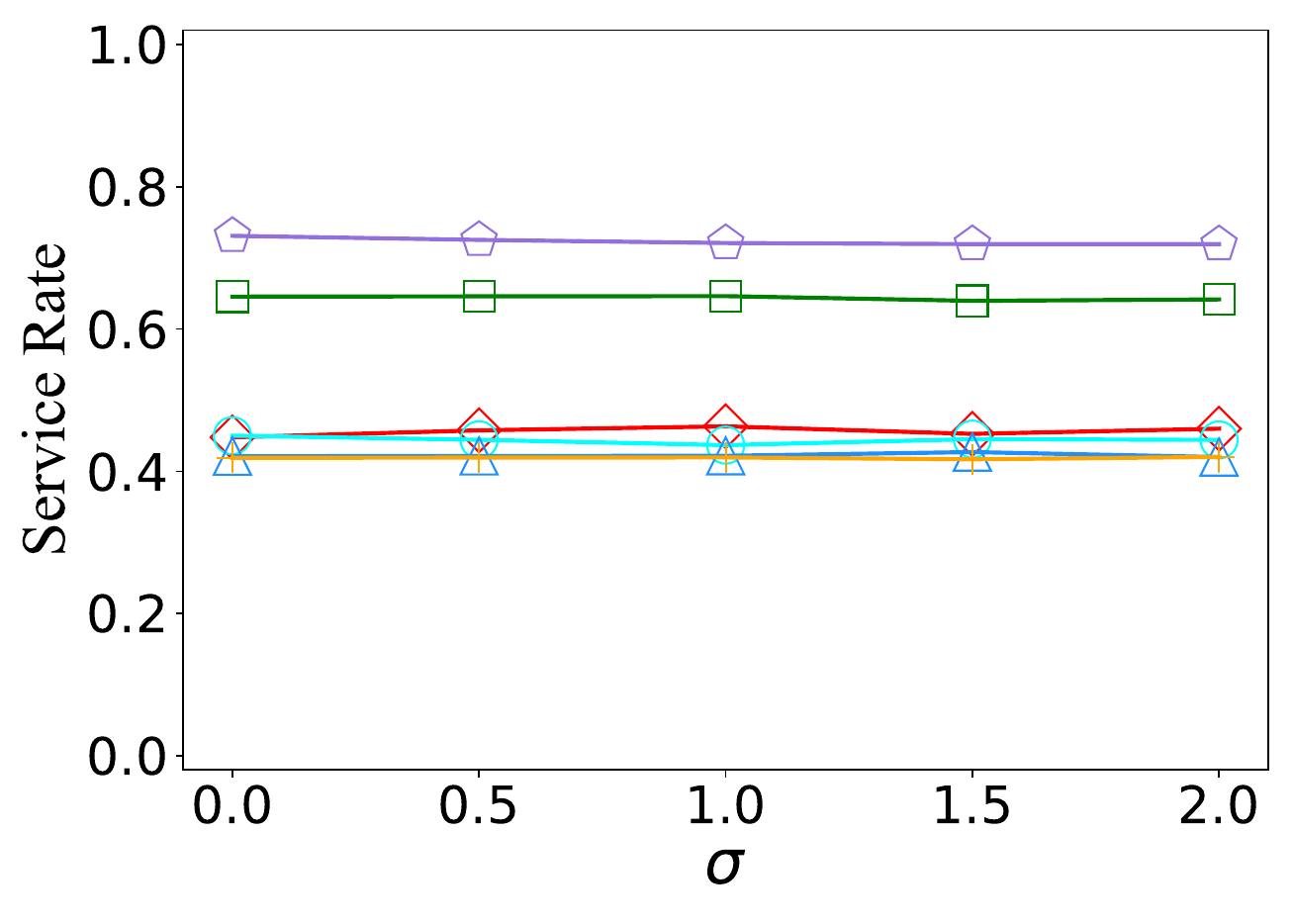}}}
				\label{subfig:sr_varing_var_cd}}\hspace{-2ex}
			\subfigure[][{\scriptsize Service Rate (\textit{NYC})}]{
				\raisebox{-1ex}{\scalebox{0.19}[0.17]{\includegraphics{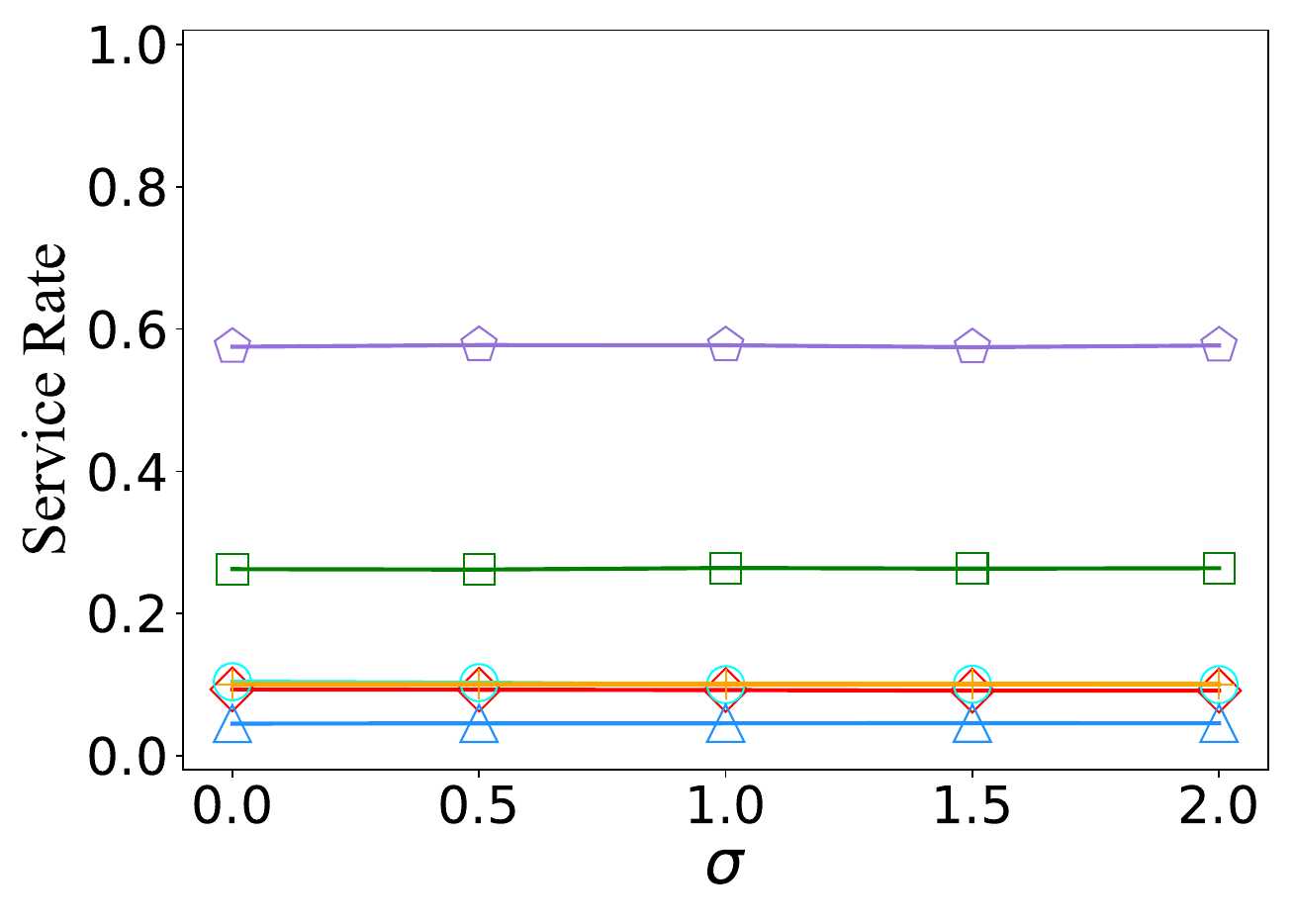}}}
				\label{subfig:sr_varing_var_nyc}}\\[-2ex]
			\subfigure[][{\scriptsize Running Time (\textit{CHD})}]{
				\raisebox{-1ex}{\scalebox{0.19}[0.17]{\includegraphics{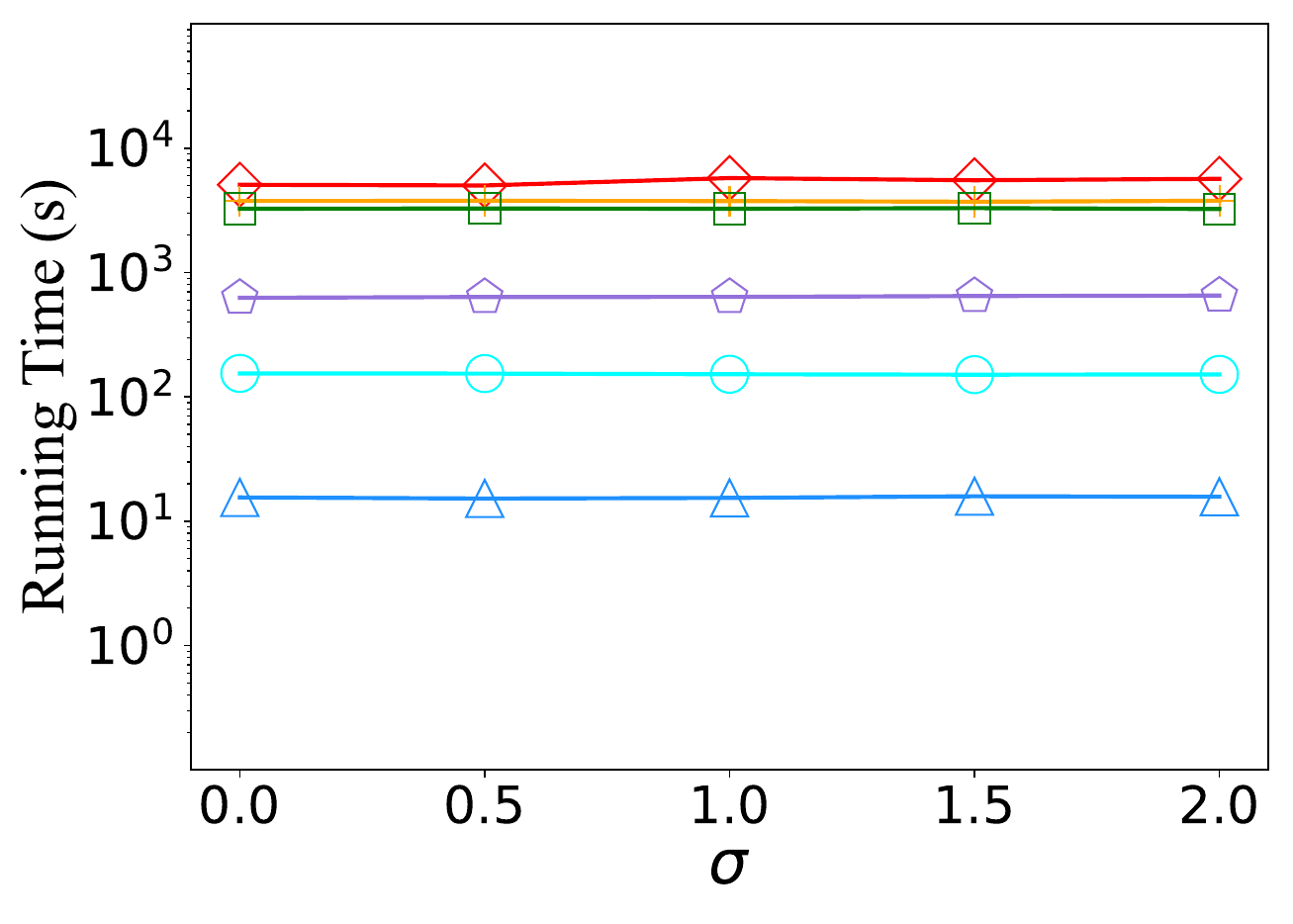}}}
				\label{subfig:tm_varing_var_cd}}\hspace{-2ex}
			\subfigure[][{\scriptsize Running Time (\textit{NYC})}]{
				\raisebox{-1ex}{\scalebox{0.19}[0.17]{\includegraphics{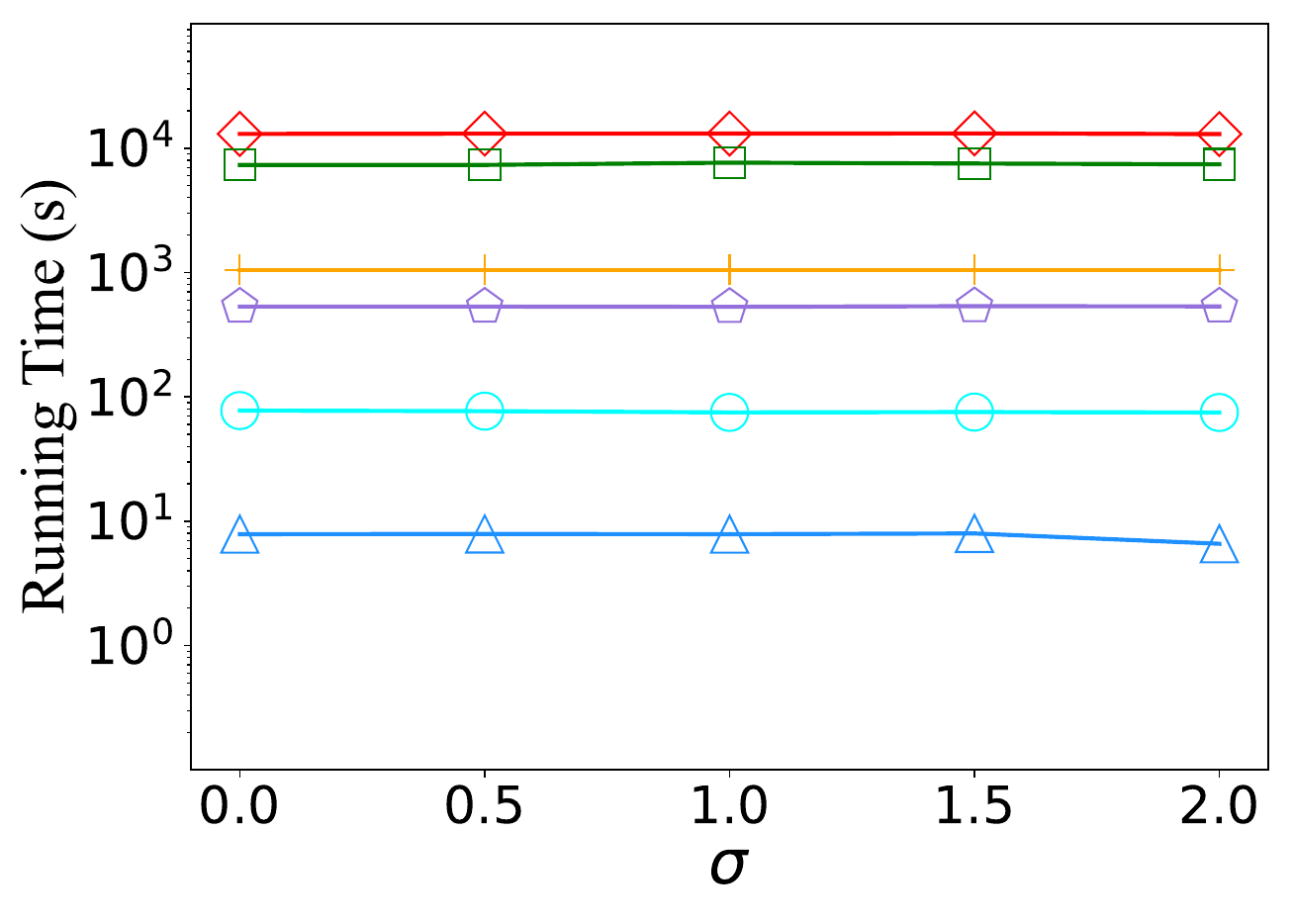}}}
				\label{subfig:tm_varing_var_nyc}}
			\caption{ Performance of Varying $\sigma$.}
			\label{fig:vary_var}
		\end{minipage}
	\end{tabularx}
\end{figure*}

\noindent\textbf{Effect of the batching period.}
The final column in Figure~\ref{fig:sh_result} illustrates the impact of varying the batching period from 3 to 7 seconds for batch-based methods.
This variation in batch time influences the unified cost and service rate of batch-based algorithms.
{SARD} demonstrates superior performance across varying batching periods, achieving up to $39.48\%$ reduction in unified cost and a $38.1\%$ increase in service rate on the \textit{Cainiao} dataset. 
Regarding execution time, as the duration of each batch increases, the number of execution rounds decreases, resulting in reduced running times for most methods. 
However, {RTV} experiences increased time costs due to a higher number of matches. 
Notably, {SARD} exhibits $2.33\times\sim 84.82\times$ faster performance compared to {RTV} and {GAS}.

\noindent\textbf{Effects of Angle pruning strategy.}
Table~\ref{tab:sssp_pruning_sh} shows the effect of the Angle pruning strategies proposed in Section~\ref{subsec:sn_gen} (parameters are in default values in Table \ref{tab:cainiao_settings}).
We note the method without pruning strategies as \textit{SARD}, with the Angle pruning as \textit{SARD-O}. 
SARD-O saves up to 41.9\% of the shortest path queries and saves 33.9\% of the total running time on the \textit{Cainiao} dataset compared with {SARD}. 
Besides, it has almost no harm on the service rate and unified cost.
\begin{table}[h!]
	\begin{center}
		{\scriptsize
			\caption{ Performance of Pruning Strategies.} 
			\label{tab:sssp_pruning_sh}
			\begin{tabular}{l|l|l|l|l|l}
				{\bf City} & {\bf Method}& \makecell{\bf Unified \\ \bf Cost}& \makecell{\bf Service \\ \bf Rate}  & \makecell{\bf \#Shortest \\ \bf Path Queries} & \makecell{\bf Time  \\ (s) } \\ \hline \hline
				\multirow{3}{*}{Cainiao} & \textit{SARD} & 351,22K & 70.84\% & 782,827K & 109.870  \\
				& \textit{SARD-O} & 351,58K & 70.75\%  & 454,887K & 72.543
				\\\hline
			\end{tabular}
		}
	\end{center}
\end{table}

\noindent\textbf{Memory consumption.}
Figure~\ref{fig:memory_usage_cainiao} illustrates the memory usage of tested traditional algorithms under default parameters.
Online-based algorithms, following a first-come-first-serve approach, consume less memory. However, parallelization incurs additional overhead due to the need to maintain a lock for each worker.
In contrast, batch-based algorithms require extra storage to hold request combinations for each batch. For instance, RTV employs an RTV-Graph, GAS uses an additive index, and SARD utilizes a shareability graph.
RTV's reliance on an integer linear program results in memory usage more than 20 times that of GAS and SARD.
The latter two algorithms use similar memory for their shareability graphs.
\begin{figure}[h!]\centering
	\subfigure{
		\scalebox{0.32}[0.32]{\includegraphics{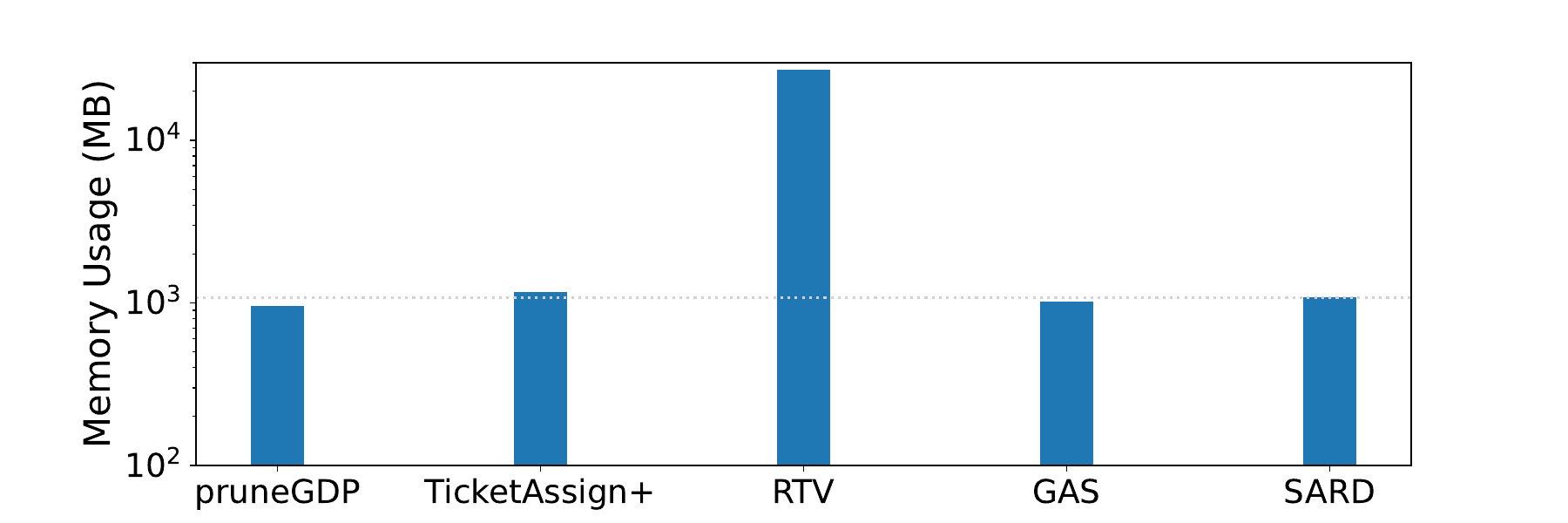}}}
	\caption{\textit{Cainiao} Memory Consumption.}
	\label{fig:memory_usage_cainiao}
\end{figure}

\noindent\textbf{Effect of vehicle's capacity constraint.}
The first column in Figure~\ref{fig:vary_cap} presents the outcomes of adjusting vehicle capacity from 2 to 6.
{SARD} and {GAS} demonstrate superior performance, achieving a minimum of $22.26\%$ reduction in unified cost compared to alternative algorithms.
All evaluated algorithms exhibited enhanced service quality due to expanded sharing opportunities.
In terms of service rate, {SARD} outperforms all other tested algorithms, achieving a $13.51\%\sim 42.80\%$ higher service rate.
Regarding execution time, {SARD} emerges as the most efficient among batch methods (RTV, GAS, SARD), executing $2.45\times\sim 44.19\times$ faster than {RTV} and {GAS}.

\subsection{Experimental Study on Vehicle Capacity Distribution}
To evaluate the efficacy of ride-sharing  under scenarios where different vehicles can have different capacities, we set the variance parameter $\sigma$ as shown in Table~\ref{tab:cainiao_settings}.
This parameter generates diverse vehicle capacity distributions adhering to a normal distribution with a mean of 4 and varying variances. 
Our previous default configuration is considered to have a variance of 0.
We conduct tests on our \textit{Cainiao}, \textit{NYC}, and \textit{CHD} datasets, as shown in Figure~\ref{fig:vary_cap} and \ref{fig:vary_var}. 
All algorithms remain stable across the three metrics, indicating that the distribution of different vehicle capacities had a negligible impact on ride-sharing quality.

\subsection{Effects of Angle pruning strategy}

Table~\ref{tab:sssp_pruning} shows the effect of the Angle pruning strategies proposed in Section~\ref{subsec:sn_gen} (parameters are in default values in Table \ref{tab:settings}).
We note the method without pruning strategies as \textit{SARD}, with the Angle pruning as \textit{SARD-O}. 
SARD-O saves up to 7.3\% of the shortest path queries and saves 5.2\% of the total running time on the two datasets compared with {SARD}. 
Besides, it has almost no harm on the service rate and unified cost.

\begin{table}[h!]
	\begin{center}
		{\scriptsize
			\caption{ Performance of Pruning Strategies.} 
			\label{tab:sssp_pruning}
			\begin{tabular}{l|l|l|l|l|l}
				{\bf City} & {\bf Method}& \makecell{\bf Unified \\ \bf Cost}& \makecell{\bf Service \\ \bf Rate}  & \makecell{\bf \#Shortest \\ \bf Path Queries} & \makecell{\bf Time  \\ (s) } \\ \hline \hline
				\multirow{3}{*}{CHD} & \textit{SARD} & 263,682K & 72.82\% & 2,971,425K & 699.8  \\
				& \textit{SARD-O} & 260,641K & 73.34\%  & 2,839,734K & 662.0
				\\\hline
				\multirow{3}{*}{NYC} & \textit{SARD} & 206,754K & 57.64\% & 1,373,047K  & 770.6  \\
				& \textit{SARD-O} & 206,470K & 57.83\%  & 1,270,116K & 750.6 
				\\\hline
			\end{tabular}
		}
	\end{center}
\end{table}

\end{document}